\let\oldbibliography\thebibliography
\renewcommand{\thebibliography}[1]{%
  \oldbibliography{#1}%
  \setlength{\itemsep}{0pt}%
}
\newcommand{\ind}{1{\hskip -2.5 pt}\hbox{I}}
\definecolor{darkred}{RGB}{100,0,0}
\definecolor{darkgreen}{RGB}{0,100,0}
\definecolor{darkblue}{RGB}{0,0,150}
\newtheorem{theorem}{Theorem}
\newtheorem{definition}{Definition}
\newtheorem{lemma}{Lemma}
\newtheorem{proposition}{Proposition}
\newtheorem{remark}{Remark}
\newtheorem{corollary}{Corollary}
\providecommand{\scalT}[2]{\left\langle{#1},{#2}\right\rangle}
\def\argmax{\operatornamewithlimits{arg\,max}}
\title{\textbf{\large{ Robust Phase Retrieval and Super-Resolution from One Bit Coded Diffraction Patterns. }}}
\author{\normalsize{\textsc{
Youssef Mroueh$^\dagger,\star$ }}\\
\small \em $\dagger$ CBCL, McGovern Institute,\&  CSAIL, MIT, USA.\\
\small \em $\star$  LCSL, Massachussetts Institute of Technology and Istituto Italiano di Tecnologia. \\
\\
{\small \tt  ymroueh@mit.edu}}
\date{\today}
\begin{document}

\maketitle

\begin{abstract}

In this paper we study a  realistic setup for phase retrieval, where the signal of interest is modulated or masked and then for each modulation or mask a diffraction pattern is collected, producing a coded diffraction pattern (CDP) \cite{FMasks}. We are interested in the setup where the resolution of the collected CDP  is limited by the Fraunhofer diffraction limit of the imaging system.
 We   investigate a novel approach based on a geometric quantization scheme of phase-less linear measurements into (one-bit) coded diffraction patterns, and  a  corresponding recovery scheme.
The key  novelty in this  approach consists in  comparing pairs of coded diffractions patterns across frequencies: the one bit measurements obtained rely on the order statistics of the un-quantized measurements rather than their values .
This results in a robust phase recovery, and  unlike currently available methods, allows to efficiently 
perform phase recovery from  measurements affected by  severe (possibly unknown) non linear, rank preserving perturbations, such as distortions. 
Another important feature of this  approach consists in the fact that it  enables also super-resolution and blind-deconvolution, beyond the diffraction limit of a given imaging system.

\end{abstract}

\section{Introduction}
\subsection{The Phase Retrieval Problem and the Diffraction Limit}
The problem of phase retrieval is ubiquitous in many areas of imaging science and engineering, where we are able  to measure only  magnitude of measurements. 
The phase recovery problem can be modeled as the problem of  reconstructing a $n$-dimensional  complex  vector $x_0$ given 
only the magnitude of $m$ phase-less linear measurements. Such  a problem arises for example in X-ray crystallography \cite{Harrison93,Liu08}, diffraction imaging \cite{Bunk07,Rodenburg} or microscopy \cite{Miao08},  where one can only measure the intensities of the incoming waves, and  wishes to recover the lost phase in order to be able to reconstruct the desired object. 
Formally speaking  for a given vector $x_0\in \mathbb{C}^n$ (without loss of generality we assume $n$ to be even), we wish to measure $\scalT{a_k}{x}$, but the only available information is of the form:
\begin{equation}
b_k=\theta(|\scalT{a_k}{x}|^2), k=1\dots m, 
\label{eq:general}
\end{equation}
where $a_k$ is a set of sampling vector in $\mathbb{C}^n$, and $\theta$ models possibly unknown non linear perturbations of the values: distortion and exponential noise for instance.\\

Recovering signals from the modulus of their  Fourier transform is at  the core of  the phase retrieval problem. For instance in coherent X-ray crystallography \cite{Liu08}, speckle imaging in astronomy \cite{speckle}, in microscopy \cite{SR}  or more broadly in Fourier optics it follows 
from the Fraunhofer diffraction principle that the optical field at the detector can be approximated by the Fourier transform of the sensed object.
Since light detectors can measure only intensities of the incoming waves the problem is therefore to recover the discrete signal $x_0\in \mathbb{C}^n$ from measurements of the type:
\begin{equation}
b_{k}=\theta\left(\left| \sum_{j=1}^{n}x_{0}[j]e^{-i{2\pi (j-1)\frac{(k-1)}{n}}}\right|^2\right),\quad k\in \Omega,\quad \Omega \subseteq [1,n] 
\label{eq:fourier}
\end{equation}
where $\Omega$ represents  a set of sampled frequencies, and $\theta$ a possibly unknown non-linearity.
When compared to \eqref{eq:general} we note that in \eqref{eq:fourier} $a_{k}$ correspond to a set of sampled complex sinusoids.    
When $\Omega=[1,n]$, we have the full knowledge of the modulus of the Fourier series decomposition  of the signal of interest on whole ranges of frequencies.
In practice due to the Fraunhofer diffraction limit we are able to measure intensities of the Fourier transform within a range of frequencies below the so called  cut-off frequency $f_c$. Hence the information we have available about $x_0$ is a sample of the lower end of its power spectra in the form of the lowest $2f_c + 1$ modulus of the Fourier series coefficients ($f_c$ is an integer). \\
For instance in microscopy imaging  with coherent illumination the object of interest is diffracted through a lens. The lens is characterized by its Point Spread Function (PSF) $h$, and the cut-off frequency 
 $f_c=\frac{2\pi NA}{\nu}$, where $NA$ is the numerical aperture of the lens and $\nu$ is the wavelength of the illumination light. Let $\hat{h}, \hat{x}_0$ be the Fourier transform of the PSF and the signal $x_0$ respectively, for continuous signals we know that:
 $$\hat{h}(w)=0 \quad \text{ for } |w|>f_c.$$ 
Hence we measure  the modulus of the Fourier transform  of $x_0\star h$ rather than $x_0$, where $\star$ denotes the discrete convolution operation.
If we set $\theta=\theta_{h}$, where 
 $$\theta_{h}\left(\left| \sum_{j=1}^{n}x_{0}[j]e^{-i{2\pi (j-1)\frac{(k-1)}{n}}}\right|^2\right) := \left| \sum_{j=1}^{n}(x_0\star h)[j]e^{-i{2\pi (j-1)\frac{(k-1)}{n}}}\right|^2,$$
equation \eqref{eq:fourier} becomes (with some abuse of notation, in re-indexing $k$ between $-\frac{n}{2}$ and $\frac{n}{2}-1$ ):
\begin{equation}
b_k=|\hat{h}|^2_{k}|\hat{x}_0|^2_{k} \quad k\in \llbracket-f_c, f_c\rrbracket   \text{ and } b_k= 0 \text { elsewhere}.
\label{eq:SR1}
\end{equation}
It follows from equation \eqref{eq:SR1}, that in addition to the missing phase problem we are facing a super-resolution problem since the high frequency content of the signal is also lost due to the physical resolution limit induced by the cut-off frequency $f_c$ (See for e.g. \cite{SR}).\\
This paper answers the following questions: 
\textit{
\begin{enumerate}
 \item Robust recovery: Is it possible to  robustly recover  the missing phase from  the power spectra of a signal $x_0$ that is undergoing severe unknown non linear distortions or a stochastic noise?
 \item Phase Recovery , Super-Resolution, and Blind deconvolution: Is it possible to recover the signal from the lower end of its power spectra? In other words is it possible to super-resolve the signal beyond the diffraction limit of a given optical system even if the PSF of that system was unknown (blind deconvolution)?
\end{enumerate}}
 \noindent \textbf{Notations:}
 $\star$ represents  a convolution, and $\odot$  the Hadamard product (component-wise product).
 For $z\in \mathbb{C}$, $|z|^2$ is  squared  complex modulus of $z$.
For $a,a' \in \mathbb{C}^n$, $\scalT{a}{a'}$ is the complex dot product in $\mathbb{C}^n$. For $a \in \mathbb{C}^n, a^*$ is the complex conjugate of a, $||a||_2$ is the norm $2$ of $a$ and .
Let $A$ a complex hermitian matrix in $\mathbb{C}^n$, $||A||_{F}$ denotes the Frobenius norm of $A$,  $||A||$ demotes the operator norm of $A$, $Tr(A)$ denotes the trace of A. Throughout the paper, we denote by $c,C$ positive absolute constants whose values may
change from instance to instance. 
\subsection{Phase Retrieval: Previous Work }
As mentioned in the introduction  the set of sampling vectors we are interested in, is the set of complex sinusoids.  
Before tackling the Fourier based sampling we turn to the setting pioneered by \cite{Candes} where  the set of sampling vectors is randomized, i.e we consider a set of independent measurements defined  by independent and identically distributed Complex Gaussian sensing vectors, 
\begin{equation}\label{GaussVectors}
a_i \in \mathbb{C}^n, \quad \quad a_i \sim \mathcal{N}(0,\frac{1}{2}I_n)+i \mathcal{N}(0,\frac{1}{2}I_n),  \quad i=1\dots m.
\end{equation} 
The (noiseless) phase recovery problem is defined as follow.
\begin{definition}[Phase-less Sensing and Phase Recovery]
Suppose  phase-less sensing measurements 
\begin{equation}\label{Meas}
b_i=|\scalT{a_i}{x_0}|^2 \in \mathbb{R}_{+}, \quad \quad i=1\dots m,
\end{equation}
are given for $x_0 \in \mathbb{C}^n$,  where $a_i,~ i=1, \dots, m$ are random vectors as in \eqref{GaussVectors}.
The phase recovery problem is
\begin{equation}
\begin{aligned}
& \underset{}{\text{find}~x},\quad
& \text{subject to}\quad 
|\scalT{a_i}{x}|^2=b_i,\quad i=1\dots m. 
\end{aligned}
\label{eq:np}
\end{equation}
\end{definition}

\noindent The above  problem is non convex and in the following we recall recent approaches to provably and efficiently recover $x_0$
from a finite number of measurements.
%

\noindent \textbf{SDP (Convex) Relaxation and  PhaseLift.} 
The PhaseLift approach   \cite{Candes} stems from the  observation that $|\scalT{a_i}{x}|^2=Tr(a_ia^*_ixx^*),$ so that 
 if we let $X=xx^*$, Problem \ref{eq:np} can be written as, 
\begin{equation}
\begin{aligned}\label{PLift}
& \underset{}{\text{find}~X,}
& \text{subject to}\quad
&Tr(a_ia_i^*X)=b_i, \quad i=1\dots m, 
& \quad X \succeq 0, \quad rank(X)=1.
\end{aligned}
\end{equation}
While the above formulation is still non convex (and in fact combinatorially hard because of the rank constraint), a convex relaxation can be 
obtained noting that Problem \ref{PLift} can be written as   a rank minimization problem over the positive semi-definite  cone,
\begin{equation}
\underset{X}{\text{min}}\quad rank(X),\quad  \text{subject to}\quad Tr(a_ia_i^*X)=b_i, ~~i=1\dots m,  \quad  X \succeq 0,
\end{equation}
and then  considering the   trace as a surrogate for the rank  \cite{Candes},
\begin{equation}
\underset{X}{\text{min}}\quad Tr(X),\quad  \text{subject to}\quad Tr(a_ia_i^*X)=b_i, ~~i=1\dots m,  \quad  X \succeq 0.
\end{equation}
Indeed, the above problem is  convex and can be solved via semidefinite programming (SDP). 
Intestingly, a different  relaxation is obtained  in \cite{Demanet}
by  ignoring the rank constraint in Problem \ref{PLift}. 
The results in \cite{Candes,Demanet}  show that, with high probability,  the solution $\hat{X}_m$ obtained via either one of the above  relaxations can recover $x_0$ {\em exactly} i.e. $\hat{X}_m=x_0x_0^*$, as soon as  $m\geq c n\log n$.  In fact,  the latter requirement can be further improved to $m\geq c n$ \cite{candesn}.  
 While powerful, the convex relaxation approach incur in cumbersome computations, 
and in practice non convex approaches based on greedy alternating minimization (AM)
\cite{Gerchberg72,Fienup82,Griffin84} 
are often used. The convergence properties of the latter methods  depend heavily  on the initialization and only 
recently \cite{AM}  they have been shown to globally converge (with high probability) if provided with a suitable initialization.
%
%
 
\noindent\textbf{Phase Retrieval via Suitably Initialized  Alternating Minimization.}  
Let $A$ be the matrix defined by $m$ sensing vectors as in \eqref{GaussVectors}  and $B=Diag(\sqrt{b})$, where $b$ is the vector of measurements as in \eqref{Meas}. Then 
$$
Ax_0=Bu_0
$$
for $u_0=Ph(Ax_0)$ with  $Ph(z)=\left(\frac{z_1}{|z_1|}, \dots \frac{z_m}{|z_m|}\right)$,  
$z \in \mathbb{C}^m$. 
The above equality suggests the following   natural approach  to recover $(x_0,u_0)$,
\begin{equation}
\underset{x,u}{\text{min}}
||Ax-Bu||^2, \quad \text{subject to}
\quad 
 |u_i|=1, \quad i=1\dots m, \\
\end{equation}
The above problem is not convex  because of the constraint on $u$ and  the AM approach 
consists in optimizing  $u$, for a given $x$, and then optimizing $x$ for a given $u$.
It is easy to see that for a given   $x$,  the optimal $u$ is simply
$u=Ph\left(Ax\right),$ and, for a given $u$, the optimal  $x$ is the solution of  a least square problem. 
The key result in \cite{AM} shows that if such an iteration is initialized with the  maximum eigenvector of the matrix  
\begin{equation}\label{Covb}
\hat{C}_m=\frac{1}{m}\sum_{i=1}^mb_i a_ia_i^*
\end{equation}
then the solution of the alternating minimization $x_{t_0}$ globally converges (with high probability) to the true vector $x_0$. Throughout this paper we call this initialization \emph{SubExp} initialization.
Moreover for a given accuracy $\epsilon, 0<\epsilon<1$,   if  
\begin{equation}\label{AM_SampComp}
m\ge c(n(\log^3 n+\log\frac{1}{\epsilon}\log\log\frac{1}{\epsilon} )),
\end{equation} 
then  $||x_{t_0}-e^{i\phi}x_0||_{2}\leq \epsilon$.\\
\textbf{One Bit Phase Retrieval and Greedy Refinements.}
More recently a new approach for phase retrieval was proposed in \cite{1bitPhase} based on a quantization scheme of severely perturbed phase-less linear measurements.
Assume we observe pairs of independent phase-less measurements:
\begin{equation}\label{DistMeas}
(b^1_i,b^2_i)=\left(\theta(|\scalT{a^1_i}{x_0}|^2),\theta(|\scalT{a^2_i}{x_0}|^2)\right), \quad i=1, \dots, m,
\end{equation}
where $(a^1_i,a^2_i)$ are independent sensing vectors as in \eqref{GaussVectors} and  $\theta$ is a {\em possibly unknown}  rank preserving transformation. 
In particular  $\theta$  models a distortion, e.g. $\theta(s)=\tanh(\alpha s)$, $\alpha\in \mathbb R_+$, 
or an additive noise  $\theta(s)=s+\nu$,   where $\nu$ is a   stochastic  noise, such as an exponential noise.
The  recovery problem from severly perturbed intensity values seems hopeless, and indeed  the key in  this approach is a quantization scheme based on comparing  pairs of phase-less measurements. 
More precisely for each pair $b^1_i, b^2_i$ of measurements of the form  \eqref{DistMeas} we define  $$y_i\in \{-1, 1\}
\quad y_i=sign(b^1_i-  b^2_i),\quad i=1\dots m.$$The one bit phase retrieval problem reduces to a maximum eigenvalue problem induced by the matrix
\begin{equation}\label{Covy}
\hat{C}_m=\frac{1}{m}\sum_{i=1}^{m}  y_i(a^{1}_ia^{1,*}_i-a^{2}_ia^{2,*}_i).
\end{equation}
In \cite{1bitPhase}  it is shown that the expectation of $\hat{C}_m$ satisfies 
$\mathbb E  \hat{C}_m=\lambda x_0x_0^*$, where $\lambda$ is a suitable constant
which depends on $\theta$ and plays the role of a signal-to-noise ratio. Morever for a given accuracy $\epsilon, 0<\epsilon<1$,  if 
$O( \frac{ n\log n }{\epsilon^2 \lambda})$ pairs of measurements  are available, then 
 the solution $\hat{x}_m$ of the above  maximum eigenvalue problem satisfies 
$$||\hat{x}_m-x_0e^{i\phi}||^2_{2}\leq \epsilon,$$  
where  $\phi \in [0,2\pi]$ is a global phase. 
Interestingly authors in \cite{1bitPhase} show that provided 
with the one-bit retrieval initialization,  
the solution of the alternating minimization algorithm $x_{t_0}$ globally converges (with high probability) to the true vector $x_0$ :if  
\begin{equation}\label{AM_SampComp1Bit}
m\ge c(n(\log n+\log\frac{1}{\epsilon}\log\log\frac{1}{\epsilon} )),
\end{equation} 
then  $||x_{t_0}-e^{i\phi}x_0||_{2}\leq \epsilon$. Hence quantization plays the role of a preconditioning that enhances the sample complexity of the overall alternating minimization.
\subsection{Coded Diffraction Patterns and PhaseLift}
While the Gaussian measurements setting allow to carry an  interesting theory and gives a glimpse on the efficiency of proposed methods in  more practical setups, it is of great interest to study the Fourier sampling mentioned in the introduction. 
A practical setup consists in   modulating the signal  with multiple structured illuminations for instance, and then measuring multiple diffraction patterns of the modulated signals. 
The modulation step could be replaced by masking the signal of interest with an appropriate mask.
This is indeed  an attractive framework to resolve the ambiguity in the phase retrieval problem. Firstly suggested in \cite{EM},  this technique comes under different names: digital holography \cite{DH}, ptychography  \cite{nature2}, Fourier ptychtographic microscopy \cite{nature}, etc..
These techniques yield to  many successful applications in structured illumination microscopy \cite{SR} and more broadly in many linear and non linear  Fourier optics applications, where both phase retrieval and super-resolution are achieved via masking or the use of multiple structured illumination modulation.
Let $w_{\ell} \in \mathbb{C}^n,\ell=1\dots r$ be the modulating waves (or the masks), we observe the following coded diffraction patterns  \cite{FMasks}  :
\begin{equation}
b_{\ell,k}=\theta\left(\left| \sum_{j=1}^{n}x_0[j] w_{\ell}[j] e^{-i{2\pi (j-1)\frac{(k-1)}{n}}}\right|^2\right),\quad k\in \Omega,\quad \Omega \subseteq [1,n], \quad \ell=1\dots r.
\end{equation} 
In other words, noting  $F$  the Discrete  Fourier Transform (DFT) Matrix, $Diag(w)$ the diagonal matrix with the modulation pattern on its diagonal, and $\Omega=[1,n]$ we have:
\begin{equation}
b_{\ell}=\theta(|FDiag(w_{\ell})x_0|^2) \in \mathbb{R}^n_+ \quad \ell=1\dots r,
\label{eq:model}
\end{equation}
where $\theta$ and the complex modulus act component-wise.
In a recent work, for $\theta(z)=z$, and  a set of admissible modulations authors in \cite{FMasks}  show that an approach similar to Phase-lift  allows the exact recovery of the signal with high probability given that:
$$r\geq c \log^4n ,$$
for a fixed numerical constant $c$.\\
In \cite{alexeev2012phase, bandeira2013phase} authors introduce another approach to  phase retrieval by polarization.
In \cite{bandeira2013phase} authors propose a construction of binary masks that ensures phase recovery by polarization.  It is shown in \cite{bandeira2013phase} that $O(\log(n))$ binary masks are needed to ensure recovery in the noiseless case.\\
Indeed with this subset of papers on phase retrieval we don't give justice to a large body of papers on that issue  for a  succinct review we refer the reader to \cite{FMasks} and references therein.
    
\subsection{This Paper: One Bit Coded Diffraction Patterns}\label{sec:setup}
In this paper we are interested in the setting where $\theta$ is different from the identity. We restrict our analysis to complex Gaussian modulations.
Three  settings are of interest:
\textit{
\begin{enumerate}
\item Noise:
\subitem $\star$ Additive Stochastic noise: We observe noisy coded diffraction patterns,
\begin{equation}
b_{\ell}=|FDiag(w_{\ell})x_0|^2+\nu_{\ell} \in \mathbb{R}^n_+ \quad \ell=1\dots 2r ,
\label{eq:1bitCDPNoise}
\end{equation}
where $\nu_{\ell}$ are independent exponential vectors $Exp(\gamma), $ $(\sigma=\frac{1}{\gamma^2})$. 
\subitem $\star$ Poisson Noise:  We observe noisy coded diffraction patterns contaminated with poisson noise,
\begin{equation}
b_{\ell}=\mathcal{P}_{\eta}\left(|FDiag(w_{\ell})x_0|^2\right) \in \mathbb{R}^n_+ \quad \ell=1\dots 2r ,
\label{eq:1bitCDPoisson}
\end{equation}
where $\mathcal{P}_{\eta}$ is a component-wise poisson noise : For $z,\eta >0,\mathcal{P}_{\eta}(z)\sim Poisson( \frac{z}{\eta})$.
\item Distortion: We observe  distorted coded diffraction patterns:
\begin{equation}
b_{\ell}=\tanh\left(\alpha|FDiag(w_{\ell})x_0|^2\right) \in \mathbb{R}^n_+ \quad \ell=1\dots 2r ,\alpha>0,
\label{eq:1bitCDPDistortion}
\end{equation}
(with some abuse of notations $\tanh$ acts component-wise).
\item Diffraction Limit/Super-Resolution/Blind deconvolution: The modulated signal diffracts through a lens characterized by a PSF $h$ and a cut-off frequency $f_{c}$. Let $H$ be the Toeplitz matrix associated to $h$, we observe:
\begin{equation}
b_{\ell}=|FHDiag(w_{\ell})x_0|^2 \in \mathbb{R}^n_+ ,\quad \ell=1\dots 2r.
\label{eq:1bitCDPSR}
\end{equation}
\end{enumerate} }
\noindent In this paper we take the point of view of \cite{1bitPhase} and define a quantization scheme for the coded diffraction patterns, by comparing pairs of  coded diffraction patterns.
Consider pairs of coded diffraction patterns associated to pairs of independent modulations $(w^1_i,w^2_i)$, where $w^1_i,w^2_i \sim \mathcal{C}\mathcal{N}(0,I_n)$:
\begin{equation}\label{eq:pairsCDP}
(b^1_{i},b^2_{i})= \left(\theta(|FDiag(w^1_i)x_0|^2),\theta(|FDiag(w^2_i)x_0|^2)\right)\in \mathbb{R}^n_+\times \mathbb{R}^n_+ \quad i=1\dots r.
\end{equation}
For each pair $(b^1_{i},b^2_{i})$ of coded diffraction patterns we define a one bit coded diffraction pattern as:
\begin{equation}
y_i\in \{-1,1\}^n ,\quad  y_i=sign(b^1_i-b^2_i), i=1\dots r.
\label{eq:meas}
\end{equation}
Now the One Bit Phase Retrieval problem consists in finding $x_0$ from the knowledge of  one bit coded diffraction patterns $(y_1\dots y_r)$.
Similarly to One bit phase retrieval form Gaussian measurements we show that the phase retrieval problem from one bit coded diffraction patterns reduces to finding the maximum eigen-vector $\hat{x}_r$
of the matrix $\hat{C}_r$:
\begin{equation}
\hat{C}_r=\frac{1}{r}\sum_{i=1}^r\left(Diag(w^1_i) F Diag(y_i)F^* Diag(w^{1,*}_i) -Diag(w^2_i) F Diag(y_i)F^* Diag(w^{2,*}_i)\right),
\label{eq:Matrix}
\end{equation}
Indeed in this paper we show that : $$\mathbb{E}(\hat{C}_r)=\lambda x_0x_0^*,$$ where $\lambda$ depends on $\theta$.

\section{Main Results}

In the following we give the only  assumption we make on $\theta$ throughout the paper, and state our main results  for the three setups of interest  discussed in Section \ref{sec:setup}.  \
As mentioned before, we assume that $\theta$ preserves the ranking of the intensities. 
We shall make one assumption on the non linearity $\theta$,
\begin{equation}{\label{eq:lambda}}
\lambda=\mathbb{E}(\scalT{sign(\theta(E_1)-\theta(E_2))}{(E_1-E_2)})>0,
\end{equation}
where $E_1$, $E_2$ are two independently distributed exponential random $n$-dimensional vectors with mean $\frac{1}{n}$. To see why this assumption is natural, 
notice that $|FDiag(w)x_0|^2\sim  (\frac{1}{n}Exp(1))^{\otimes n}$  if $w \sim \mathcal{C}\mathcal{N}(0,I_n)$ and $||x_0||=1$, thus
$$\mathbb{E}(\scalT{y_i}{|FDiag(w^1_i)x_0|^2-|FDiag(w^2_i)x_0|^2})=\mathbb{E}(\scalT{sign(\theta(E_1)-\theta(E_2))}{(E_1-E_2)})=\lambda>0.$$
Then the above assumption simply means that the one bit measurements preserve robustly the ranking of the intensities. 
Let $\hat{x}_r$ be the maximum eigenvector of $\hat{C}_r$ defined in \eqref{eq:Matrix}.
The following Theorem  shows that $\hat{x}_r$ is an $\epsilon-$ estimate of $x_0$.
 \begin{theorem}[Phase Retrieval From One Bit Coded Diffraction Patterns]
For $x_0 \in \mathbb{C}^n, ||x_0||=1$, and $0<\epsilon<1$. Assume $y_1\dots y_r$, follow the model  given in \eqref{eq:meas}. Then we have with a probability at least $1- O(n^{-2}) $, 
$$\text{for } r \geq \frac{c}{\epsilon^2\lambda^2 } \log^3n, \quad ||\hat{x}_r-x_0e^{i\phi}||^2\leq \epsilon$$
where $c$ is a numeric constant, and $\phi \in [0,2\pi]$ is a global phase.
$\lambda$ is given in \eqref{eq:lambda}.
\label{theo:mainFourier}
\end{theorem}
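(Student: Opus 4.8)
The plan is to follow the standard three-step recipe for a spectral estimator: identify the population matrix, show the empirical matrix concentrates around it in operator norm, and then convert that spectral deviation into an error bound on the leading eigenvector. The starting point is the expectation identity already established in the paper, $\mathbb{E}\hat{C}_r = \lambda x_0 x_0^*$. This is a rank-one positive semidefinite matrix whose unique nonzero eigenvalue is $\lambda$, with eigenvector $x_0$, so its spectral gap is exactly $\lambda$, which is strictly positive by assumption \eqref{eq:lambda}. The task thus reduces to showing that $\hat{x}_r$, the top eigenvector of $\hat{C}_r$, is close to $x_0$. The fact feeding into this is that $|FDiag(w)x_0|^2$ has exponential marginals of mean $\tfrac1n$ when $\|x_0\|=1$, which is precisely what makes $\lambda$ of \eqref{eq:lambda} the correct population constant.

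The heart of the argument is the concentration step. Writing $\hat{C}_r=\tfrac1r\sum_{i=1}^r M_i$ with i.i.d.\ Hermitian summands $M_i=Diag(w^1_i)FDiag(y_i)F^*Diag(w^{1,*}_i)-Diag(w^2_i)FDiag(y_i)F^*Diag(w^{2,*}_i)$, I want to bound $\|\hat{C}_r-\mathbb{E}\hat{C}_r\|$. The obstacle is that the $M_i$ are unbounded and heavy-tailed: the conjugating diagonals carry complex Gaussian entries, so the entries of $M_i$ are products of sub-Gaussians and hence only sub-exponential, which rules out a direct application of the bounded matrix Bernstein inequality. I would therefore first truncate the modulations at level $O(\sqrt{\log n})$; since the maximum of $n$ complex Gaussians exceeds $C\sqrt{\log n}$ only with probability $O(n^{-2})$, this truncation is lossless up to the stated failure probability and introduces bias negligible relative to the target deviation $\delta$. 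On the truncation event, $\|F\|=1$ and $\|Diag(y_i)\|=1$ give $\|M_i\|\lesssim\log n=:L$, whence the crude variance proxy $\|\mathbb{E}M_i^2\|\le L^2=O(\log^2 n)$. Feeding $L$ and this variance into matrix Bernstein yields $\|\hat{C}_r-\mathbb{E}\hat{C}_r\|\le\delta$ with probability $1-O(n^{-2})$ as soon as $r\gtrsim(\log^2 n)(\log n)/\delta^2$, where one factor $\log n$ arises from the $2n$ ambient-dimension factor and $\log^2 n$ from the norm and variance bounds. The genuinely delicate point here is that $y_i=sign(b^1_i-b^2_i)$ is a nonlinear, $\theta$-dependent function of the very modulations $w^1_i,w^2_i$ appearing in the conjugating diagonals, so the summands are not products of independent factors; this coupling must be respected both in the expectation (already absorbed into $\lambda$) and in controlling $\|\mathbb{E}M_i^2\|$.

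Finally I would invoke a Davis--Kahan/Weyl eigenvector perturbation bound: because $\mathbb{E}\hat{C}_r=\lambda x_0 x_0^*$ has spectral gap $\lambda$, once $\delta$ is a small fraction of $\lambda$ the leading eigenvector satisfies $\min_\phi\|\hat{x}_r-e^{i\phi}x_0\|^2\le C\delta/\lambda$. Choosing $\delta\asymp\lambda\epsilon$ makes the right-hand side at most $\epsilon$, and substituting into the sample-complexity bound from the concentration step gives $r\gtrsim\log^3 n/(\epsilon^2\lambda^2)$, exactly the claimed rate, with total failure probability $O(n^{-2})$ coming from the union of the truncation and Bernstein events. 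I expect the concentration step — specifically the truncation together with the variance computation for these coupled, DFT-modulated sub-exponential summands — to be the main obstacle; the expectation identity and the perturbation argument are essentially bookkeeping once the operator-norm deviation is in hand.
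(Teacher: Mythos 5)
Your proposal is correct and follows essentially the same route as the paper: the rank-one expectation identity $\mathbb{E}\hat{C}_r=\lambda x_0x_0^*$, truncation of the Gaussian modulations at level $|w_j|^2\lesssim\log n$ (lossless with probability $1-O(n^{-2})$, with a separately controlled bias term), a bounded-matrix concentration inequality with $\|M_i\|\lesssim\log n$ yielding $r\gtrsim \log^3 n/\delta^2$, and a spectral-gap argument converting $\|\hat{C}_r-C\|\le\delta$ into eigenvector error $\lesssim\delta/\lambda$. Your two substitutions are immaterial: matrix Bernstein with the crude variance proxy $\|\mathbb{E}M_i^2\|\le L^2$ reduces to the matrix Hoeffding bound the paper uses, and Davis--Kahan applied to the rank-one population matrix gives the same $\delta/\lambda$ bound as the paper's variational comparison lemma (Lemma \ref{lem:ineq}).
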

\noindent For the  noiseless model $\theta(z)=z$ and $\lambda=1$. Thus the theorem states that $O(\log^3n)$ pairs of coded diffraction patterns ensures the recovery of the phase. 
For different observation model $\theta$ it suffices to compute the value of $\lambda$ as given in \eqref{eq:lambda}.
\noindent We turn now to the noisy measurements setup \eqref{eq:1bitCDPNoise} and show robustness of phase retrieval from one bit coded diffraction patterns: 
\begin{corollary}[One bit Recovery/ Noise]
For $x_0 \in \mathbb{C}^n, ||x_0||=1$, and $0<\epsilon<1$. Assume $y_1\dots y_r$, follow the model  given in \eqref{eq:meas}, for $\theta(z)=z+\nu, \nu \sim Exp(\gamma)$.
Where $\nu$ is an exponential noise with variance $\sigma=\frac{1}{\gamma^2}$. 
 Then for any $\epsilon, 0<\epsilon<1$, we have with a probability at least $1- O(n^{-2}) $, 
$$\text{for } r \geq \frac{c  }{\epsilon^2}\frac{(1+\sqrt{\sigma})^4}{(1+2\sqrt{\sigma})^2} \log^3n, \quad ||\hat{x}_r-x_0e^{i\phi}||^{2}\leq \epsilon,$$
where $c$ is a numeric constant, and $\phi \in [0,2\pi]$ is a global phase. 
\label{theo:Noise}
\end{corollary}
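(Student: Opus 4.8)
The plan is to obtain the Corollary as a direct specialization of Theorem \ref{theo:mainFourier}. That theorem already gives the sample-complexity bound $r\ge\frac{c}{\epsilon^2\lambda^2}\log^3 n$ with the probability guarantee $1-O(n^{-2})$, where $\lambda$ is the constant defined in \eqref{eq:lambda}. Consequently no new concentration or eigenvector-perturbation argument is required: the entire content of the Corollary is the evaluation of $\lambda$ for the specific non-linearity $\theta(z)=z+\nu$, $\nu\sim Exp(\gamma)$, and the verification that $\frac{1}{\lambda^2}=\frac{(1+\sqrt\sigma)^4}{(1+2\sqrt\sigma)^2}$. I would state this reduction up front and then devote the proof to computing $\lambda$ and substituting it into the bound of Theorem \ref{theo:mainFourier}.

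To set up the computation of $\lambda$, I would first read \eqref{eq:lambda} correctly in the stochastic regime. Each of the two coded diffraction patterns in a pair \eqref{eq:pairsCDP} carries its own independent exponential noise, so I write $\theta(E_1)=E_1+\nu^1$ and $\theta(E_2)=E_2+\nu^2$ with $\nu^1,\nu^2$ independent copies, and take the expectation in \eqref{eq:lambda} jointly over $E_1,E_2,\nu^1,\nu^2$. Since the coordinates of $E_1,E_2$ are i.i.d. (each a mean-$\frac1n$ exponential, as noted after \eqref{eq:lambda}) and the noise acts coordinate-wise, linearity of expectation splits the inner product into $n$ identical one-dimensional expectations. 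After rescaling the mean-$\frac1n$ exponentials to standard $Exp(1)$ variables, the prefactor $n$ cancels the $\frac1n$ carried by $E_1-E_2$, leaving a single scalar integral that is independent of $n$; in these normalized coordinates the effective noise has scale parameter $\gamma=1/\sqrt\sigma$.

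The core of the argument is that scalar integral. Per coordinate, set $S=E_1[j]-E_2[j]$ and $T=\nu^1[j]-\nu^2[j]$, both symmetric Laplace laws ($S$ of unit scale, $T$ of scale $1/\gamma=\sqrt\sigma$). I would integrate out the noise first: for fixed $S$, using the Laplace CDF $F_T$, one gets $\mathbb{E}_T[\,sign(S+T)\,]=1-2F_T(-S)=sign(S)\bigl(1-e^{-\gamma|S|}\bigr)$. Multiplying by $S$ gives $\mathbb{E}[\,sign(S+T)\,S\,]=\mathbb{E}\bigl[|S|(1-e^{-\gamma|S|})\bigr]$, and since $|S|\sim Exp(1)$ this evaluates to $1-\frac{1}{(1+\gamma)^2}$. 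Substituting $\gamma=1/\sqrt\sigma$ and simplifying yields $\lambda=\frac{1+2\sqrt\sigma}{(1+\sqrt\sigma)^2}$, so that $\frac{1}{\lambda^2}=\frac{(1+\sqrt\sigma)^4}{(1+2\sqrt\sigma)^2}$, which is exactly the factor appearing in the statement.

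I expect the main obstacle to be bookkeeping rather than any deep difficulty: namely, pinning down the noise normalization so that the reduced expectation is genuinely $n$-independent (recognizing that the relevant scale of the noise difference relative to the unit-scale signal difference is governed by $\gamma=1/\sqrt\sigma$), and justifying that \eqref{eq:lambda} must be read with two independent noise realizations for a fixed pair of diffraction patterns. Once $\lambda$ is in hand, the conclusion is an immediate appeal to Theorem \ref{theo:mainFourier}. A useful consistency check, which I would include, is that $\sigma\to0$ recovers $\lambda\to1$ (the noiseless value stated after Theorem \ref{theo:mainFourier}) while $\sigma\to\infty$ gives $\lambda\to0$, matching the expected degradation of the signal-to-noise ratio under heavy noise.
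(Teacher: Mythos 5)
Your proposal is correct and follows essentially the same route as the paper: the corollary is obtained by specializing Theorem \ref{theo:mainFourier}, and the value $\lambda=\frac{1+2\sqrt{\sigma}}{(1+\sqrt{\sigma})^2}$ is computed exactly as in the paper's Lemma \ref{lem:SNR} (ii), by writing the signal and noise differences as Laplace variables, integrating out the noise via its CDF to get $\mathbb{E}\bigl[\,sign(S+T)\,S\,\bigr]=\mathbb{E}\bigl[|S|(1-e^{-\gamma|S|})\bigr]=1-\frac{1}{(1+\gamma)^2}$, and substituting $\gamma=1/\sqrt{\sigma}$. Your added consistency checks ($\sigma\to 0$ and $\sigma\to\infty$) and the explicit discussion of the noise normalization are sound but do not change the argument.
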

\noindent In other words, under an exponential noise we have: $$||\hat{x}_r-x_0e^{i\phi}||^2_{2}\leq C\sqrt{\frac{ \log^3n}{r} }\frac{(1+\sqrt{\sigma})^2}{1+2\sqrt{\sigma}}.$$
Beyond robustness to noise, another desirable feature for phase retrieval from phase-less  measurements, is the robustness to distortions of the values of intensities.
Is it possible to retrieve the phase from coded diffraction patterns that are undergoing clipping for instance (as in equation \eqref{eq:1bitCDPDistortion})? 
\begin{corollary}[One bit Recovery/ Distortion]\label{cor:Distortion}
For $x_0 \in \mathbb{C}^n, ||x_0||=1$, and $\epsilon>0$. Assume $y_1\dots y_m$, follow the model  given in \eqref{eq:meas}, for $\theta(z)=\tanh(\alpha z), \alpha >0$.
 Then for any $ \epsilon, 0<\epsilon<1$, we have with a probability at least $1-O(n^{-2})$, 
$$\text{for } r \geq \frac{c }{\epsilon^2} \frac{ \log^3n}{\lambda^2(\alpha)}, \quad ||\hat{x}_r-x_0e^{i\phi}||^{2}\leq \epsilon,$$
where $c$ is a numeric constant, and $\phi \in [0,2\pi]$ is a global phase. 
$\lambda(\alpha)=\mathbb{E}(|E_1-E_2|sign(1-\tanh(\alpha E_1)\tanh(\alpha E_2))),$ is a decreasing function in $\alpha$.
\end{corollary}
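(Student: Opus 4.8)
The plan is to obtain Corollary~\ref{cor:Distortion} as a direct specialization of Theorem~\ref{theo:mainFourier} to the saturating nonlinearity $\theta(z)=\tanh(\alpha z)$. The sample-complexity requirement $r\ge \frac{c}{\epsilon^2}\frac{\log^3 n}{\lambda^2(\alpha)}$ and the error bound $\|\hat x_r-x_0e^{i\phi}\|^2\le\epsilon$ are verbatim the conclusion of Theorem~\ref{theo:mainFourier} once its abstract constant $\lambda$ from \eqref{eq:lambda} is identified with $\lambda(\alpha)$. Thus the proof splits into three tasks: (i) verify that $\theta(z)=\tanh(\alpha z)$ is admissible, i.e. $\lambda(\alpha)>0$, so that Theorem~\ref{theo:mainFourier} may be invoked; (ii) evaluate the constant of \eqref{eq:lambda} in closed form and match it with the stated $\lambda(\alpha)$; and (iii) establish that $\alpha\mapsto\lambda(\alpha)$ is decreasing.

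For (ii) I would begin from $\lambda=\mathbb{E}\scalT{\mathrm{sign}(\theta(E_1)-\theta(E_2))}{E_1-E_2}$. Because $\theta$ and $\mathrm{sign}$ act componentwise and the coordinates of $E_1,E_2$ are i.i.d.\ exponentials, the inner product decouples over coordinates and everything reduces to the joint law of a single scalar pair. The algebraic core is the hyperbolic subtraction identity
\begin{equation*}
\tanh(\alpha s)-\tanh(\alpha t)=\tanh\!\big(\alpha(s-t)\big)\big(1-\tanh(\alpha s)\tanh(\alpha t)\big),
\end{equation*}
which, together with $\mathrm{sign}(\tanh u)=\mathrm{sign}(u)$, factorizes the sign of the distorted gap as
\begin{equation*}
\mathrm{sign}\big(\theta(s)-\theta(t)\big)=\mathrm{sign}(s-t)\,\mathrm{sign}\big(1-\tanh(\alpha s)\tanh(\alpha t)\big).
\end{equation*}
Feeding this into \eqref{eq:lambda} and using $\mathrm{sign}(s-t)(s-t)=|s-t|$ collapses the expression to $\lambda(\alpha)=\mathbb{E}\big(|E_1-E_2|\,\mathrm{sign}(1-\tanh(\alpha E_1)\tanh(\alpha E_2))\big)$, exactly the stated formula.

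Task (i) then follows because $\tanh(\alpha\cdot)$ is strictly increasing on $\mathbb{R}_+$: the factor $1-\tanh(\alpha E_1)\tanh(\alpha E_2)$ stays strictly positive on the support of the exponential vectors, so the one-bit pattern $y_i$ coincides with the one produced by the identity model, and $\lambda(\alpha)$ reduces to the positive constant $\mathbb{E}|E_1-E_2|$; this simultaneously yields $\lambda(\alpha)>0$ and explains the robustness to distortion. The genuinely delicate point, which I expect to be the main obstacle, is the monotonicity in (iii): at the level of the marginal (coordinatewise i.i.d.) computation the sign factor is identically $+1$, so any strict decrease in $\alpha$ can only surface once the true correlation structure of the coded-diffraction intensities is taken into account---recall that the coordinates of $|FDiag(w)x_0|^2$ are only marginally, not jointly, exponential. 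I would therefore attack (iii) by differentiating the exact matrix expectation $\mathbb{E}(\hat{C}_r)$ with respect to $\alpha$ and isolating the sign of the derivative through a symmetrization/coupling argument that exploits the exchangeability of the pair and the concavity of $\tanh$ on $\mathbb{R}_+$; controlling the opposing effects of the saturating factor and the weight $|E_1-E_2|$ is where the real work lies.
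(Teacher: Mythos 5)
Your tasks (i) and (ii) are exactly the paper's own proof. The paper obtains the corollary by invoking Theorem~\ref{theo:mainFourier} with the value of $\lambda$ computed in Lemma~\ref{lem:SNR}(iii), and that computation is precisely your hyperbolic factorization: $\tanh(\alpha E_1)-\tanh(\alpha E_2)=\tanh\left(\alpha(E_1-E_2)\right)\left(1-\tanh(\alpha E_1)\tanh(\alpha E_2)\right)$ together with $\mathrm{sign}(\tanh u)=\mathrm{sign}(u)$, giving $y=\mathrm{sign}(E_1-E_2)\,\mathrm{sign}\left(1-\tanh(\alpha E_1)\tanh(\alpha E_2)\right)$ and hence the stated formula for $\lambda(\alpha)$. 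So the core of your proposal is correct and coincides with the paper's route; moreover your observation in (i) is sharper than anything the paper states: since $E_1,E_2\geq 0$ almost surely, $\tanh(\alpha E_1)\tanh(\alpha E_2)<1$, the sign factor is $+1$ almost surely, and therefore $\lambda(\alpha)=\mathbb{E}|E_1-E_2|=1$ for every $\alpha>0$. This simultaneously gives admissibility and shows the result is even stronger than stated: under the $\tanh$ model the sample complexity is identical to the noiseless case.

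The genuine problem is your task (iii), and you should abandon it rather than attempt it. First, note the paper itself never proves the ``decreasing in $\alpha$'' assertion --- Lemma~\ref{lem:SNR}(iii) stops at the formula --- and by your own computation in (i) the assertion is vacuous: $\lambda(\alpha)$ is constant equal to $1$, so it cannot be strictly decreasing. Second, your proposed rescue via the cross-coordinate correlation structure of the CDP intensities is a dead end. Both in the definition \eqref{eq:lambda} and in the only place $\lambda$ enters the analysis (Proposition~\ref{pro:rankone}, where $C=\lambda x_0x_0^*$), $\lambda$ decomposes as a sum over coordinates $k$ of $\mathbb{E}\left[\mathrm{sign}\left(\theta(d^1_k)-\theta(d^2_k)\right)(d^1_k-d^2_k)\right]$, and each summand depends only on the bivariate law of $(d^1_k,d^2_k)$ --- a pair of independent exponentials --- never on the joint law across $k$. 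So the dependence (or lack thereof) on $\alpha$ is unaffected by correlations, and differentiating $\mathbb{E}(\hat{C}_r)$ in $\alpha$ yields zero. The correct disposition is: your proof is complete once (iii) is dropped, with the conclusion $\lambda(\alpha)=1$; the monotone degradation in $\alpha$ observed in the paper's experiments reflects finite-precision saturation of $\tanh$, which lies outside the stated mathematical model.
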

\noindent For the last setup where the resolution of the observed diffraction patterns is limited by the Fraunhofer diffraction limit $f_c$ of an optical system as in equation \eqref{eq:1bitCDPSR}.
We show that the recovery is still possible even if the PSF of the optical system was unknown. The number of modulations needed is poly-logarithmic in the dimension and quadratic in the super-resolution factor $SRF=\frac{n}{2f_c+1}$ defined in \cite{carlos}.   

\begin{corollary}[One bit Recovery/Super-Resolution]
For $x_0 \in \mathbb{C}^n, ||x_0||=1$, and $0<\epsilon<1$. Assume $y_1\dots y_m$, follow the model  given in \eqref{eq:meas} for $(b^1_i,b^2_i)$ defined as in \eqref{eq:1bitCDPSR} for a PSF $h$ characterized by the cut-off frequency $f_c$. Then for $\epsilon, 0<\epsilon<1$, we have with we have with a probability at least $1-O(n^{-2}) $, 
$$\text{for } r \geq \frac{c}{\epsilon^2} (SRF)^2 \log^3n, \quad ||\hat{x}_r-x_0e^{i\phi}||^2\leq \epsilon.$$
where $c$ is numeric constant, and $\phi \in [0,2\pi]$ is a global phase. SRF is the super-Resolution factor defined as: $SRF=\frac{n}{2f_c+1}$.
\label{cor:blur}
\end{corollary}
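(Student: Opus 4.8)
The plan is to recognize the super-resolution model \eqref{eq:1bitCDPSR} as a special instance of the general one-bit framework behind Theorem \ref{theo:mainFourier}, with an effective nonlinearity $\theta$ that is a frequency-selective masking, and then to reduce the corollary to computing the associated constant $\lambda$ in \eqref{eq:lambda}. First I would diagonalize the blurring operator in the Fourier domain: since $H$ implements a (circular) convolution with the PSF $h$, we have $FH=Diag(\hat{h})F$, so that
\begin{equation}
b^j_i=|FHDiag(w^j_i)x_0|^2=|\hat{h}|^2\odot|FDiag(w^j_i)x_0|^2,\quad j=1,2.
\end{equation}
Because the PSF has cut-off frequency $f_c$, the weights satisfy $|\hat{h}|^2_k>0$ for $k\in\llbracket-f_c,f_c\rrbracket$ and $|\hat{h}|^2_k=0$ otherwise.

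The crucial observation is that the one-bit map \eqref{eq:meas} is insensitive to the positive multiplicative weights $|\hat{h}|^2_k$ inside the passband: for every frequency with $|\hat{h}|^2_k>0$,
$$y_i[k]=sign\big(|\hat{h}|^2_k(|FDiag(w^1_i)x_0|^2[k]-|FDiag(w^2_i)x_0|^2[k])\big)=sign\big(|FDiag(w^1_i)x_0|^2[k]-|FDiag(w^2_i)x_0|^2[k]\big),$$
while $y_i[k]=0$ for $|k|>f_c$. Thus the one-bit measurements coincide with the noiseless ($\theta=\mathrm{id}$) measurements on the $2f_c+1$ retained frequencies and vanish elsewhere; in particular the recovery is \emph{blind} to $h$, since its magnitude never enters the sign. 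I would then recompute $\mathbb{E}(\hat{C}_r)$ exactly along the lines of the proof of Theorem \ref{theo:mainFourier}, noting that only the passband frequencies contribute to the sum defining $\hat{C}_r$ in \eqref{eq:Matrix}.

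To evaluate the resulting $\lambda$, I would use that $|FDiag(w)x_0|^2\sim(\frac{1}{n}Exp(1))^{\otimes n}$ and exploit the rotational invariance of the complex Gaussian mask $w$: absorbing the DFT phase $e^{-i2\pi(j-1)(k-1)/n}$ into $w[j]$ shows that each individual frequency contributes identically to the expectation, so that the full-band identity $\mathbb{E}(\hat{C}_r)=x_0x_0^*$ (i.e. $\lambda=1$) splits into $n$ equal per-frequency pieces of $\frac{1}{n}x_0x_0^*$. Restricting to the $2f_c+1$ passband frequencies therefore yields
$$\mathbb{E}(\hat{C}_r)=\frac{2f_c+1}{n}\,x_0x_0^*=\frac{1}{SRF}\,x_0x_0^*,$$
i.e. $\lambda=1/SRF$. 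The concentration step is essentially unchanged: masking $y_i$ to the passband only zeroes out coordinates and can only decrease the relevant variance proxy and operator-norm bounds, so the matrix concentration argument of Theorem \ref{theo:mainFourier} gives the same $\log^3 n$ dependence and the $1-O(n^{-2})$ probability. Substituting $\lambda=1/SRF$ into the sample complexity $r\ge\frac{c}{\epsilon^2\lambda^2}\log^3 n$ of Theorem \ref{theo:mainFourier} produces exactly $r\ge\frac{c}{\epsilon^2}(SRF)^2\log^3 n$.

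The main obstacle I anticipate is the $\lambda$ computation, specifically justifying rigorously that the per-frequency contributions are equal and hence that $\lambda$ scales linearly in the number $2f_c+1$ of retained frequencies: the phase factors in $F$ are frequency-dependent, and one must invoke the rotational invariance of $w$ carefully, keeping track of the cross-frequency terms (which should vanish in expectation by the independence of the mask entries) to conclude that the reduction is exactly by the factor $\frac{2f_c+1}{n}$ rather than only up to constants. A secondary point is the benignness of the masking in the concentration bound, which requires checking that the off-passband zeros do not adversely interact with the operator-norm estimates used in the proof of Theorem \ref{theo:mainFourier}.
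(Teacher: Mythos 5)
Your proposal is correct and follows essentially the same route as the paper: diagonalize the blur in the Fourier domain, observe that the sign quantization is invariant to the positive in-band weights $|\hat{h}|^2_k$ (while out-of-band terms vanish), conclude $\lambda=\frac{2f_c+1}{n}=\frac{1}{SRF}$, and invoke Theorem \ref{theo:mainFourier}. The obstacle you anticipate is in fact a non-issue: since $\lambda$ is a sum of per-frequency expectations, linearity of expectation reduces everything to the marginal law of each in-band frequency pair, which is $\left(\frac{1}{n}E^1_k,\frac{1}{n}E^2_k\right)$ with $E^1_k,E^2_k$ i.i.d.\ $Exp(1)$, so each in-band frequency contributes exactly $\frac{1}{n}\mathbb{E}|E^1-E^2|=\frac{1}{n}$ and no cross-frequency terms need to be tracked.
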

\noindent It follows that: 
$$r  \geq \frac{c}{\epsilon^2}  \log^3n \quad ||\hat{x}_r-x_0e^{i\phi}||^2\leq SRF \epsilon ,$$
this dependency on the Super-Resolution Factor (SRF) is similar to results in \cite{carlos}, where super-resolution is achieved via total variation norm minimization and linear measurements. Note that in \cite{carlos} the phase of the linear measurements, and the PSF $h$ are assumed to be known.\\
It is worth noting that in  \cite{carlos} the super-resolution problem considered, is different from our setting as authors consider a harder problem: super-resolution from  a single image, and a strong prior, namely  a point sources model and a total variation norm minimization. In our case we have access to multiple coded diffraction patterns and this is known as multi-frame super-resolution see for example \cite{SR} and references therein.\\
\noindent The proof of Corollary \ref{cor:blur} is given in Section \ref{Sec:super}. Corollary \ref{cor:blur}  states a surprising fact: one bit coded diffraction patterns allow not only the  super-resolution of the signal but also it leads to a  blind deconvolution  since the only information needed on $h$ is its super-resolution factor $SRF$, its PSF might be completely unknwon.
Intuitively the random modulations push the high frequency content of $x_0$ to the frequency interval  where the Fourier transform of $h$ is non zero. The number of modulations needed is therefore naturally  proportional to the $SRF$ as shown in Corollary \ref{cor:blur}.  Hence the high frequency content of $x_0$ is mapped to the lower end spectrum  by modulation or masking. Phase retrieval from one bit coded diffraction patterns in a way estimates the missing phase, the missing high frequency content and corrects for the blur induced by the unknown PSF $h$. \\
 Moreover this result is still true if the observation model was:
$$(b^1_{i},b^2_{i})= \left(|FH_{i}Diag(w^1_i)x_0|^2,|FH_{i}Diag(w^2_i)x_0|^2\right)\quad i=1\dots r ,$$
\begin{equation}
y_i\in \{-1,1\}^n ,\quad  y_i=sign(b^1_i-b^2_i),\quad  i=1\dots r,
\end{equation}
where $H_i$ are  Toeplitz matrices associated to different unknown stochastic perturbations $h_i$. We assume for simplicity that the Fourier transform of  $h_i$ are non zero in the same frequency domain (the result is still true if this was not the case we don't analyze this case in this paper ). The only requirement is therefore to have the same perturbation on each considered pairs of coded diffraction patterns. For example in microscopy small perturbations will result in a change in the PSF. In astronomy in speckle imaging different $h_i$ model different atmospheric perturbations in a long exposure acquisition.\\
 Surprisingly one bit coded diffraction patterns allow blind deconvolution even in the case of varying PSFs.

\subsection{Discussion and Perspectives}
\subsubsection{Discussion}
We comment in this section on our results and compare them to the current state of the art and put them in the perspective  of future research.
Let $f_j$ be a row of the DFT matrix.\\
For phase-lift, by inspecting the proof  in \cite{FMasks} we note that three factors govern the sample complexity $O(\log^4(n))$, the first  two of them  come from matrix   concentration inequality and the last one is due to the golfing scheme:
\begin{itemize}
\item A bound on the measurements, $|f_jDiag(w)x_0|^2 j=1\dots m$: $|f_jDiag(w)x_0|^2\leq \beta \log(n)$ with high probability.
\item A bound on the the absolute values of the entries of the modulation $|w_i|^2, i=1\dots n$. In \cite{FMasks}, authors define a family of admissible modulation, such that among other conditions:
$|w_i|^2\leq M$, where $M$ is a constant independent to the dimension. It is worth noting that this class of modulations as opposed to a complex Gaussian modulation, saves extra poly-logarthmic terms  in the overall sample complexity of that approach.
\item An extra $\log(n)$ in the sample complexity is needed for the golfing scheme.
\end{itemize} 
In contrast in our case the saving of extra poly-logarithmic terms $(O(\log^3n))$ comes from the nature of one bit coded diffraction patterns. Our one bit measurements are bounded by one, hence they do not contribute to the sample complexity. On the other hand  our modulations are Complex Gaussian. Complex Gaussian modulations have their squared  absolute values bounded with high probability $|w_i|^2\leq\beta \log(n)$  and hence they contribute to the sample complexity.\\
\subsubsection{Perspectives}\label{Sec:persp}
\noindent\textbf{Greedy Refinements.}
Indeed PhaseLift and one bit phase retrieval are not comparable since one achieves exact recovery and the other achieves approximate recovery. 
For an accuracy $\epsilon$ the sample complexity for one bit phase retrieval scales as $\frac{1}{\epsilon^2}$. As for the Gaussian case the Alternating minimization \cite{AM} for coded diffraction patterns initialized with the one bit solution would guarantee a better dependency on $\epsilon$, we leave that direction to a future research.
We conjecture that $O(\log^3 n+\log\frac{1}{\epsilon}\log\log\frac{1}{\epsilon})$ pairs of  coded diffraction patterns ensures $\epsilon$ recovery with the alternating minimization initialized with the one bit solution.
Our experiments on both simulated data and images confirm that (See Section \ref{num}). \\

\noindent \textbf{Non Gaussian Modulations or Masks.}
Another direction would be to investigate admissible modulation of  \cite{FMasks} and one bit measurements as they both enjoy dimensionless boundedness, we leave also that point to a future work.  \\

\noindent \textbf{Blind Deconvolution from Coded Diffraction Patterns.}
In microscopy  the PSF of the lens is often known.  One Bit solution is agnostic to the PSF, hence one bit phase retrieval offers a good initial point to the alternating minimization conditioned on the knowledge of the PSF: phase retrieval with blur correction (See Section \ref{num}). For an unknown PSF an open question remains on how to provably recover both the signal and the PSF, via alternating minimization suitably initialized.

\subsection{Roadmap}
The paper is organized as follows: In Section \ref{sec:q} we introduce the one bit  coded diffraction patterns scheme  and  the corresponding phase recovery procedure. 
In section \ref{Sec:super} we show how super-resolution can be tackled within our framework. We address  algorithms and computational aspects  in Sections \ref{sec:comp} and \ref{num}.
Finally  we give the proofs in Section \ref{app:Fourier}. 
\section{Quantizing Coded Diffraction Patterns}\label{sec:q}
%
%
%
%
\subsection{Preliminary Matrix Notation}
Let $M \in \mathbb{C}^{n\times n}$ be a complex matrix , $diag(M)$ is a vector in $\mathbb{C}^n$, containing the diagonal elements of $M$.\\
Let $u\in \mathbb{C}^n$ be a complex vector, $Diag(u)$ is a matrix in $\mathbb{C}^{n\times n}$, with $u$ on the diagonal and zeros elsewhere.\\
Let $F$ be the discrete Fourier matrix, such that $F_{jk}=\frac{1}{\sqrt{n}}{e^{-i\frac{2\pi (j-1)(k-1)}{n}}}, j=1\dots n, k=1\dots n$.
\subsection{One Bit Coded Diffraction Patterns}
We start by defining the quantization scheme of the values of masked Fourier intensities or CDP.
We assume that we observe  $\theta\left(| F Diag(w^1)x_0|^2\right)$,
where $\theta$ is eventually an unknown non linearity satisfying \eqref{eq:lambda}.
Following the same procedure in the Gaussian case we quantize the differential of two independent coded diffraction patterns.

\begin{definition}[One-bit Fourier quantizer]  \label{def:quantizer} 
Let $W=(w^{1},w^{2})$, where $w^{1},w^{2}$ are \text{i.i.d.} 
complex  Gaussian vectors 
$\mathcal{N}(0,\frac{1}{2}I_{n})+ i\mathcal{N}(0,\frac{1}{2}I_{n})$.
$w^{1}$ and $w^2$ are called Gaussian masks or modulations.
 Let $F$ be the discrete Fourier matrix in $\mathbb{C}^{n \times n}$.
For $x_0 \in \mathbb{C}^n$, a one bit quantizer of  coded diffraction patterns is given by 
$$Q^{\theta}_{W}:\mathbb{C}^n\to \{-1,1\}^n,\quad Q^{\theta}_{A}(x_0)=sign\left(\theta(| F Diag(w^1)x_0|^2)-\theta(|F Diag(w^2) x_0|^2)\right).$$
where $| F Diag(w^1)x_0|^2$ is the complex modulus of each component of $F Diag(w^1)x_0$. $\theta$ is the observation model.
$\theta$ is eventually an unknown non linearity that satisfies equation  \eqref{eq:lambda}. 
\end{definition}
\noindent Recall that a  basic quantizer in the noiseless case is obtained setting $\theta(z)=z$.\\
Now for a total of $2r$ masks or modulations  we define the one bit  coded diffraction patterns:
\begin{definition}[One Bit Coded Diffraction Patterns] Let $\{W_i=(w^1_i,w^2_i)\}_{1\leq i\leq r}$, be $2r$ i.i.d. Gaussian masks in $\mathbb{C}^n$, and $Q^{\theta}_{W_i}(x_0)$ as in Def \ref{def:quantizer} . The Quantized Phase-less sensing is : $\mathcal{Q}:\mathbb{C}^{n}\to \{-1,1\}^{nr}$ , $\mathcal{Q}(x_0)=(Q^{\theta}_{W_1}(x_0),\dots,Q^{\theta}_{W_m}(x_0))$.
\label{def:sensing}
\end{definition}

\noindent Let  
\begin{equation}\label{eq:modelFourier}
y_i =sign\left(\theta(|F Diag(w^1_i)x|^2)-\theta(|F Diag(w^2_i)x|^2)\right)\in \{-1,1\}^n \quad  i=1\dots r
\end{equation}
In this paper, we are interested in recovering $x_0$ from its  one bit coded diffraction patterns $y= (y_1\dots y_m)=\mathcal{Q}(x_0)$.
It is easy to see that the phase retrieval  amounts to  the following feasibility problem:\\
\begin{equation}
\begin{aligned}
& \underset{}{\text{find}~~ x}\\
& \text{subject to}\\
&\scalT{y_i}{|F Diag(w^1_i)x|^2-|F Diag(w^2_i)x|^2}\geq0,\quad i=1\dots r. \\
&||x||^2=1.\\
\end{aligned}
\label{eq:biPhaseFourier}
\end{equation}
Again we propose the following relaxation to tackle that problem:
\begin{equation}
\begin{aligned}
& \underset{}{\text{$\max_{x,||x||_2=1}$}~~  \left(\frac{1}{r}\sum_{i=1}^{r}\scalT{y_i}{|F Diag(w^1_i)x|^2-|F Diag(w^2_i)x|^2}  \right)}
\end{aligned}
\label{eq:MaxPhase2}
\end{equation}
The proof architecture is similar to the Gaussian case. Proofs are given in Section \ref{app:Fourier}. We start by a preliminary definition:
\begin{definition} [Fourier Risk and Empirical risk]
Let $x_0 \in \mathbb{C}^n, ||x_0||=1$.
For $x \in \mathbb{C}^{n}$ such that $||x||=1$, and $W=\{w^1,w^2\}$ i.i.d. complex Gaussians, let 
\begin{equation*}
\mathcal{E}^{x_0}(x)=\mathbb{E}(\scalT{y}{|F Diag(w^1)x|^2-|F Diag(w^2)x|^2}),
 \end{equation*}
 where $y=sign\left(|F Diag(w^1)x_0|^2-|F Diag(w^2)x_0|^2\right) \in \{-1,1\}^n$.
 Moreover, let
\begin{equation*}
\hat{\mathcal{E}}^{x_0}(x)=\frac{1}{r} \sum_{i=1}^r \scalT{y_i}{|F Diag(w^1_i)x|^2-|F Diag(w^2_i)x|^2}),
\end{equation*}
$y_i =Q^{\theta}_{W_i}(x_0)$ and $W_i=\{(w^1_i,w_i^2)\},i=1\dots r$ are i.i.d. complex Gaussians.\\
 \end{definition} 
\noindent In the following definition the phase retrieval problem is  cast as an empirical risk maximization:
\begin{definition}[Phase retrieval Problem]
The phase retrieval problem amounts to solving:
$$\max_{x,||x||=1} \hat{\mathcal{E}}^{x_0}(x)$$
Let $\hat{x}_r=\arg\max_{x,||x||=1} \hat{\mathcal{E}}^{x_0}(x).$
\end{definition}
\noindent The following proposition shows that the objective function can be written explicitly as a quadratic form.
\begin{proposition}\label{pro:QForm}
$\mathcal{E}^{x_0}(x)$ can be rewritten as the following quadratic form:
\begin{equation}
\mathcal{E}^{x_0}(x)=x^* C x, 
\label{eq:expected}
\end{equation}
where $ C=\mathbb{E}\left(Diag(w^1) F Diag(y)F^* Diag(w^{1,*}) -Diag(w^2) F Diag(y)F^* Diag(w^{2,*})\right).$
and 
\begin{equation}
\hat{\mathcal{E}}^{x_0}(x)=x^*\hat{C}_r x,
\end{equation}
where $\hat{C}_r=\frac{1}{r}\sum_{i=1}^r\left(Diag(w^1_i) F Diag(y_i)F^* Diag(w^{1,*}_i) -Diag(w^2_i) F Diag(y_i)F^* Diag(w^{2,*}_i)\right).$\\
\end{proposition}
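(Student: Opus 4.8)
\noindent The plan is to reduce each summand of the risk to a Hermitian quadratic form in the deterministic variable $x$ by means of a single algebraic identity that rewrites a signed sum of squared moduli as a sandwiched bilinear expression, and then to push the expectation (respectively the empirical average) through the fixed quadratic form in $x$. The only genuinely random object is $y$, which is a function of $x_0$ and of the masks but \emph{not} of the optimization variable $x$; holding it fixed is what makes the reduction work.

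\noindent First I would record the elementary fact underlying everything: for any $z \in \mathbb{C}^n$ and any real vector $y \in \{-1,1\}^n$, since $|z_k|^2 = \overline{z_k} z_k$ and the entries $y_k$ are real, one has $\scalT{y}{|z|^2} = \sum_{k} y_k \overline{z_k} z_k = z^* Diag(y)\, z$, a Hermitian quadratic form in $z$ with $Diag(y)$ real diagonal. Next I would substitute $z = F Diag(w^1) x$ (and likewise $z = F Diag(w^2) x$). Using $(F Diag(w^1) x)^* = x^* Diag(w^{1,*}) F^*$ and collecting the deterministic factor $x$ on either side, the weighted modulus term becomes a quadratic form $x^* C_{w^1} x$ whose matrix $C_{w^1}$ is built only from the masks, $F$, $F^*$ and $Diag(y)$, as displayed in the statement; the same holds for the $w^2$ term. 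Subtracting the two and noting that the bracketed matrix depends on $(w^1, w^2, y)$ but not on $x$, I obtain, for each fixed realization of the masks, $\scalT{y}{|F Diag(w^1) x|^2 - |F Diag(w^2) x|^2} = x^* C_W x$, where $C_W = Diag(w^1) F Diag(y) F^* Diag(w^{1,*}) - Diag(w^2) F Diag(y) F^* Diag(w^{2,*})$ is the per-pair matrix and is manifestly Hermitian, so the form is real.

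\noindent It then remains to take expectations. Because $x$ is deterministic, expectation commutes with the sandwich: $\mathcal{E}^{x_0}(x) = \mathbb{E}\big(x^* C_W x\big) = x^* \mathbb{E}(C_W)\, x = x^* C x$, which is exactly \eqref{eq:expected} with $C = \mathbb{E}(C_W)$. For the empirical risk one replaces the expectation by the sample average over $i = 1 \dots r$; since every summand is a quadratic form in the \emph{same} $x$, linearity gives $\hat{\mathcal{E}}^{x_0}(x) = \frac{1}{r} \sum_{i=1}^{r} x^* C_{W_i} x = x^* \hat{C}_r x$ with $\hat{C}_r = \frac{1}{r} \sum_i C_{W_i}$, matching the stated $\hat{C}_r$. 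The main point requiring care is purely the bookkeeping of complex conjugates and adjoints in the substitution step (tracking which factor carries the conjugate $w^{1,*}$ and which carries $F^*$, using $Diag(w)^* = Diag(w^*)$), together with the observation that $y$, although correlated with the masks, is treated as a constant inside the quadratic form, so that its dependence on $x_0$ never obstructs the passage of $\mathbb{E}$ or the averaging through $x^* (\cdot) x$. Verifying that $C_W$ is Hermitian, hence so are $C$ and $\hat{C}_r$, confirms that both risks are real-valued and legitimizes the later maximum-eigenvector formulation.
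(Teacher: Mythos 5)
Your two structural moves --- the identity $\scalT{y}{|z|^2}=z^*Diag(y)z$ for real $y$, and pushing the expectation (resp.\ the empirical average) through a quadratic form in the fixed vector $x$ --- are correct, and they are in essence the paper's own route, which reaches the same point via $\scalT{y}{diag(M)}=Tr(Diag(y)M)$ (Lemma \ref{lem:manip}), the rank-one rewriting of Lemma \ref{lem:Matrix}, and cyclicity of the trace in Lemma \ref{lem:rewrite}. The gap is precisely in the step you deferred as ``bookkeeping'': substituting $z=FDiag(w^1)x$ into $z^*Diag(y)z$ does \emph{not} produce the matrix displayed in the statement. It produces
\[
x^*\,Diag(w^{1,*})\,F^*\,Diag(y)\,F\,Diag(w^{1})\,x,
\]
with the conjugated mask and $F^*$ on the \emph{left}, whereas the statement has $Diag(w^{1})\,F\,Diag(y)\,F^*\,Diag(w^{1,*})$. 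Because the DFT matrix is complex symmetric ($F^T=F$, hence $F^*=\overline{F}$), the matrix you actually obtain is the entrywise conjugate (equivalently, the transpose) of the displayed one, and for complex Hermitian matrices these are genuinely different: $x^*\overline{A}\,x=\overline{x}^{\,*}A\,\overline{x}$, so the two quadratic forms agree only after replacing $x$ by $\overline{x}$. A concrete check: for $n=4$, $w^1=(1,1,1,1)$, $y=(1,1,1,-1)$, $x=\tfrac{1}{\sqrt{2}}(1,i,0,0)$, one gets $\scalT{y}{|FDiag(w^1)x|^2}=1$ while $x^*Diag(w^1)FDiag(y)F^*Diag(w^{1,*})x=0$. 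So your argument, carried to completion, proves the proposition with $C$ and $\hat{C}_r$ replaced by their conjugates, not the proposition as displayed; the assertion that the factors come out ``as displayed in the statement'' is exactly the point that fails.

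You should also know that this mismatch is not a repairable slip on your side but an inconsistency inherited from the paper: the displayed ordering is obtained there only through Lemma \ref{lem:Matrix}, whose proof mis-tracks conjugates (the matrix $\mathcal{W}$ in that proof should have entries $\overline{w}_jw_kx_j\overline{x}_k$, and the claimed identity $|FDiag(w)x|^2=diag\{F^*Diag(w^*)xx^*Diag(w)F\}$ fails on the same example above). Your direct computation is the correct one, and the honest statement of the proposition is
\[
\mathcal{E}^{x_0}(x)=x^*\,\mathbb{E}\left(Diag(w^{1,*})F^*Diag(y)FDiag(w^1)-Diag(w^{2,*})F^*Diag(y)FDiag(w^2)\right)x,
\]
and analogously for $\hat{\mathcal{E}}^{x_0}$. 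The downstream damage is limited, since conjugating a Hermitian matrix preserves Hermiticity, eigenvalues, Frobenius and operator norms, so the concentration argument survives verbatim; but eigenvectors get conjugated, so with the matrices as displayed in the paper the top eigenvector of $\hat{C}_r$ estimates $\overline{x}_0$ rather than $x_0$. Either state and prove the corrected form above (which is what your substitution yields), or explicitly record the conjugation relation and carry it through the spectral argument.
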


\noindent The phase retrieval problem from One bit CDP is therefore a maximum eigenvalue problem, that we call \emph{1bitPhase}:

\begin{equation}\label{eq:Fourier1bit} 
\max_{x,||x||=1}x^*\hat{C}_r x\\
\end{equation}
\newline

%
%
\subsection{Theoretical analysis: Correctness in Expectation and Concentration}
In this section we sketch the main steps of the proof of Theorem \ref {theo:mainFourier}. The reader is referred to Section \ref{app:Fourier}  for detailed proofs.
\\
\noindent The following proposition shows that $x_0$ is indeed  the leading  eigen-vector of the expected problem \eqref{eq:expected} with eigen-value $\lambda$, where $\lambda$ is given in \eqref{eq:lambda}. . Moreover the expected matrix $C$ is rank one:
\begin{proposition}[Correctness in Expectation]\label{pro:rankone}
The following statements hold:
 \begin{enumerate}
\item For all $x \in \mathbb{C}^n, ||x||=1$, we have the following equality,
\begin{equation}
\mathcal{E}^{x_{0}}(x)= x^*Cx=\lambda \left| \scalT{x_0}{x}\right|^2.
\end{equation}
\item Let $y=Q^{\theta}_{A}(x_0)$, $C$ is a rank one matrix,
\begin{equation} 
 C=\lambda x_0x_0^*.
\end{equation}
\item $x_0$ is an eigenvector of $C$ with  eigenvalue $\lambda$,   
\begin{equation}
Cx_0=\lambda x_0.
\end{equation}
\item The maximum eigenvector of $C$ is of the form $x_0e^{i\phi}$, where $\phi \in [0,2\pi]$.
The maximum eigenvalue is given by $\lambda$.\\
\end{enumerate}
\end{proposition}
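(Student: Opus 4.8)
The plan is to establish all four statements by first computing the quadratic form $\mathcal{E}^{x_0}(x) = x^* C x$ explicitly and showing it equals $\lambda |\langle x_0, x\rangle|^2$; the remaining three statements then follow as essentially algebraic consequences. The central identity to prove is statement (1), from which (2), (3), and (4) are immediate corollaries. My strategy rests on a change of variables exploiting the rotational invariance of the complex Gaussian masks.

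First I would start from the definition $\mathcal{E}^{x_0}(x) = \mathbb{E}\,\langle y, |F\mathrm{Diag}(w^1)x|^2 - |F\mathrm{Diag}(w^2)x|^2\rangle$ with $y = \mathrm{sign}(\theta(|F\mathrm{Diag}(w^1)x_0|^2) - \theta(|F\mathrm{Diag}(w^2)x_0|^2))$. The key observation is that $F\,\mathrm{Diag}(w)\,x = F\,\mathrm{Diag}(x)\,w$, so each masked diffraction pattern is a linear image of the Gaussian mask $w$. Writing $u^j = F\,\mathrm{Diag}(x)\,w^j$ and similarly $u_0^j = F\,\mathrm{Diag}(x_0)\,w^j$, both the signs $y$ and the evaluated intensities become functions of jointly Gaussian vectors whose covariance structure depends only on $x$ and $x_0$. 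The plan is to exploit the fact that $F\,\mathrm{Diag}(x)$ applied to a standard complex Gaussian yields a complex Gaussian whose per-coordinate distribution is controlled by $\|x\|$ and, crucially, that the overlap between the $x$-system and the $x_0$-system is governed by the inner product $\langle x_0, x\rangle$. The natural route is to diagonalize the joint Gaussian by aligning coordinates with the direction $x_0$ and its orthogonal complement, so that the expectation factors into a rank-one contribution along $x_0$ (scaling like $|\langle x_0, x\rangle|^2$) and a contribution from the orthogonal part that vanishes by an odd-symmetry argument (the sign $y$ is antisymmetric under swapping $w^1 \leftrightarrow w^2$, while the orthogonal fluctuations of $x$ are independent of $x_0$ and average out).

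Once statement (1) holds — that $x^* C x = \lambda |\langle x_0, x\rangle|^2 = \lambda\, x^* (x_0 x_0^*)\, x$ for every unit $x$ — statement (2) follows because two Hermitian matrices with identical quadratic forms on all of $\mathbb{C}^n$ must be equal (apply the polarization identity, using that $C$ is Hermitian by construction since $F^* = F^{-1}$ and $\mathrm{Diag}(y)$ is real diagonal), giving $C = \lambda x_0 x_0^*$. Statement (3) is then immediate: $C x_0 = \lambda x_0 x_0^* x_0 = \lambda x_0$ since $\|x_0\| = 1$. For statement (4), since $C = \lambda x_0 x_0^*$ is rank one with $\lambda > 0$ by the assumption in \eqref{eq:lambda}, its only nonzero eigenvalue is $\lambda$ with eigenspace $\mathrm{span}(x_0)$; the unit maximizers of $x^* C x$ are exactly the vectors $e^{i\phi} x_0$, $\phi \in [0,2\pi]$.

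The main obstacle will be the rigorous evaluation of the Gaussian expectation in statement (1), specifically disentangling the dependence of the sign pattern $y$ (which depends on $x_0$ through a nonlinear, possibly unknown $\theta$) from the quadratic observable (which depends on $x$). The subtlety is that $y$ and the observable share the same masks $w^1, w^2$, so they are not independent; I expect the cleanest argument to reduce the computation to the scalar case coordinate-by-coordinate, where for each frequency $k$ the relevant quantities are jointly Gaussian, and then to invoke the definition of $\lambda$ in \eqref{eq:lambda} precisely as the scalar expectation $\mathbb{E}\,\langle \mathrm{sign}(\theta(E_1) - \theta(E_2)), E_1 - E_2\rangle$ with $E_1, E_2$ the exponential intensities. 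Matching the general $x$-observable to this fixed scalar constant $\lambda$, while extracting the geometric factor $|\langle x_0, x\rangle|^2$, is where the real work lies; the antisymmetry of the sign in the pair $(w^1, w^2)$ is what guarantees the cross terms and the orthogonal-component terms drop out.
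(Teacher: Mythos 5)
Your proposal is correct and is essentially the paper's own argument: your ``alignment with $x_0$ and its orthogonal complement'' is exactly the paper's Gaussian regression decomposition $v=\scalT{x_0}{x}\,u+\sqrt{1-|\scalT{x_0}{x}|^2}\,r$ with $r_k$ independent of $u_k$ coordinate-wise, after which the cross and orthogonal terms vanish in expectation and the surviving term reduces coordinate-by-coordinate to the exponential expectation defining $\lambda$ in \eqref{eq:lambda}, with statements (2)--(4) following algebraically just as you say (indeed more explicitly than the paper, which only proves part (1)). One small correction: the orthogonal term $y_k\bigl(|r^1_k|^2-|r^2_k|^2\bigr)$ is \emph{invariant}, not odd, under swapping $w^1\leftrightarrow w^2$ (both factors flip sign), so the antisymmetry remark does no work; what kills that term is the mechanism you also cite, namely independence of $(r^1_k,r^2_k)$ from $(u^1_k,u^2_k)$ together with $\mathbb{E}|r^1_k|^2=\mathbb{E}|r^2_k|^2$.
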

\noindent The following lemma is a comparison equality that allows us to bound $\left|\left|xx^*-x_0x_0^*\right|\right|^2_{F}$, for any point $x$, by the excess risk $\mathcal{E}^{x_0}(x_0)-\mathcal{E}^{x_0}(x)$:
\begin{lemma}
The following equality holds for all $x\in \mathbb{C}^n$:
$$\mathcal{E}^{x_0}(x_0)-\mathcal{E}^{x_0}(x)=\frac{\lambda}{2} \left|\left|xx^*-x_0x_0^*\right|\right|^2_{F}.$$
\label{lem:ineq}
\end{lemma}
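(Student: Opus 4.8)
The plan is to reduce the claimed equality to a purely algebraic identity about rank-one projectors, using the exact characterization of the risk already established in Proposition \ref{pro:rankone}. Since that proposition gives $\mathcal{E}^{x_0}(x)=x^*Cx=\lambda\left|\scalT{x_0}{x}\right|^2$ with $C=\lambda x_0x_0^*$, I first evaluate the two sides of the excess risk. For the reference point, using $||x_0||=1$, I get $\mathcal{E}^{x_0}(x_0)=\lambda\left|\scalT{x_0}{x_0}\right|^2=\lambda$, and for a general unit vector $\mathcal{E}^{x_0}(x)=\lambda\left|\scalT{x_0}{x}\right|^2$. Hence the left-hand side of the lemma equals $\lambda\bigl(1-\left|\scalT{x_0}{x}\right|^2\bigr)$, and it remains only to show that the right-hand side equals the same quantity, i.e. that $\left|\left|xx^*-x_0x_0^*\right|\right|^2_{F}=2\bigl(1-\left|\scalT{x_0}{x}\right|^2\bigr)$.

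Next I would expand the Frobenius norm. Since $xx^*$ and $x_0x_0^*$ are Hermitian, their difference is Hermitian, so
$$\left|\left|xx^*-x_0x_0^*\right|\right|^2_{F}=Tr\!\left((xx^*-x_0x_0^*)^2\right)=Tr(xx^*xx^*)-2Tr(xx^*x_0x_0^*)+Tr(x_0x_0^*x_0x_0^*).$$
I would then evaluate each trace by pulling out the scalar factors: $Tr(xx^*xx^*)=(x^*x)Tr(xx^*)=||x||^4=1$ and likewise $Tr(x_0x_0^*x_0x_0^*)=||x_0||^4=1$, using the unit-norm constraints. For the cross term, $Tr(xx^*x_0x_0^*)=(x^*x_0)(x_0^*x)=\left|\scalT{x_0}{x}\right|^2$, since $x^*x_0$ and $x_0^*x$ are complex conjugates whose product is the squared modulus of the inner product (independent of the dot-product convention).

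Combining these, $\left|\left|xx^*-x_0x_0^*\right|\right|^2_{F}=1-2\left|\scalT{x_0}{x}\right|^2+1=2\bigl(1-\left|\scalT{x_0}{x}\right|^2\bigr)$, and multiplying by $\tfrac{\lambda}{2}$ recovers exactly $\mathcal{E}^{x_0}(x_0)-\mathcal{E}^{x_0}(x)$, which closes the argument. There is essentially no genuine obstacle here: the entire content is the trace bookkeeping in the middle paragraph, and the only points requiring care are the use of Hermiticity to drop the conjugate transpose in the Frobenius norm and the consistent handling of the inner-product convention so that the two cross terms each contribute $\left|\scalT{x_0}{x}\right|^2$. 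The real work has already been done upstream in Proposition \ref{pro:rankone}, which supplies the rank-one, eigenvalue-$\lambda$ structure of $C$; this lemma merely translates that structure into a metric statement on the lifted variables $xx^*$, so that later a concentration bound on the excess risk can be converted directly into the Frobenius-norm recovery guarantee of Theorem \ref{theo:mainFourier}.
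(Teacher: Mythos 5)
Your proof is correct and follows essentially the same route as the paper: it invokes Proposition \ref{pro:rankone} to write the excess risk as $\lambda\bigl(1-\left|\scalT{x_0}{x}\right|^2\bigr)$ and then identifies this with $\frac{\lambda}{2}\left|\left|xx^*-x_0x_0^*\right|\right|^2_{F}$ via the rank-one projector identity, the only difference being that you spell out the trace expansion which the paper asserts in a single line. Note (as you implicitly do) that the identity requires $||x||=1$, a restriction the paper also imposes in its proof despite the lemma's ``for all $x\in\mathbb{C}^n$'' phrasing.
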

\noindent The rest of the proof follows from empirical processes theory \cite{Ledoux} and concentration inequalities \cite{tropp2012user}.


\begin{proposition}[Concentration]\label{pro:conc}
Let $$\hat{x}_r=\arg\max_{x,||x||=1}x^*\hat{C}_rx,$$
The following inequalities hold:
\begin{enumerate}
\item $$\frac{\lambda}{2}\left|\left|\hat{x}_r\hat{x}_r^*-x_0x_0^*\right|\right|^2_{F}\leq 2\left|\left|\hat{C}_r-C\right|\right|.$$
\item $$\text{For } 0<\epsilon<1,  r \geq c\frac{\log^3n}{\epsilon^2},\quad \left|\left|\hat{C}_r-C\right|\right| \leq \epsilon \text{ with probability at least } 1-O(n^{-2}).$$
\end{enumerate}
\end{proposition}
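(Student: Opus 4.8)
The plan is to prove the two claims separately: the first is a deterministic consequence of the optimality of $\hat x_r$ combined with Lemma \ref{lem:ineq}, while the second is a matrix concentration estimate obtained from a truncated matrix Bernstein inequality.

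For the first inequality, recall from Proposition \ref{pro:rankone} that $C=\lambda x_0x_0^*$ and from Lemma \ref{lem:ineq} that
$$\frac{\lambda}{2}\,\|\hat x_r\hat x_r^*-x_0x_0^*\|_F^2=\mathcal E^{x_0}(x_0)-\mathcal E^{x_0}(\hat x_r)=x_0^*Cx_0-\hat x_r^*C\hat x_r.$$
I would insert and subtract the empirical quadratic form and split into three terms,
$$x_0^*Cx_0-\hat x_r^*C\hat x_r=x_0^*(C-\hat C_r)x_0+\big(x_0^*\hat C_rx_0-\hat x_r^*\hat C_r\hat x_r\big)+\hat x_r^*(\hat C_r-C)\hat x_r.$$
The middle bracket is $\le 0$ because $\hat x_r$ maximizes $x^*\hat C_rx$ over the unit sphere and $\|x_0\|=1$, while each of the two remaining terms is bounded in absolute value by $\|\hat C_r-C\|$ since $\|x_0\|=\|\hat x_r\|=1$. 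Summing yields $\frac{\lambda}{2}\|\hat x_r\hat x_r^*-x_0x_0^*\|_F^2\le 2\|\hat C_r-C\|$, which is claim (1).

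For the second inequality, write $\hat C_r-C=\frac1r\sum_{i=1}^r Z_i$ with the i.i.d. centered Hermitian summands
$$Z_i=Diag(w^1_i)FDiag(y_i)F^*Diag(w^{1,*}_i)-Diag(w^2_i)FDiag(y_i)F^*Diag(w^{2,*}_i)-C,\qquad \mathbb E Z_i=0,$$
which are Hermitian because $y_i$ is real and $F$ is unitary. These matrices are unbounded, since their norms grow with the Gaussian mask magnitudes, so a direct matrix Bernstein bound does not apply and I would truncate per sample. Let $\mathcal G_i=\{\max_k|w^1_i[k]|^2\le\beta\log n,\ \max_k|w^2_i[k]|^2\le\beta\log n\}$ and $\mathcal G=\cap_i\mathcal G_i$. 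Since each coordinate satisfies $|w[k]|^2\sim Exp(1)$, one has $\mathbb P(|w[k]|^2>\beta\log n)=n^{-\beta}$, so a union bound over the $2rn$ coordinates (with $r=\mathrm{poly}(\log n)$ and $\beta$ a sufficiently large absolute constant) gives $\mathbb P(\mathcal G^c)=O(n^{-2})$. On $\mathcal G_i$, unitary invariance of $F$ and $\|Diag(y_i)\|=1$ give $\|Diag(w^j_i)FDiag(y_i)F^*Diag(w^{j,*}_i)\|\le\max_k|w^j_i[k]|^2\le\beta\log n$, hence $\|Z_i\|\le 2\beta\log n+\lambda\le L$ with $L=c_1\log n$; and the matrix variance is crudely controlled by $\|\mathbb E Z_i^2\|\le\mathbb E\|Z_i\|^2\le\sigma^2$ with $\sigma^2=c_2\log^2 n$ on the truncated event.

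Finally I would apply the Hermitian matrix Bernstein inequality \cite{tropp2012user} to the truncated, recentered summands $\tilde Z_i=Z_i\mathbf 1_{\mathcal G_i}-\mathbb E[Z_i\mathbf 1_{\mathcal G_i}]$ (noting that $\tilde Z_i=Z_i$ for every $i$ on $\mathcal G$, and that the recentering shifts the mean by an exponentially small, hence negligible, amount), obtaining
$$\mathbb P\!\left(\Big\|\tfrac1r\sum_{i=1}^r\tilde Z_i\Big\|\ge\epsilon\right)\le 2n\,\exp\!\left(\frac{-r\epsilon^2/2}{\sigma^2+L\epsilon/3}\right).$$
For $0<\epsilon<1$ the variance term $\sigma^2\sim\log^2 n$ dominates $L\epsilon\sim\log n$, so the exponent is $\gtrsim r\epsilon^2/\log^2 n$; choosing $r\ge c\,\log^3 n/\epsilon^2$ makes the right-hand side $\le n^{-2}$, and intersecting with $\mathcal G$ yields $\|\hat C_r-C\|\le\epsilon$ with probability $1-O(n^{-2})$. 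The $\log^3 n$ rate then decomposes as $\log^2 n$ from the variance (one $\log n$ per Gaussian mask magnitude) times an extra $\log n$ from the dimensional factor $2n$ in the Bernstein tail. The main obstacle is precisely the unboundedness of the summands: it forces the truncation and the control of the truncation-induced mean shift, and it makes the variance estimate delicate because the sign pattern $y_i$ couples $w^1_i$ and $w^2_i$. I expect the crude bound $\sigma^2\le\mathbb E\|Z_i\|^2$ on $\mathcal G$ to be adequate here, exactly because one already pays a $\log n$ factor for boundedness.
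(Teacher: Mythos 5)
Your proof is correct and takes essentially the same approach as the paper's: claim (1) is obtained by the same excess-risk decomposition (optimality of $\hat{x}_r$ plus Lemma \ref{lem:ineq}), and claim (2) by the same device of truncating the Gaussian mask entries at level $O(\log n)$ and applying a matrix concentration inequality from \cite{tropp2012user}, your Bernstein bound with the crude variance estimate $\sigma^2\leq L^2\sim \log^2 n$ being equivalent to the paper's matrix Hoeffding bound with $\Delta\sim\log n$. One small imprecision: the truncation-induced mean shift you set aside is polynomially small, of order $n^{-(\beta-1)/2}\log n$ (the paper controls it explicitly via Cauchy--Schwarz and a maximum-of-exponentials bound), not exponentially small, but it is indeed negligible for $\beta$ large enough, so your conclusion stands.
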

\begin{proof}[Proof of Theorem \ref {theo:mainFourier}]
The proof of Theorem  \ref {theo:mainFourier} follows form a simple combination of Proposition \ref{pro:rankone}, Lemma \ref{lem:ineq} and Proposition \ref{pro:conc}.
\end{proof}
Proofs of Corollaries \ref{theo:Noise} and \ref{cor:Distortion}  are simple consequences of  Theorem \ref{theo:mainFourier} and Lemma \ref{lem:SNR}, where we specify the value of $\lambda$ for each model.
\begin{lemma}
The values of $\lambda$ for different observation models $\theta$ are given in the following:
\begin{enumerate}
\item Noiseless setup: $\theta(z)=z $,$ \quad \lambda=1$. 
\item Noisy setup: $\theta(z)=z+\nu, \nu$ is an exponential random variable with variance $\sigma,\quad \lambda=\frac{1+2\sqrt{\sigma}}{(1+\sqrt{\sigma})^2} $. 
\item Distortion setup: $\theta(z)=\tanh(\alpha z)$, where $\alpha >0$, $\quad \lambda=\mathbb{E}\left(sign\left(1-\tanh(\alpha E_1)\tanh(\alpha E_2)\right)|E_1-E_2|\right)$ is a decreasing function in $\alpha$.
\end{enumerate}
\label{lem:SNR}
\end{lemma}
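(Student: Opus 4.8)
The plan is to collapse the $n$-dimensional definition \eqref{eq:lambda} of $\lambda$ to a single scalar expectation, and then to evaluate that scalar expectation separately for each $\theta$. The reduction is immediate: $\mathrm{sign}(\cdot)$, $\theta(\cdot)$ and the difference all act coordinatewise, and the coordinates of $E_1,E_2$ are i.i.d. with common marginal $\tfrac1n\mathrm{Exp}(1)$, so by linearity of expectation
$$\lambda = n\,\mathbb{E}\big[\text{sign}(\theta(e_1)-\theta(e_2))\,(e_1-e_2)\big],$$
where $e_1,e_2$ are i.i.d. scalars of law $\tfrac1n\mathrm{Exp}(1)$. Writing $e_i=\tfrac1n t_i$ with $t_i\sim\mathrm{Exp}(1)$ and exploiting the scale invariance $\text{sign}(cu)=\text{sign}(u)$ for $c>0$, the factor $n$ and the $\tfrac1n$ coming from $(e_1-e_2)$ cancel, so everything is governed by the symmetric Laplace variable $D:=t_1-t_2$, whose density is $\tfrac12 e^{-|d|}$ and which satisfies $\mathbb{E}|D|=1$ and $\mathbb{E}[D]=0$. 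For the noiseless model $\theta(z)=z$ this already finishes the computation: $\text{sign}(\theta(e_1)-\theta(e_2))=\text{sign}(D)$ and $\lambda=\mathbb{E}[\text{sign}(D)\,D]=\mathbb{E}|D|=1$.

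For the noisy model $\theta(z)=z+\nu$ the two patterns carry independent noises, so after the same rescaling $\lambda=\mathbb{E}[\text{sign}(D+N)\,D]$, where $N$ is the (symmetric) Laplace variable built from the difference of the two independent exponential noises; normalizing the signal Laplace $D$ to unit scale, the noise Laplace $N$ then has scale parameter $\sqrt{\sigma}$, with CDF $F_N$. The key manipulation is to integrate out the noise first by conditioning on $D$: using symmetry of $N$, $\mathbb{E}_N[\text{sign}(D+N)\mid D]=2F_N(D)-1$, hence
$$\lambda=\mathbb{E}_D\big[D\,(2F_N(D)-1)\big]=\mathbb{E}|D|-\mathbb{E}\big[|D|\,e^{-|D|/\sqrt{\sigma}}\big].$$
The remaining integral is elementary since $|D|\sim\mathrm{Exp}(1)$: $\mathbb{E}\big[|D|e^{-|D|/\sqrt{\sigma}}\big]=\int_0^\infty d\,e^{-d(1+1/\sqrt{\sigma})}\,\mathrm{d}d=\sigma/(1+\sqrt{\sigma})^2$, which yields $\lambda=1-\sigma/(1+\sqrt{\sigma})^2=\frac{1+2\sqrt{\sigma}}{(1+\sqrt{\sigma})^2}$. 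I expect this conditioning-then-integrating step — reducing a joint expectation over signal and noise to one Laplace integral through $F_N$ — to be the main content of the lemma; the one subtlety is bookkeeping the scale of the noise Laplace so that it is faithfully expressed through the variance parameter $\sigma$.

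For the distortion model $\theta(z)=\tanh(\alpha z)$ I would start from the hyperbolic subtraction identities $\tanh a-\tanh b=\frac{\sinh(a-b)}{\cosh a\cosh b}$ and $1-\tanh a\tanh b=\frac{\cosh(a-b)}{\cosh a\cosh b}$, which (both denominators being positive) let one rewrite the quantized sign and put $\lambda$ in the stated form $\lambda(\alpha)=\mathbb{E}\big[\,|E_1-E_2|\,\text{sign}(1-\tanh(\alpha E_1)\tanh(\alpha E_2))\,\big]$. The monotone dependence on $\alpha$ I would establish by differentiating under the expectation (justified by dominated convergence, the integrand being bounded by the integrable envelope $|E_1-E_2|$) and signing the derivative, or equivalently by a monotone coupling in $\alpha$. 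This monotonicity assertion is the corner I expect to require the most care, the rest of the lemma being routine scalar integration against the exponential density once the reduction and the noise integral above are in place.
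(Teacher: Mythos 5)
Your proposal is correct and mirrors the paper's own proof step for step: the reduction of \eqref{eq:lambda} to a scalar expectation, the noiseless computation $\mathbb{E}[\mathrm{sign}(D)D]=\mathbb{E}|D|=1$, the conditioning-on-the-signal-Laplace argument with the noise integrated out through its CDF (the paper's $\mathbb{E}_L\left[(1-2F_N(-L))L\right]=\mathbb{E}\left[|L|(1-e^{-\gamma|L|})\right]=1-\frac{1}{(1+\gamma)^2}$ is exactly your computation written in terms of $\gamma=1/\sqrt{\sigma}$), and the hyperbolic-identity factorization of the sign in the distortion case (the paper uses $\tanh a-\tanh b=\tanh(a-b)(1-\tanh a\tanh b)$, which is equivalent to your $\sinh/\cosh$ identities). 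The only caveat applies to the paper as well: neither argument proves that $\lambda(\alpha)$ is decreasing in $\alpha$ --- the paper asserts this without proof, and your differentiation plan is moot because, $E_1,E_2$ being nonnegative, $\mathrm{sign}(1-\tanh(\alpha E_1)\tanh(\alpha E_2))=1$ almost surely, so the displayed $\lambda(\alpha)$ is in fact identically $1$.
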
 

\section{From One bit Coded Diffraction Patterns to Super-Resolution}\label{Sec:super}

We turn now to the problem of recovering a signal from its  lower end of  power spectra. 
As discussed earlier this is a problem of practical interest, as the resolution of an optical system, for instance a lens $h$  is limited by the Fraunhofer diffraction limit $f_c$.
The super-resolution factor of $h$ is therefore defined as $SRF=\frac{n}{2f_c+1}$.
In our setup the modulated signal diffracts through a lens characterized by a PSF $h$ and a cut-off frequency $f_{c}$. Hence instead of observing  the power spectra of the modulated signal $Diag(w)x_0$ we observe the power spectra of a lower resolution signal namely $h\star (Diag(w)x_0)$.
Let $\hat{u}$ be the Fourier transform of $u \in \mathbb{C}^n$, $\hat{u}=Fu$.
Note that by the properties of the Fourier transform we have:
\begin{equation}
F(h\star(Diag(w)x_0))=\hat{h}\odot \widehat{Diag(w)x_0}
\end{equation}
Hence we observe :
\begin{equation}
|F(h\star (Diag(w)x_0))|^2=|\hat{h}|^2\odot |\widehat{Diag(w)x_0}|^2
\end{equation}
In this section we re-index $k$ for convenience $-\frac{n}{2}\leq k\leq \frac{n}{2}-1$. We use also the following convention $sign(0)=1$, this choice is arbitrary.
Note that due to the diffraction limit $f_{c}$, $h$ satisfies:
$$|\hat{h}|^2_{k}=0 ,\text{ for } k \notin  \llbracket-f_c, f_c\rrbracket .$$ 
Hence our phase-less measurement are missing in high frequencies ranges:
\begin{equation}
b_{k}=|\hat{h}|^2_{k}|\widehat{Diag(w)x_0}|^2_{k}\quad k \in \llbracket-f_c, f_c\rrbracket \text{ and } b_{k}= 0 \text { elsewhere}.
\label{eq:SR}
\end{equation}
The one bit coded diffraction patterns are the defined by comparing pairs of low resolution coded diffraction patterns $(b^1,b^2)$ defined as in \eqref{eq:SR} for two independent modulations $w^1,w^2\sim \mathcal{C}\mathcal{N}(0,I_n)$. 
$$y_{k}=sign\left(b^1_k-b^2_k\right), k=1\dots n,$$
Note that $y_k= 0$ if $k \notin \llbracket-f_c, f_c\rrbracket $. For  all $k \in \llbracket-f_c, f_c\rrbracket $, we have  $(d^1_{k}=|FDiag(w^1)x_0|^2_{k},d^2_{k}=|FDiag(w^2)x_0|^2_{k})\sim (\frac{1}{n}E^1_{k},\frac{1}{n}E^2_{k})$, where $E^1_k, E^2_k \text{ iid } Exp(1)$.
We are now ready to compute the value of $\lambda$ corresponding to that observation model:
\begin{eqnarray*}
\lambda&=&\mathbb{E}(\scalT{y}{|FDiag(w^1)x_0|^2-|FDiag(w^2)x_0|^2})\\
&=& \sum_{k=-\frac{n}{2}}^{\frac{n}{2}-1} \mathbb{E}\left(sign(b^1_k-b^2_{k})(d^1_{k}-d^2_{k})\right).
\end{eqnarray*}
Note that: 
\begin{equation}
\text{For } k \in \llbracket-f_c, f_c\rrbracket \quad sign(b^1_k-b^2_{k})=sign\left(|\hat{h}|^2_{k}d^1_k-|\hat{h}|^2_{k}d^2_k\right)=sign\left(|\hat{h}|^2_k(d^1_k-d^2_k)\right)=sign(d^1_k-d^2_k).
\label{eq:magic}
\end{equation}
it follows that:
\begin{eqnarray*}
\lambda&=& \sum_{k \in \llbracket-f_c, f_c\rrbracket  }\mathbb{E}\left(sign(d^1_k-d^2_k)( d^1_k-d^2_k)\right)\\
&=& \frac{1}{n} \sum_{k \in \llbracket-f_c, f_c\rrbracket  }\mathbb{E} \left(sign(E^1_k-E^2_k)(E^1_k-E^2_k)\right)\\
&=& \frac{2f_c+1}{n} \mathbb{E}(|E^1-E^2|)\\&=&\frac{2f_c+1}{n}.\\
&=& \frac{1}{SRF}.
\end{eqnarray*}
Thus we find that $\lambda$ is the inverse of the super-resolution factor, which confirms our findings on $\lambda$ as a quality factor:
$\lambda$ is small when the cut off frequency is small.
Interestingly $\lambda$ depends only on the domain of $\hat{h}$, regardless the shape or the values of the corresponding $PSF$, and this is due to the quantization step in  \eqref{eq:magic}.
This the main reason behind the feasibility of blind deconvolution within our framework.
 Corollary \ref{cor:blur} follows simply from Theorem \ref{theo:mainFourier} setting $\lambda=\frac{1}{SRF}$.
\section{Algorithms and Computations}\label{sec:comp}
\subsection{One Bit Phase Retrieval Algorithm}
A straight forward computation of the maximum eigenvector of $\hat{C}_r$
is rather expensive.
Using the power method, and the Fast Fourier transform we get a computational complexity of $$O(n\log^4(n)).$$
Note by FFT the Fast Fourier Transform and iFFT the Inverse Fast Fourier Transform.
For the power method we need to compute $\hat{C}_ru$:
\begin{eqnarray*}
\hat{C}_ru &=&\frac{1}{r} \sum_{i=1}^r Diag(w^1_i)FDiag(y_i)F^*Diag(w^{1,*}_i)u-Diag(w^2_i)FDiag(y_i)F^*Diag(w^{2,*}_i)u  \\
&=& \frac{1}{r} \sum_{i=1}^r w^1_i\odot FFT\left(y_i\odot iFFT(w^{1,*}_i \odot u)\right)- w^2_i \odot FFT\left(y_i\odot iFFT(w^{2,*}_i \odot u)\right)
\end{eqnarray*}

For each iteration we need to compute the FFT for each pairs of modulations, that costs $O(n\log(n))$ per pair.
We have $O(\log^3n)$ pairs, hence a total of $O(n\log^4(n))$ operations per iteration of the power method.

\begin{algorithm}[H]
 \begin{algorithmic}[1]
 \Procedure{FastFourier1bitCDPPhasePower}{$\{w^1_i,w^2_i\}_{i=1\dots r},y=(y_1\dots y_r),\epsilon$}
 \State  Initialize $r_0$ at random, $j=1$.
 \While{$|| u_j-u_{j-1}|| >\epsilon$ or $j=1$}
 \State $u^1_j\gets \frac{1}{r}\sum_{i=1}^r  w^1 _{i} \odot \text{FFT}\left(y_i\odot \text{iFFT}\left(w^{1,*}_i \odot u_{j-1}\right) \right)$
  \State $u^2_j\gets \frac{1}{r}\sum_{i=1}^r  w^2 _{i} \odot \text{FFT}\left(y_i\odot \text{iFFT}\left(w^{2,*}_i \odot u_{j-1}\right) \right)$
  \State $u_{j}\gets  u^1_j- u^2_j$
 \State $\hat{\lambda} \gets ||u_j||$
 \State $r_j\gets \frac{u_j}{\hat{\lambda}}$
  \State $j\gets j+1$
 \EndWhile
 \State \textbf{return} $\left(\hat{\lambda},u\right)$ \Comment{$(\hat{\lambda},u)$ is an estimate of $(\lambda,x_0)$.}
 \EndProcedure
 \end{algorithmic}
 \caption{FastFourier1bitCDPPhasePower}
 \label{Fast1bitPhaseMaskPower}
\end{algorithm}

\subsection{SubExp Initialization}

In this spirit of the initialization \emph{SubExpPhase} proposed in \cite{AM} for Gaussian measurements we propose the following initialization from coded diffraction patterns.
Let $b_{i}=|FDiag(w_i)x_0|^2, i=1\dots L$, where $w_i\sim \mathcal{C}\mathcal{N}(0,I_n)$ iid.
Define $$\hat{C}_{L}=\frac{1}{L}\sum_{i=1}^L Diag(w_i)FDiag(b_i)F^* Diag(w_i^*),$$
It is possible to show that: 
$$\mathbb{E}(\hat{C}_{L})= x_0x_0^*+I_n,$$
we omit the proof and refer the reader to Lemma 3.1 in  \cite{FMasks}  for a similar argument.
As in the case of the Gaussian measurements the sample complexity of \emph{SubExpPhase} is higher than One bit Phase Retrieval.
An inspection of the proof of one bit phase retrieval shows that $O(\log^5n /\epsilon^2)$ modulations  are needed for an $\epsilon$ recovery in \emph{SubExpPhase}(we omit the details and show that this is indeed encountered in practice in Section \ref{sec:PT}).
Let $\hat{x}_L$ be the maximum eigenvector of $\hat{C}_{L}$, $\hat{x}_L$ is a good proxy of $x_0$, hence the following algorithm that proceeds by the power method in order to find the maximum eigenvector:
\begin{algorithm}[H]
 \begin{algorithmic}[1]
 \Procedure{FastFourierSubExpCDPPhasePower}{$\{w^1_i,w^2_i\}_{i=1}^r,b=((b^1_i,b^2_i)\dots (b^2_r ,b^2_{r})),\epsilon$}
 \State  Initialize $r_0$ at random, $j=1$.
 \While{$|| u_j-u_{j-1}|| >\epsilon$ or $j=1$}
 \State $u_j\gets \frac{1}{r}\sum_{i=1}^r  w^1 _{i} \odot \text{FFT}\left(b^1_i\odot \text{iFFT}\left(w^{1,*}_i \odot u_{j-1}\right) \right)+ w^2 _{i} \odot \text{FFT}\left(b^2_i\odot \text{iFFT}\left(w^{2,*}_i\odot u_{j-1}\right) \right)$
 \State $\hat{\lambda} \gets ||u_j||$
 \State $r_j\gets \frac{u_j}{\hat{\lambda}}$
  \State $j\gets j+1$
 \EndWhile
 \State \textbf{return} $\left(\hat{\lambda},u\right)$ \Comment{$(\hat{\lambda},u)$ is an estimate of $(\lambda,x_0)$.}
 \EndProcedure
 \end{algorithmic}
 \caption{FastFourierSubExpCDPPhasePower}
 \label{FastFourierSubExpCDPPhasePower}
\end{algorithm}

\subsection{Alternating Minimization Initialized with the One Bit Solution or the solution \emph{SubExpPhase}}\label{sec:AM}

Given the one bit solution or the solution of \emph{SubExpPhase} we can refine the solution, by running the alternating minimization procedure on the actual coded diffraction patterns initialized with the one bit solution or with the solution of \emph{SubExpPhase} as follows:
\begin{algorithm}[H]
 \begin{algorithmic}[1]
 \Procedure{AltMinPhase}{${w^1_1,w^2_2,\dots w^1_r ,w^2_{r}},b=(\sqrt{b^1_1},\sqrt{b^2_1}\dots \sqrt{b^1_r},\sqrt{b^2_r}),\epsilon$}
 \State \textbf {Initialize} $x \gets$ \textsc{FastFourier1bitCDPPhasePower}{($\{w^1_i,w^2_i\}_{i=1\dots r},y=(y_1\dots y_r),\epsilon$)}
 or $x\gets$ \textsc{FastFourierSubExpCDPPhasePower}{$\{w^1_i,w^2_i\}_{i=1\dots r},b=(b_1\dots b_2r),\epsilon$}
 \For{$k=1\dots t_0$} \Comment{$t_0$ is the number of iterations. }
  \State $(u^1_1,u^2_1,\dots u^1_r, u^2_{r}) \gets \left( Ph(FFT(w^1_{1}\odot x)),Ph(FFT(w^2_{1}\odot x)),\dots  \right)$
 \State  $x\gets \frac{1}{\sum_{s=1}^{r} |w^1_s|^2+|w^2_s|^2} \odot\left(\sum_{i=1}^{r}   w^{1*}_{i} \odot iFFT(\sqrt{b^1_i}\odot u^1_i)+w^{2*}_{i} \odot iFFT(\sqrt{b^2_i}\odot u^2_i)\right)$
 \EndFor
 \State \textbf{return} $x$ 
 \EndProcedure
 \end{algorithmic}
 \caption{FastAltMin+OneBitCDP initialization}
 \label{AltMinPhase}
\end{algorithm}
The computational complexity of Algorithm \ref{AltMinPhase} is $O(n\log^4n)$ by iteration.
\noindent The full analysis of this algorithm is subject to future research.

\section{Numerical Experiments }\label{num}
\subsection{One Dimensional Simulations}
In this section we test our algorithms on one dimensional simulated signals. 
We consider $x_0 \in \mathbb{R}^n$, such that $x_0$ is a gaussian vector, $x_0 \sim \mathcal{N}(0,I_n)$, we set $n=8000$.
We first study the phase  transition of One Bit Phase Retrieval and compare it to its counterpart in \emph{SubExpPhase} in Section \ref{sec:PT}. 
We then show in Section \ref{sec:Rob} the robustness of One Bit phase retrieval, to noise, distortion and blur .
 
\subsubsection{Phase Transition of One Bit Phase Retrieval Versus SubExpPhase}\label{sec:PT}
We consider in this section  how the performance of Algorithm \ref{Fast1bitPhaseMaskPower} for  \emph{1bitPhase} and  Algorithm \ref{FastFourierSubExpCDPPhasePower}  for \emph{SubExpPhase} depend on the number of measurements.  We consider $50$ trials , where  we generate  pairs of CDP $(b^1_{\ell},b^2_{\ell}), \ell=1\dots r$ according to Equation \eqref{eq:pairsCDP} and the one bit CDP $y_{\ell}, \ell=1\dots r$, according to \eqref{eq:meas}, where we set $\theta$ to be the identity (noiseless model).\\
At each trial we generate a set of new random modulation, and run Algorithms \ref{Fast1bitPhaseMaskPower}   \emph{(1bitPhase)} and  \ref{FastFourierSubExpCDPPhasePower}  \emph{(SubExpPhase)}.\\
In Figure \ref{fig:PT}, we report the empirical probability of success of each algorithm for increasing number of measurements $r$. We have a success if the error $1-|\scalT{\hat{x}_r}{x_0}|^2< \tau$. We set $n=8000,\tau=0.07$ in this experiment. 
We see that the phase transition for \emph{1bitPhase} happens earlier than the one for \emph{SubExpPhase}, which confirms that One Bit Phase retrieval allows lower sample complexity for a given precision.
\begin{figure}[]
\begin{center}
\includegraphics[width=0.45\textwidth]{./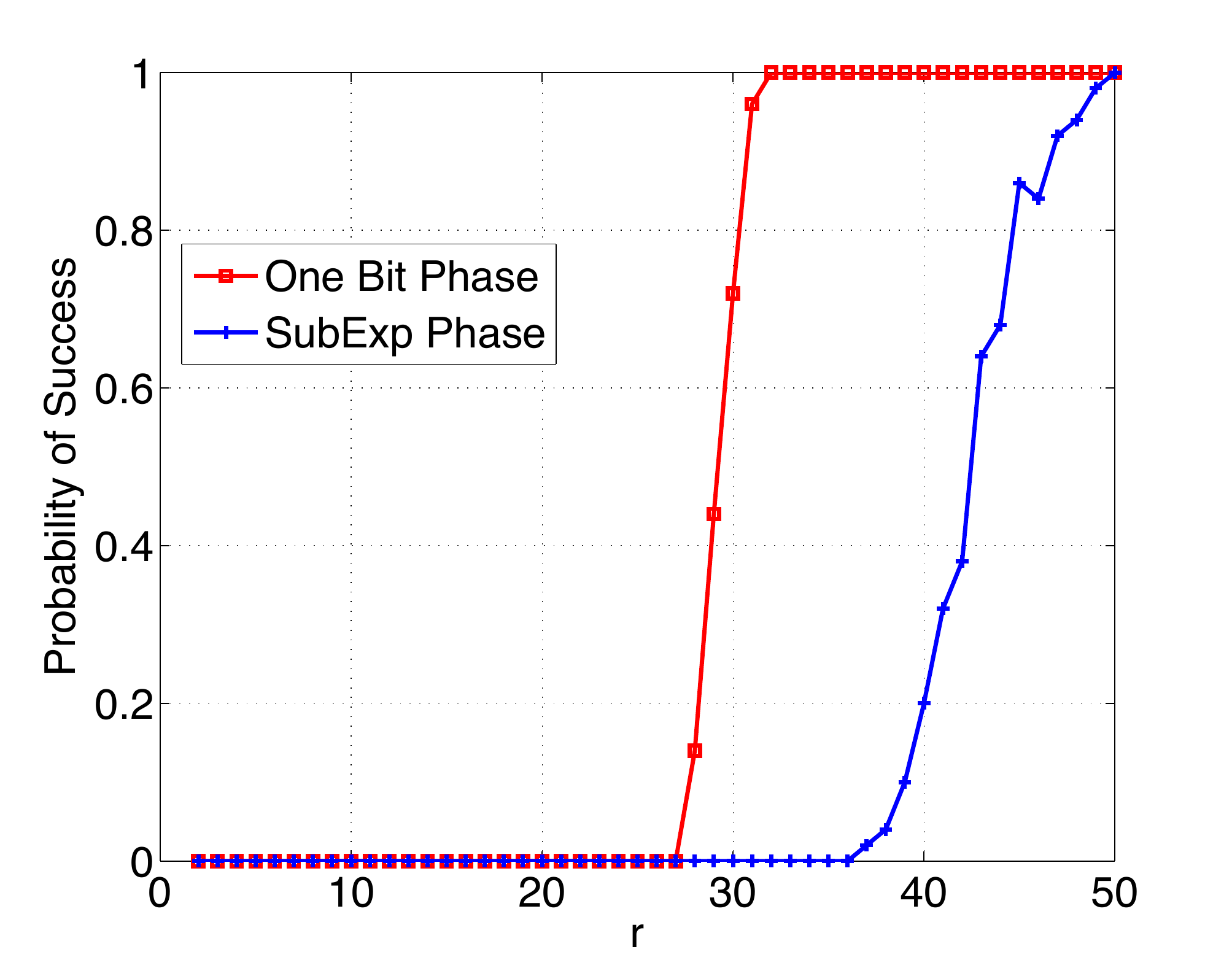}
\end{center}
\caption{Phase transition comparison of one bit Phase retrieval and SubExp phase Retrieval. }
\label{fig:PT}
\end{figure}
\subsubsection{Robustness }\label{sec:Rob}
We test the robustness of Algorithm \ref{Fast1bitPhaseMaskPower} to noise, distortion, and blur.
For this end we generate measurements according to the noise model given in Equation  \eqref{eq:1bitCDPNoise}, for increasing noise level and   for $r=10$ and $r=20$. 
We see in Figure \ref{fig:subfigNoise} that that the  recovery error of One Bit Phase Retrieval increases gracefully with the level of noise and as more measurements are available the error of recovery drops down.  
In the setup of distorted measurements we generate CDP according to Equation \eqref{eq:1bitCDPSR}, for different distortion levels $\alpha$, for different number of measurements $r=10$, and $r=20$.
We see  in Figure \ref{fig:subfigDist} that the recovery is still possible thanks to the robustness of One Bit Phase Retrieval, despite the severe non linearity.    
Finally we test the robustness of \emph{1bitPhase} to a gaussian blur with increase aperture. We generate our CDP according to the blurry model in Equation \eqref{eq:1bitCDPSR}, for $r=10$ and $r=20$.
We see in Figure \ref{fig:subfigSup} that phase retrieval ad Super-Resolution are possible with One Bit Phase Retrieval and that the error increases gracefully also with the size of the aperture and drops as more measurements are available.  
\begin{figure}[H]
\subfigure[Robustness to Exponential noise: Error $1-|\scalT{x}{x_0}|^2$ versus $\mu$ the mean of the Exponential noise]{
\includegraphics[width=0.3\textwidth]{./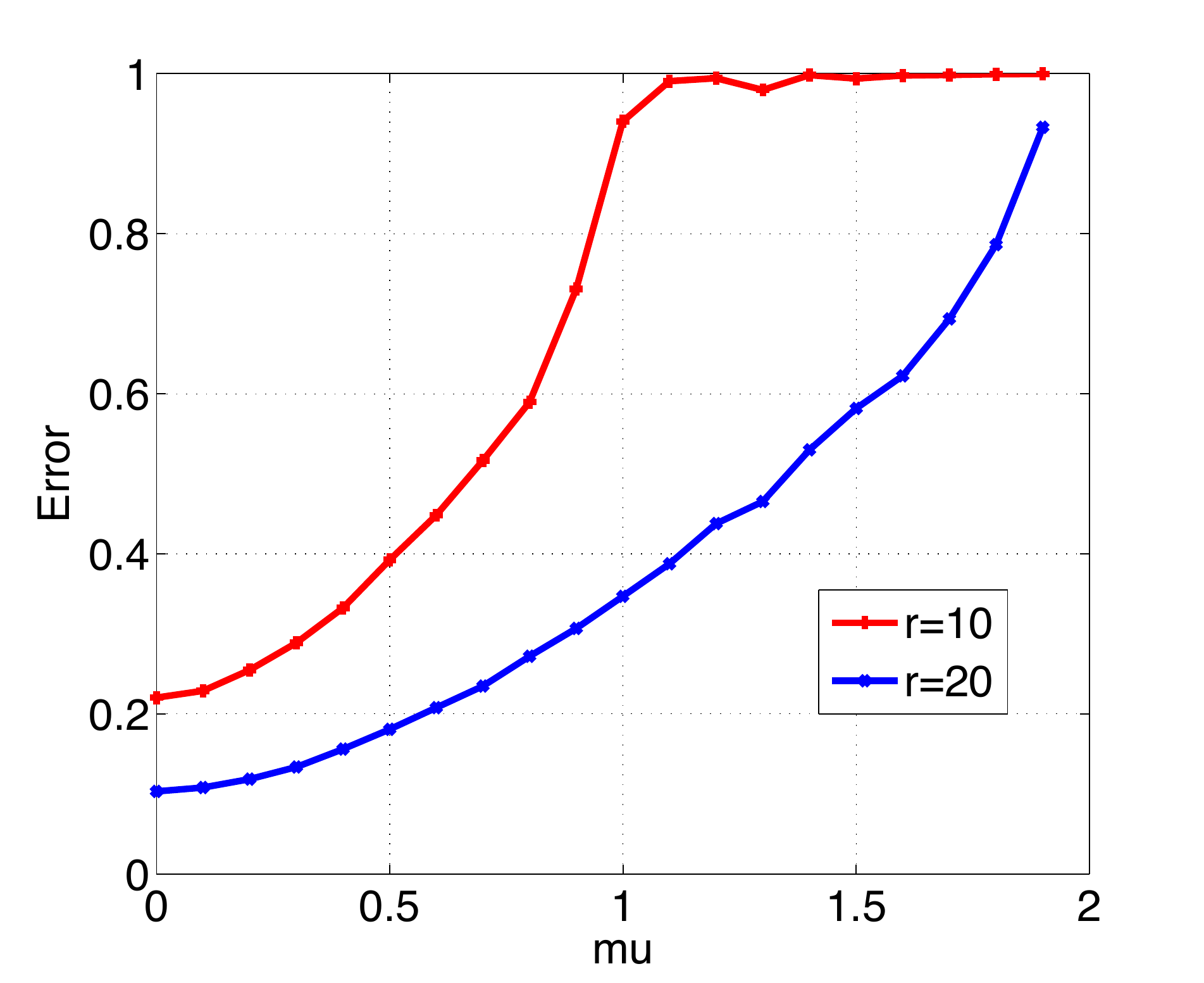}
\label{fig:subfigNoise}
}
\subfigure[Robustness to distortion: Error $1-|\scalT{x}{x_0}|^2$ versus $\alpha$ the size of the clipping.]{
\includegraphics[width=0.3\textwidth]{./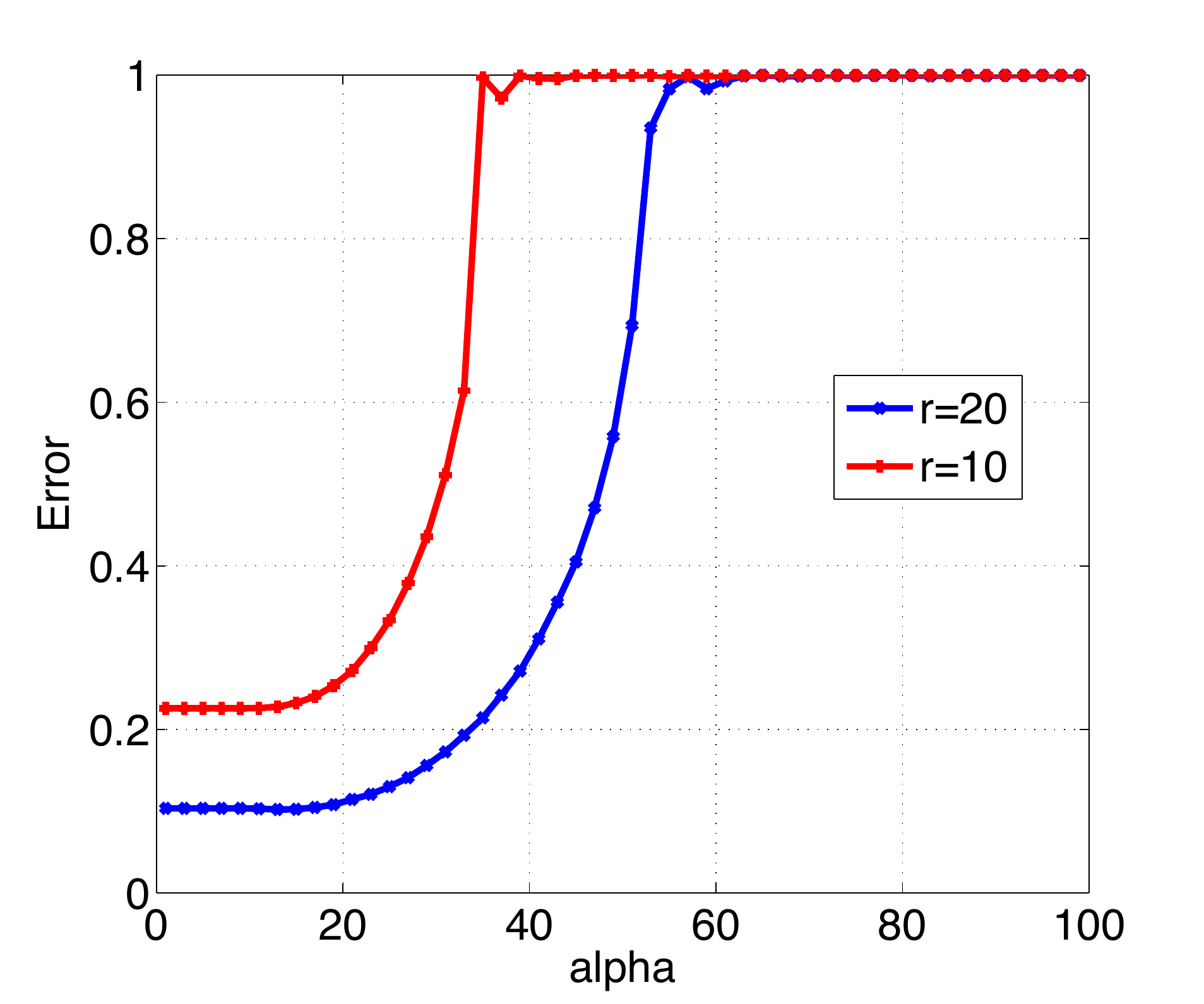}
\label{fig:subfigDist}
}  
\subfigure[Robustness to blur: Error $1-|\scalT{x}{x_0}|^2$ versus $\sigma$ the aperture of a gaussian filter.]{
\includegraphics[width=0.3\textwidth]{./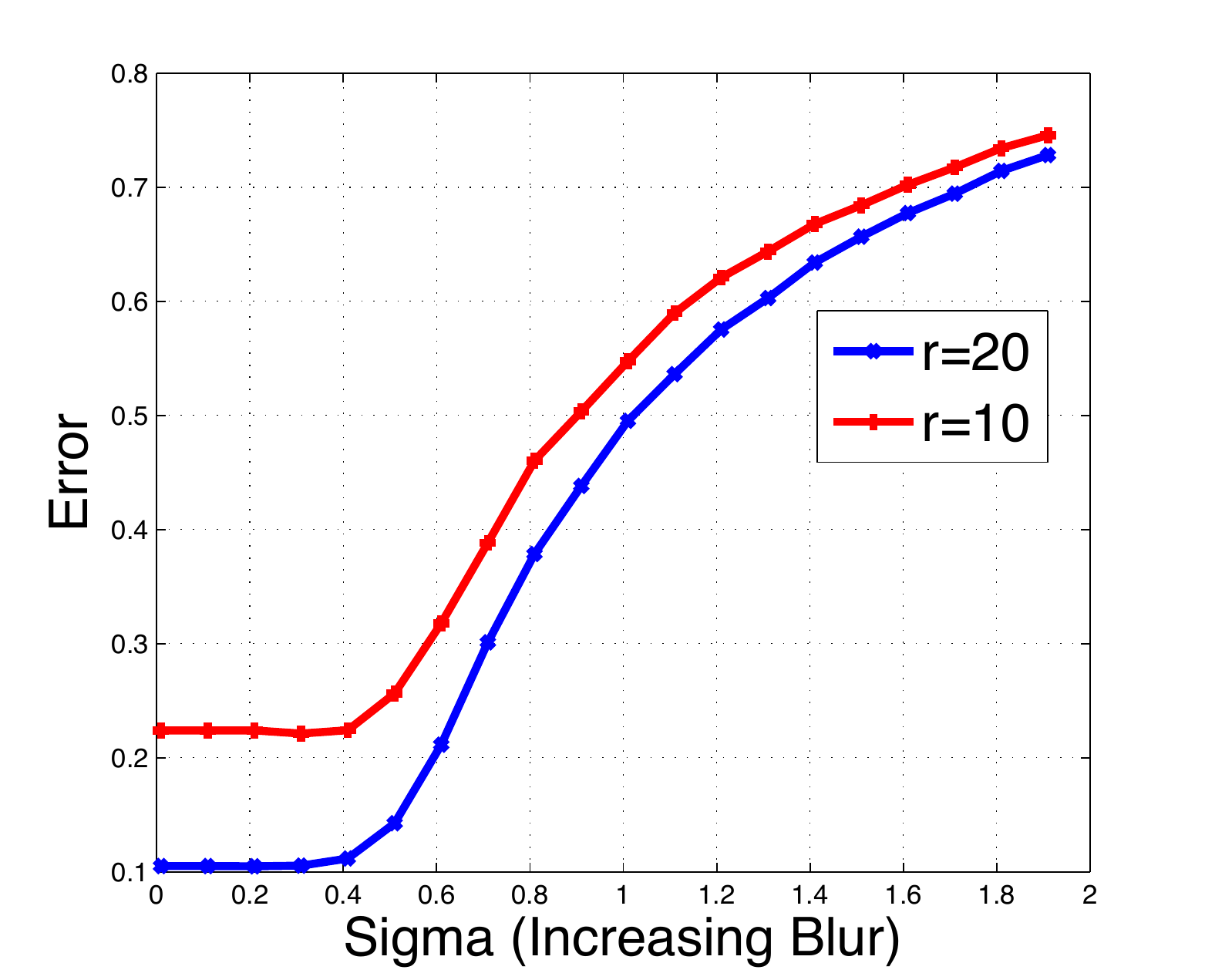}
\label{fig:subfigSup}
}  
\caption{Phase transition comparison of one bit Phase retrieval and SubExp phase Retrieval. }
\end{figure}

\subsubsection{One Bit Phase Retrieval and Alternating Minimization}
\textbf{Alternating Minimization Phase Transition.} As discussed in Section \ref{Sec:persp}, and Section \ref{sec:AM}, greedy refinements of the solution of \emph{1bitPhase} and \emph{SubExpPhase}, enhance the quality of the recovery and the sample complexity of the overall procedure as showed in \cite{AM} for the Gaussian measurements. Extending those results to Coded Diffraction patterns is subject to future work. 
We conjecture on one hand  that AM initialized with the one bit solution has a sample complexity of $O(\log^3n +\log \frac{1}{\epsilon}\log\log\frac{1}{\epsilon})$, and  on the other hand that AM initialized with \emph{SubExpPhase} solution has a sample complexity of  $O(\log^5n +\log \frac{1}{\epsilon}\log\log\frac{1}{\epsilon})$.
We show in the next section the phase transition of AM (Algorithm \ref{AltMinPhase}) initialized with \emph{1bitPhase}, \emph{SubExpPhase} and a random initialization in the noisy and the noiseless case.\\
In order to highlight the effect of the initialization step in the  AM Algorithm \ref{AltMinPhase}, we fix the number of iterations to $t_0=50$ in the noiseless and the noisy setting.\\
In figure \ref{fig:AMnonoise} we report the empirical success probability of Algorithm \ref{AltMinPhase} in the noiseless setting versus the number of pairs of modulations $r$. We declare in the noiseless case a success if $||\hat{x}_{t_0}\hat{x}_{t_0}^*-x_0x_0^*||_{F}<10^{-5}$.
We see that with that relatively small number of iterations, the phase transition of AM initialized with One Bit Phase solution happens at $r=4$. AM initialized with SubExpPhase with that limited number of iterations needs more samples to achieve phase transition at $r=10$. AM initialized at random does not achieve its  phase transition with that limited number of samples and iterations.       
This phase transition confirms the lower sample complexity of one bit solution, and its greedy refinements.\\
We now turn to the noisy setting   \eqref{eq:1bitCDPNoise} where we set $\sigma= 0.04$. In figure \ref{fig:AMnoise} we report the empirical success probability of Algorithm \ref{AltMinPhase}, where we declare in this setting a success if $||\hat{x}_{t_0}\hat{x}_{t_0}^*-x_0x_0^*||_{F}<0.03$. We see that at that accuracy level, the greedy refinements of one bit solutions are still robust to noise and superior to the other forth-mentioned  initializations (\emph{SubExpPhase} and random).
\begin{figure}[H]
\subfigure[Alternating Minimization's phase transition in the  noiseless setting.]{
\includegraphics[width=0.5\textwidth]{./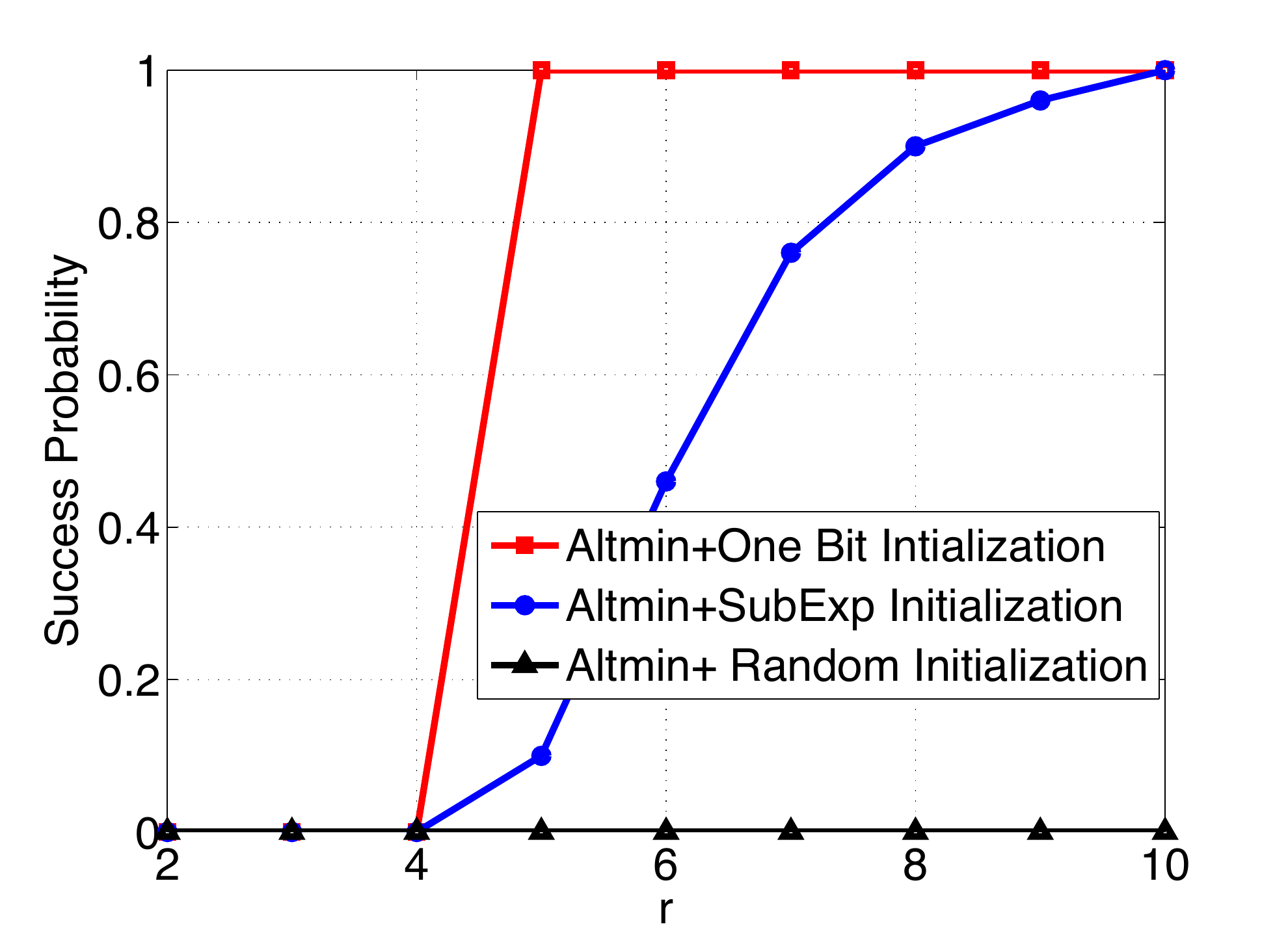}
\label{fig:AMnonoise}
}  
\subfigure[Alternating Minimization's phase transition in the noisy setting.]{
\includegraphics[width=0.5\textwidth]{./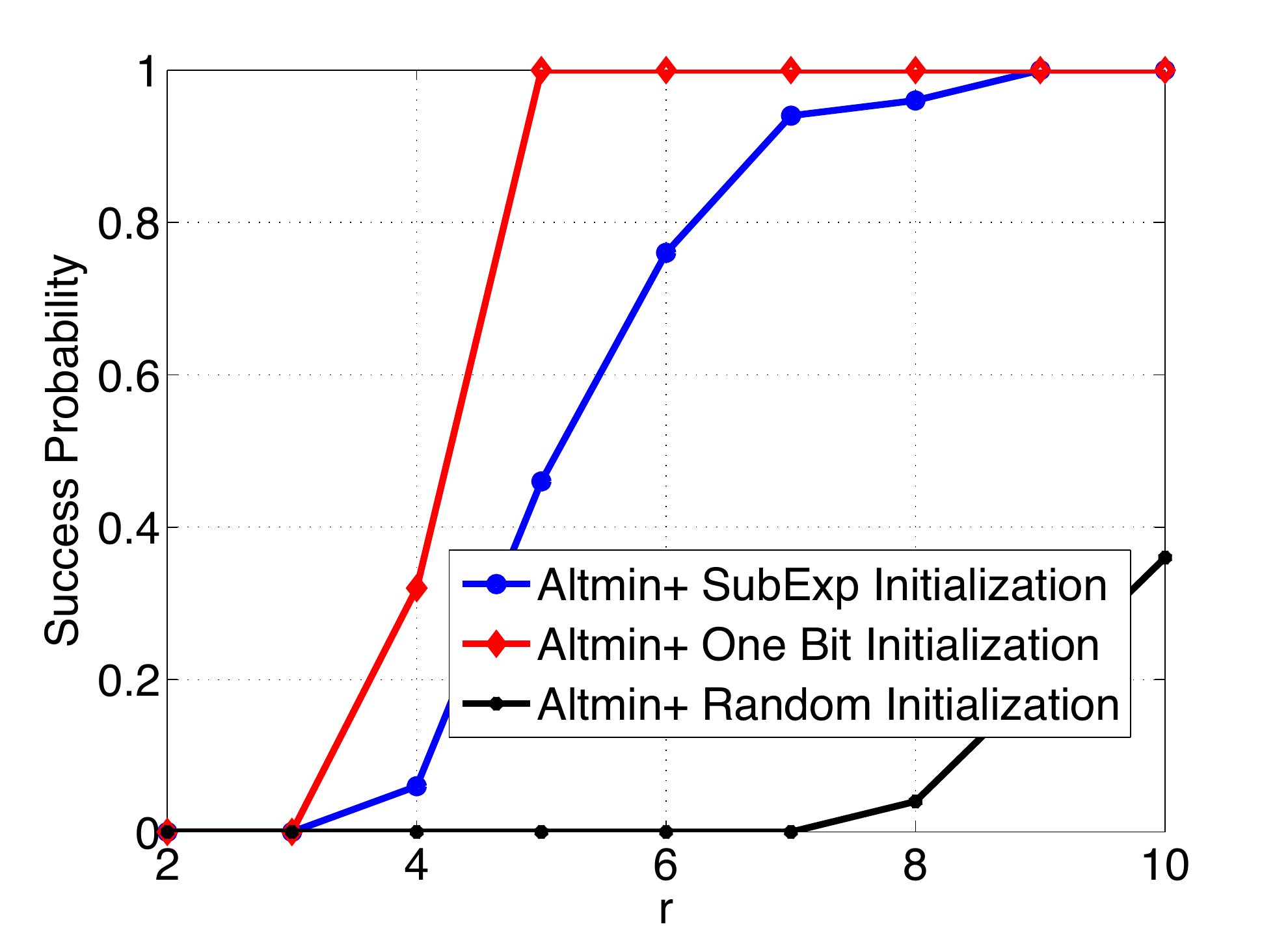}
\label{fig:AMnoise}
}
\caption{Phase transition comparison of one bit Phase retrieval and SubExp phase Retrieval. }
\end{figure}
\noindent \textbf{Error Decay.} To illustrate the benefit of the initialization step in AM we report in the following the error decay of AM with different initializations.
In figure \ref{fig:subfig2} we see that in the noiseless setting all approaches converge, the convergence is faster for AM initialized with the  one bit solution  in high dimension.
In figure \ref{fig:subfig3}\ref{fig:subfig4},\ref{fig:subfig5} we see that AM initialized with one bit solution is  more robust in the noisy setting .
\begin{figure}[H]
\subfigure[Error $1-|\scalT{x}{x_0}|^2$ versus Iterations of AltMinPhase, for  $n=8000$, and a total measurements $8n$ in the noiseless setting.]{
\includegraphics[width=0.5\textwidth]{./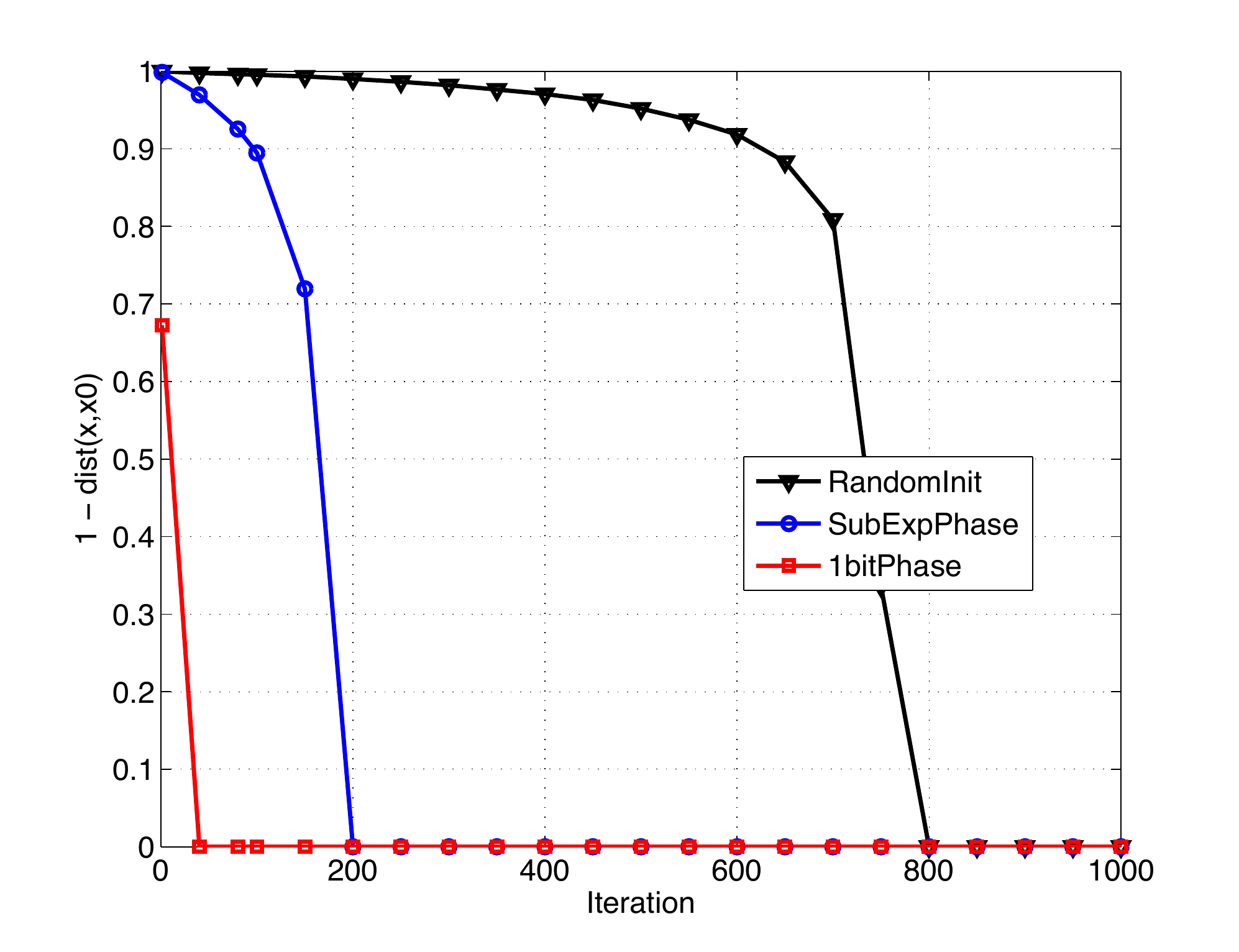}
\label{fig:subfig2}
}
\subfigure[Error $1-|\scalT{x}{x_0}|^2$ versus Iterations of AltMinPhase, for  $n=8000$ and a total measurements $8n$ in the noisy setting $\sigma=0.4$.]{
\includegraphics[width=0.5\textwidth]{./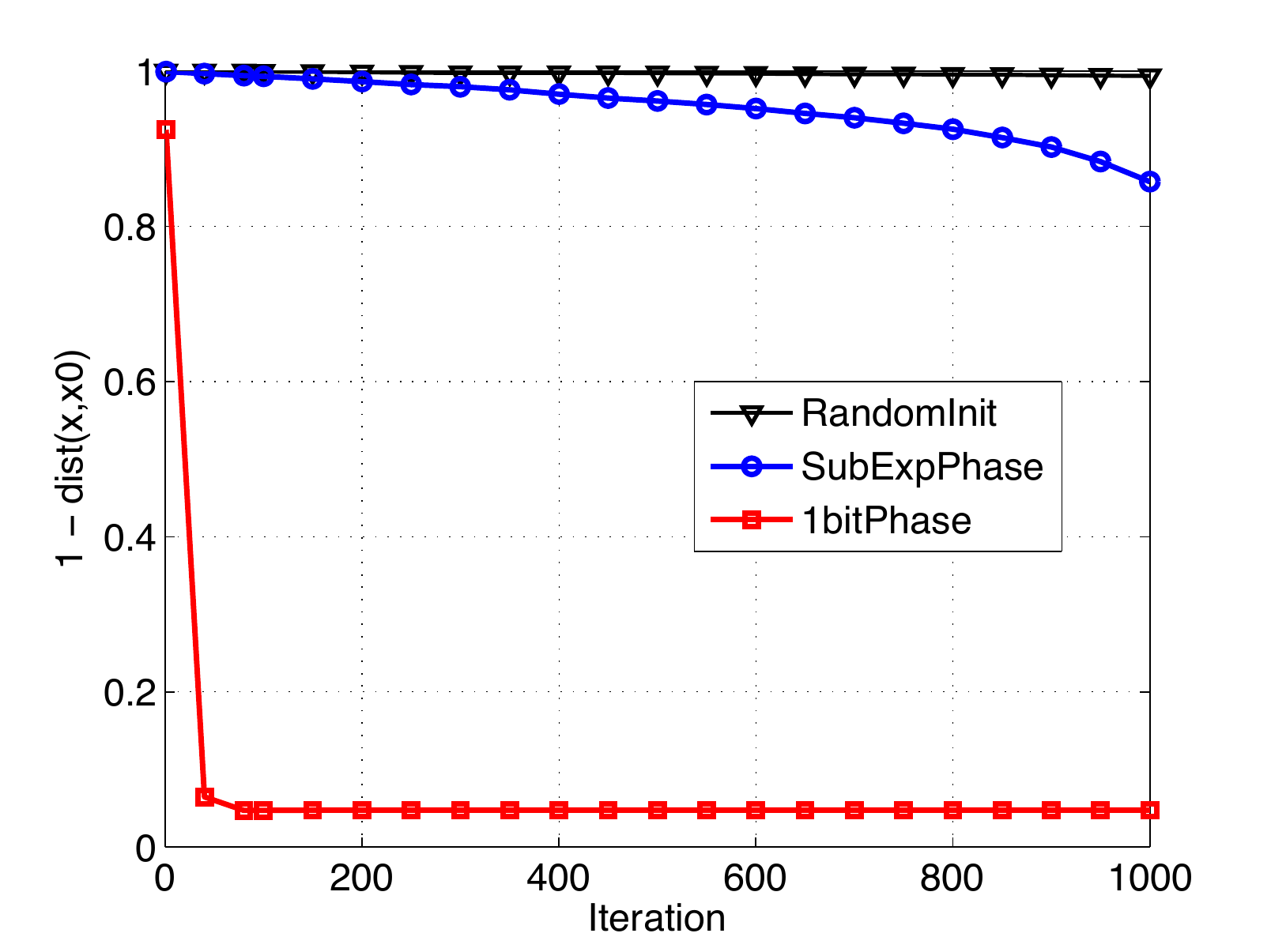}
\label{fig:subfig3}
}  
\subfigure[Error $1-|\scalT{x}{x_0}|^2$ versus Iterations of AltMinPhase, for  $n=8000$ and a total measurements $8n$  in the noisy setting $\sigma=0.8$.]{
\includegraphics[width=0.5\textwidth]{./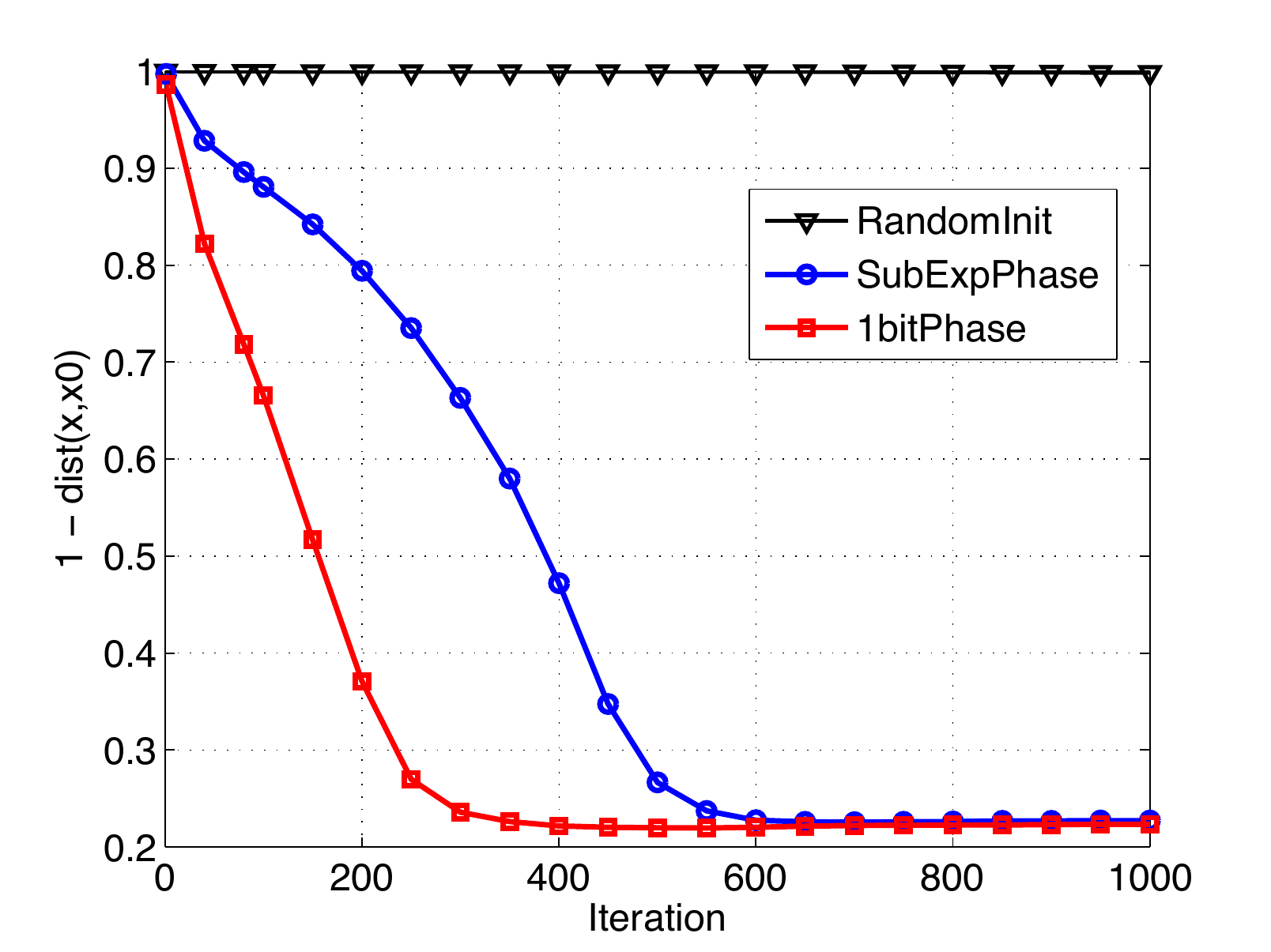}
\label{fig:subfig4}
}
\subfigure[Error $1-|\scalT{x}{x_0}|^2$ versus Iterations of AltMinPhase, for  $n=8000$ and a total measurements $8n$  in the noisy setting $\sigma=0.8$]{
\includegraphics[width=0.5\textwidth]{./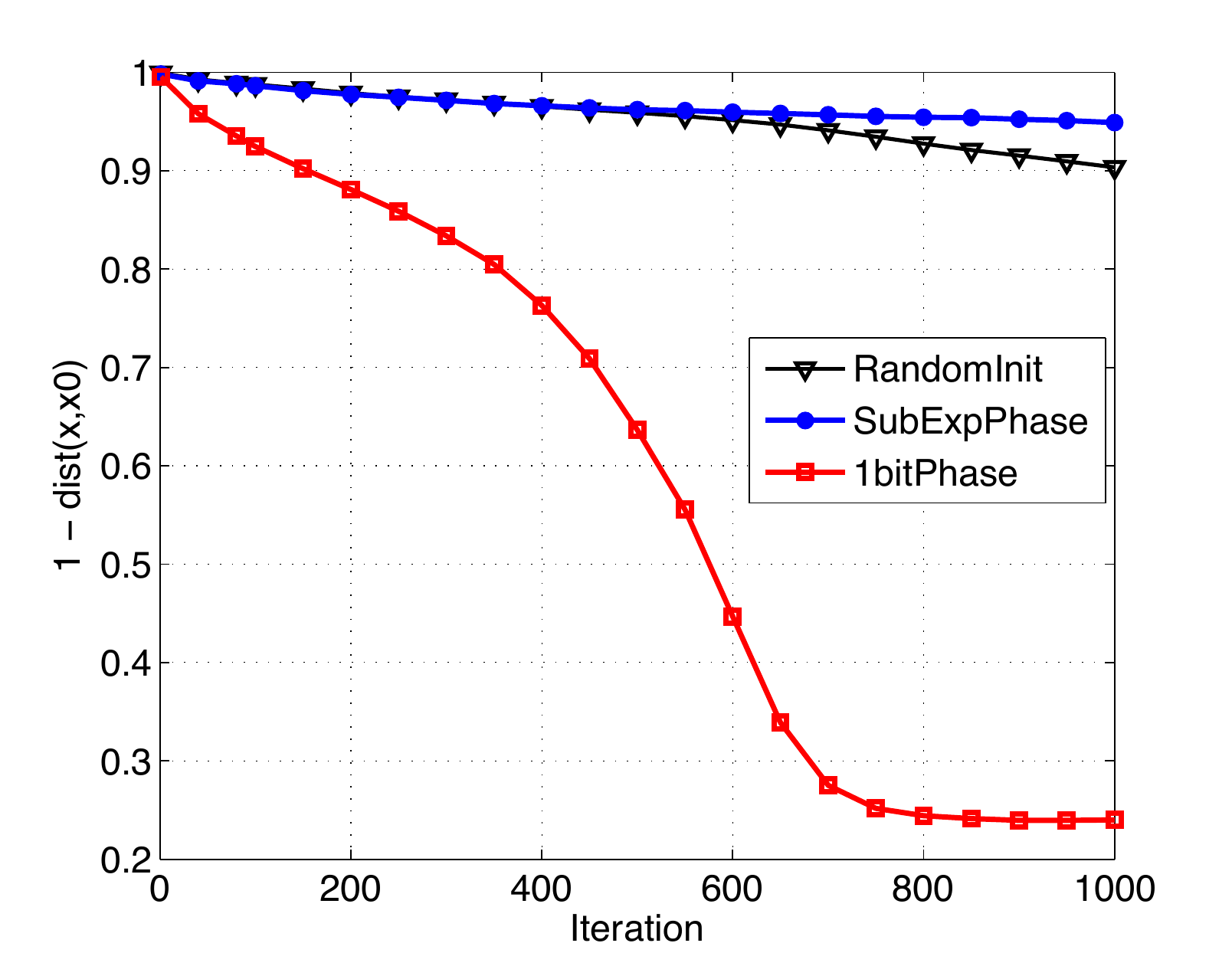}
\label{fig:subfig5}
}
\label{fig:subfigureExample}

\caption{Alternating minimization convergence with different initializations: Random Initialization,1bitPhase,and SubExpPhase, in the noisy and noiseless setting.}
\end{figure}
\noindent In the next section we test our algorithms in imaging applications, which highlights the efficiency and the robustness of the algorithms in potential applications in microscopy, astronomy and X-ray Crystallography.
\subsection{Imaging Applications}
We address in this section the problem of phase recovery in imaging applications. In the following we consider two test images, and their respective power spectra. 
The image in  Figure \ref{struct} has a dominant  edge structure, and the image in Figure \ref{text} has a dominant  textured content  .
We show that the same algorithms presented in this paper allow the phase recovery, where we  simply replace the vectors with $2D$ arrays and the $1D$ FFT with the $2D$ FFT. 
\begin{figure}[H]
\includegraphics[width=\textwidth]{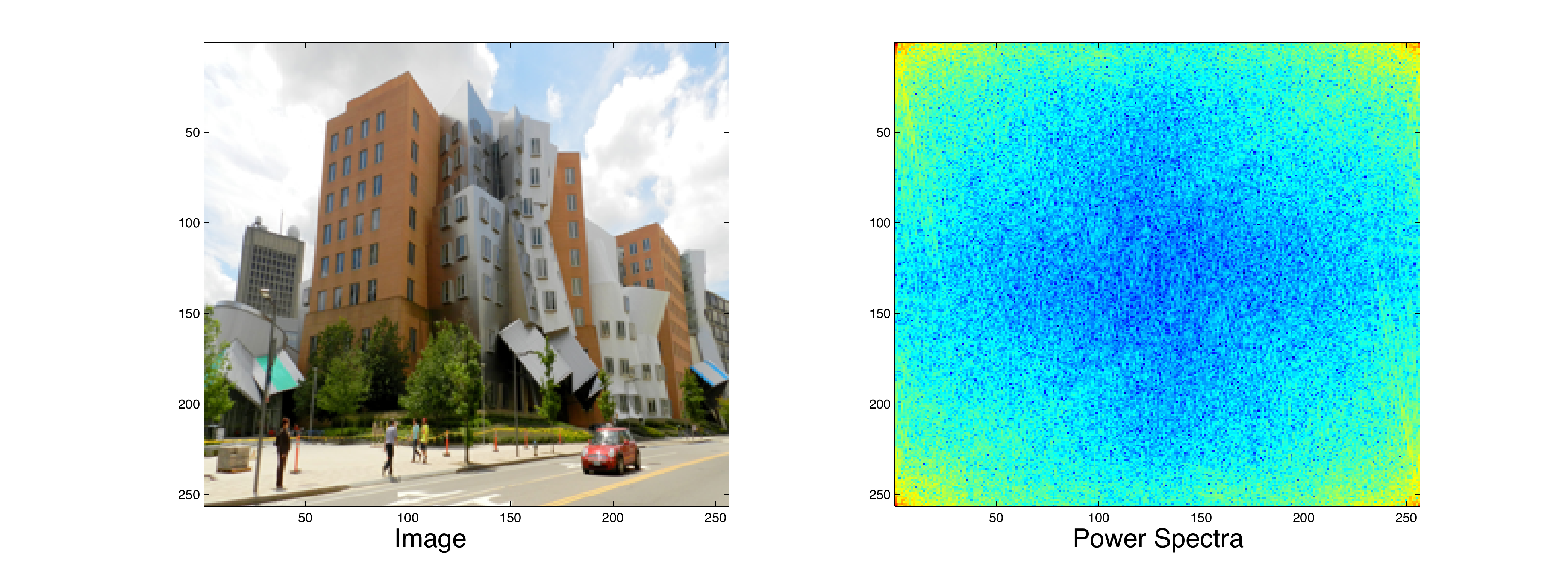}
\caption{ An image of Stata Center (Structured) and its Power Spectra (we plot the logarithm of the  power spectra  of one color Channel (R for instance) ).}
\label{struct}
\end{figure}
\begin{figure}[H]
\includegraphics[width=\textwidth]{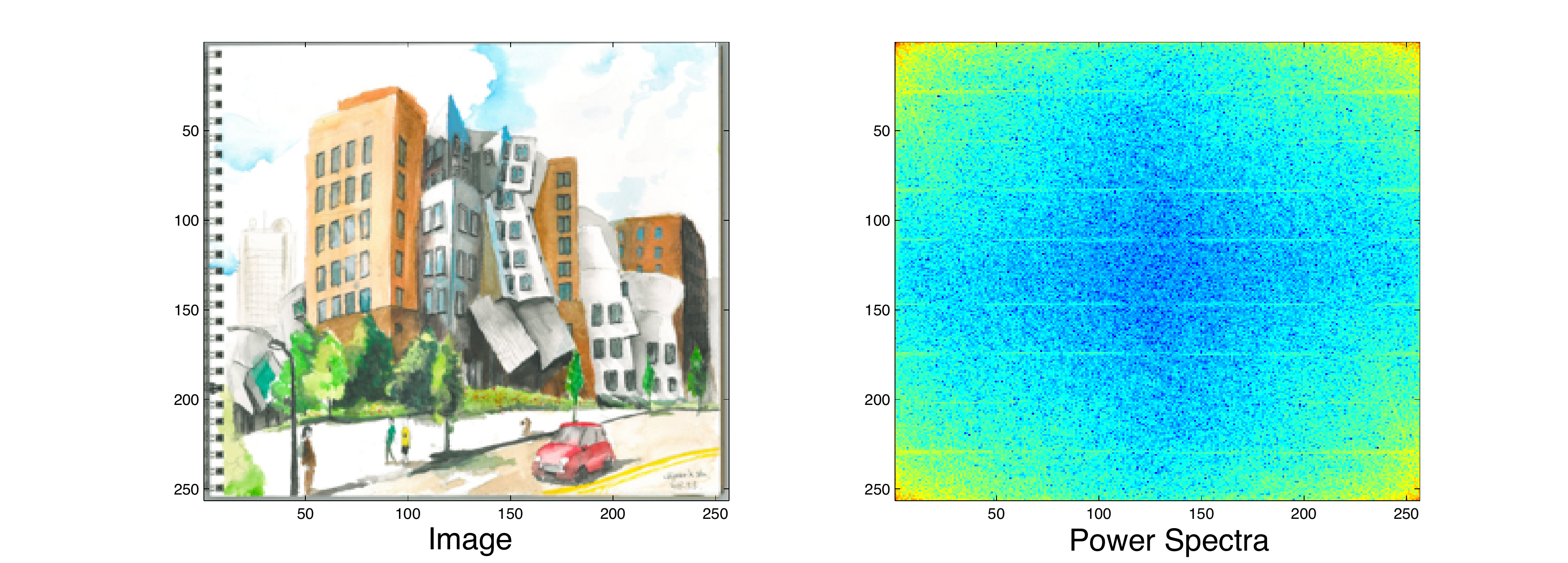}
\caption{ A drawing of Stata Center (texture  like image) and its Power Spectra (we plot the logarithm of the  power spectra  of one color Channel (R for instance) ).}
\label{text}
\end{figure}
\subsubsection{One Bit Coded Diffraction Patterns}
We modulate the image with a $2D$  Gaussian array and collect $2D$ coded diffraction patterns. 
We illustrate in Figure \ref{fig:1bitcdp2} the one bit coded diffraction patterns that become in this case a $2D$  binary array.   
The one bit array is obtained by quantizing pairs of coded diffraction arrays.
Our goal is therefore to recover robustly the image from the knowledge of One bit coded diffraction arrays using the same algorithms presented in this paper.
In the next section we test the robustness of the recovery against distortion, noise and blur.  
\begin{figure}[ht]\begin{center}
\includegraphics[width=\textwidth]{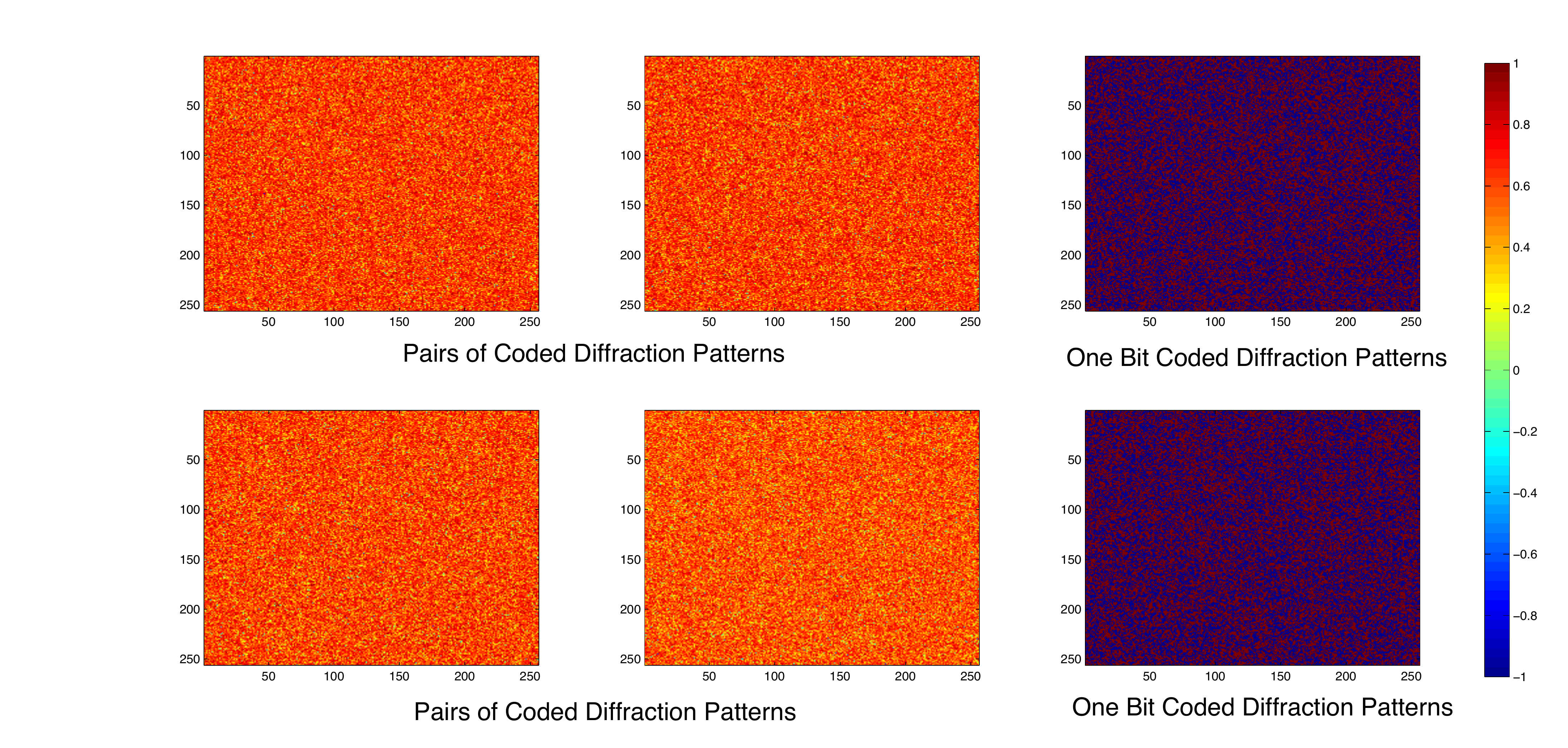}
\end{center}
\caption{ One Bit Coded Diffraction Patterns obtained by quantizing pairs of coded diffractions patterns.  }
\label{fig:1bitcdp2}
\end{figure}

\subsubsection{Phase Retrieval from Distorted Diffraction Patterns.}
In many acquisition systems we collect the power spectra of an object of interest. This acquisition might be altered by many imperfections due to multiple scattering phenomena for instance, or distortion in the precision of the CCD. Robustness to distortion such as clipping is a desirable feature in phase retrieval. In Figure \ref{fig:disto} we simulate  distorted power spectra by applying a sigmoid $(\tanh(\alpha . ))$ to the Fourier spectrum, for different clipping levels ($\alpha$). While most of the spatial frequency information is lost, phase retrieval is still possible thanks to the robustness of One Bit Phase retrieval to distortions.
We apply  $2r$ Gaussian masks to our image of interest and then collect the power spectra of each masked image. We apply to each masked power spectra  a sigmoid with a clipping parameter $\alpha$. 
We the get our One Bit CDP by quantizing pairs of distorted power spectras. In our experiment $n=256\times 256$, $r=16\lfloor\log(n)\rfloor$.
\begin{figure}[H]
\begin{center}
\includegraphics[width=1.1\textwidth]{./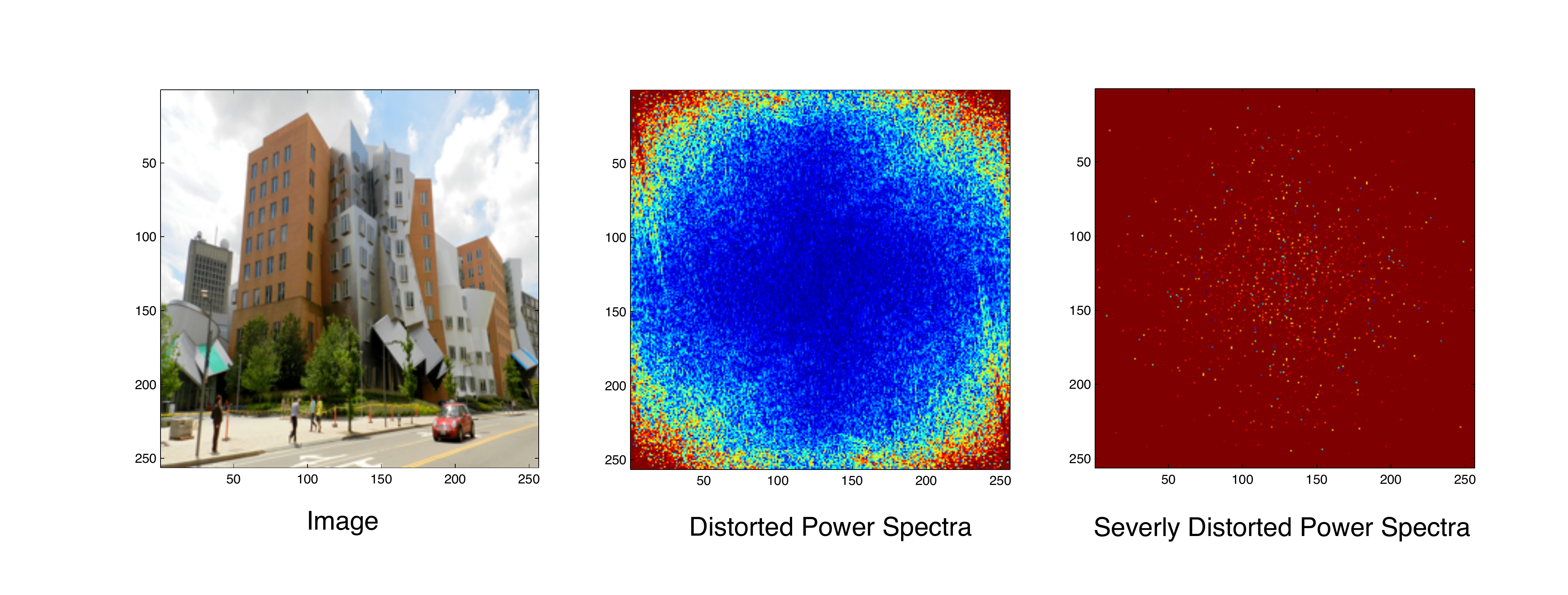}
\end{center}
\caption{Stata Center's power spectra undergoing a distortion such as clipping of the values $tanh(\alpha |\hat{x}(w)|)$. The distorted power spectra is obtained for $\alpha=0.001$. The severely distorted is obtained for $\alpha=0.1$. }
\label{fig:disto}
\end{figure}
\begin{figure}[H]
\begin{center}
\includegraphics[width=\textwidth]{./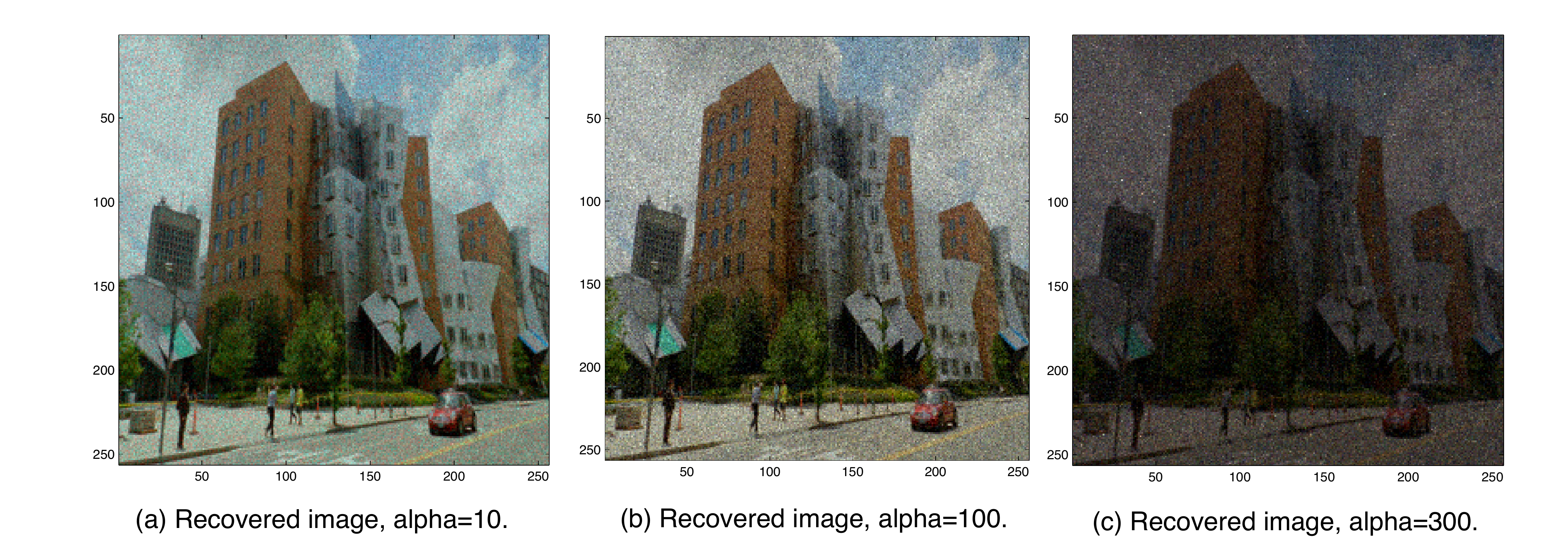}
\end{center}
\caption{One Bit Phase Retrieval from distorted CDP. (a) and (b) Recovery from small distortions. (c) Recovery from Severe Distortions. }
\label{fig:recovDistStruc}
\end{figure}
\noindent We run independently the same process of acquisition on the three color channels, as well as Algorithm \ref {Fast1bitPhaseMaskPower}. In Figures \ref{fig:recovDistStruc} and \ref{fig:recovDistTexT} we show the output of the Algorithm \ref{Fast1bitPhaseMaskPower} for various level of distortions varying from mild to sever distortion.
We see that the reconstruction in both cases is still possible from one bit CDP despite the distortion that the intensities values are undergoing.   
\begin{figure}[H]
\includegraphics[width=\textwidth]{./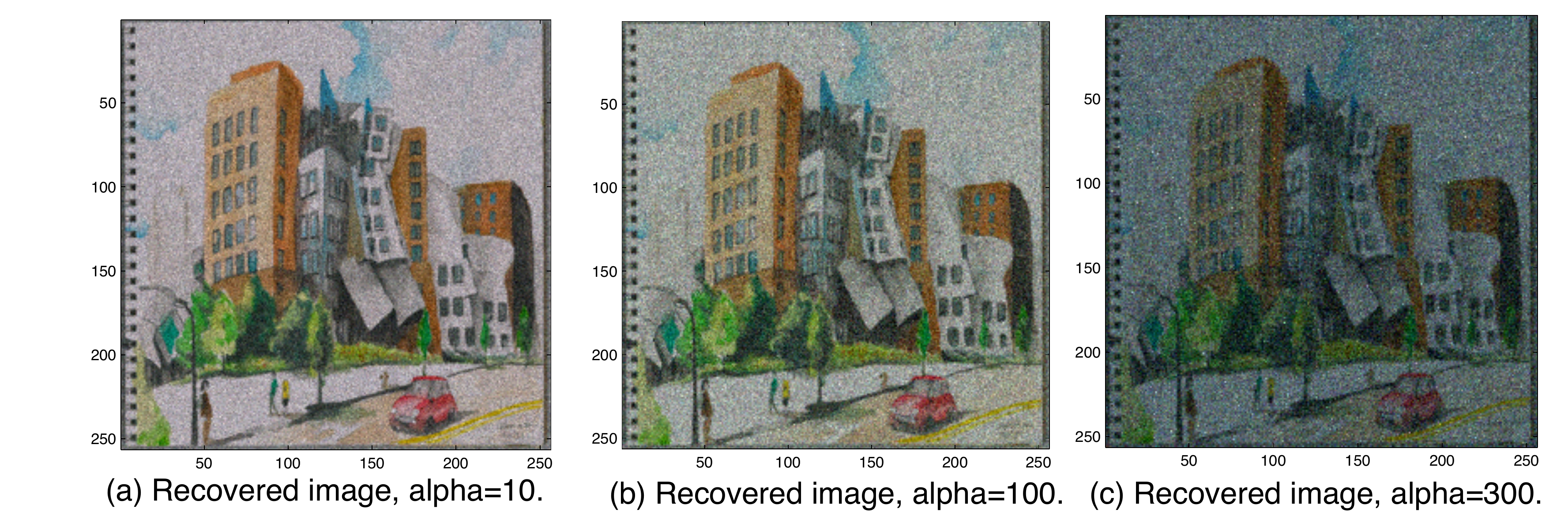}
\caption{One Bit Phase Retrieval from distorted CDP. (a) and (b) Recovery from small distortions. (c) Recovery from Severe Distortions. }
\label{fig:recovDistTexT}
\end{figure}
\subsubsection{Super-Resolution and Blind Deconvolution}

We turn now to the problem of recovering an image from its  lower end of  power spectra. 
As discussed earlier this is a problem of practical interest in microscopy, as the resolution of an optical system, for instance a lens $h$  is limited by the Fraunhofer diffraction limit $f_c$.
The super-resolution factor of $h$ is therefore defined as $SRF=\frac{n}{2f_c+1}$.
In our setup the modulated object diffracts through a lens characterized by a PSF $h$ and a cut-off frequency $f_{c}$. Hence instead of observing  the power spectra of the modulated signal $Diag(w)x_0$ we observe the power spectra of a lower resolution signal namely $h\star (Diag(w)x_0)$. We consider in this experiment $h$ to be an averaging filter. In Figure \ref{fig:SuperResBlur} we see a blurred image, obtained by convolving the original image Stata with an averaging filter of size $8\times 8$. In the following we simulate the diffraction patterns of the modulated image through the aperture $h$ by taking:$(|Fh\star (Diag(w^1_i)x_0)|^2, |Fh\star (Diag(w^2_i)x_0)|^2), i=1\dots r$. We then obtain the one bit CDP by quantizing pairs of CDPs. In our experiment we have $n=256\times256 $, $r=10\lfloor\log(n)\rfloor$.\\
In Figure \ref{fig:SuperResRecov}, we show the output of Algorithm  \ref{Fast1bitPhaseMaskPower} given the one bit CDP collected as mentioned previously. We see that most of the missing details in Figure \ref{fig:SuperResBlur} are recovered. Hence one bit phase retrieval enables super-resolution and blind deconvolution as it is agnostic to the nature of the blur.
\begin{figure}[H]

\subfigure[Image convolved with an averaging filter of  size $8\times8$.]{
\includegraphics[width=0.5\textwidth]{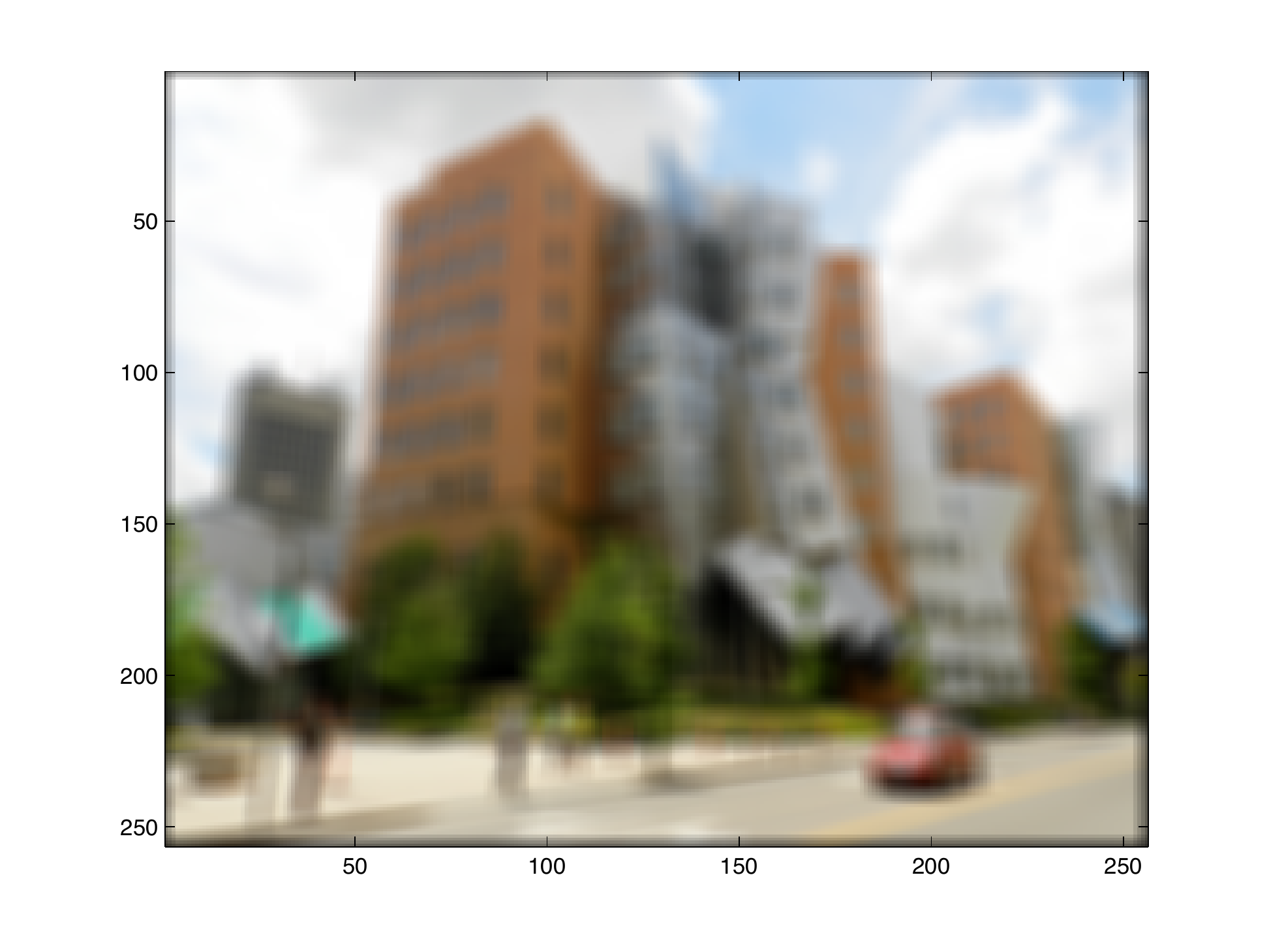}
\label{fig:SuperResBlur}
}  
\subfigure[Recovered image from One Bit CDP .]{
\includegraphics[width=0.5\textwidth]{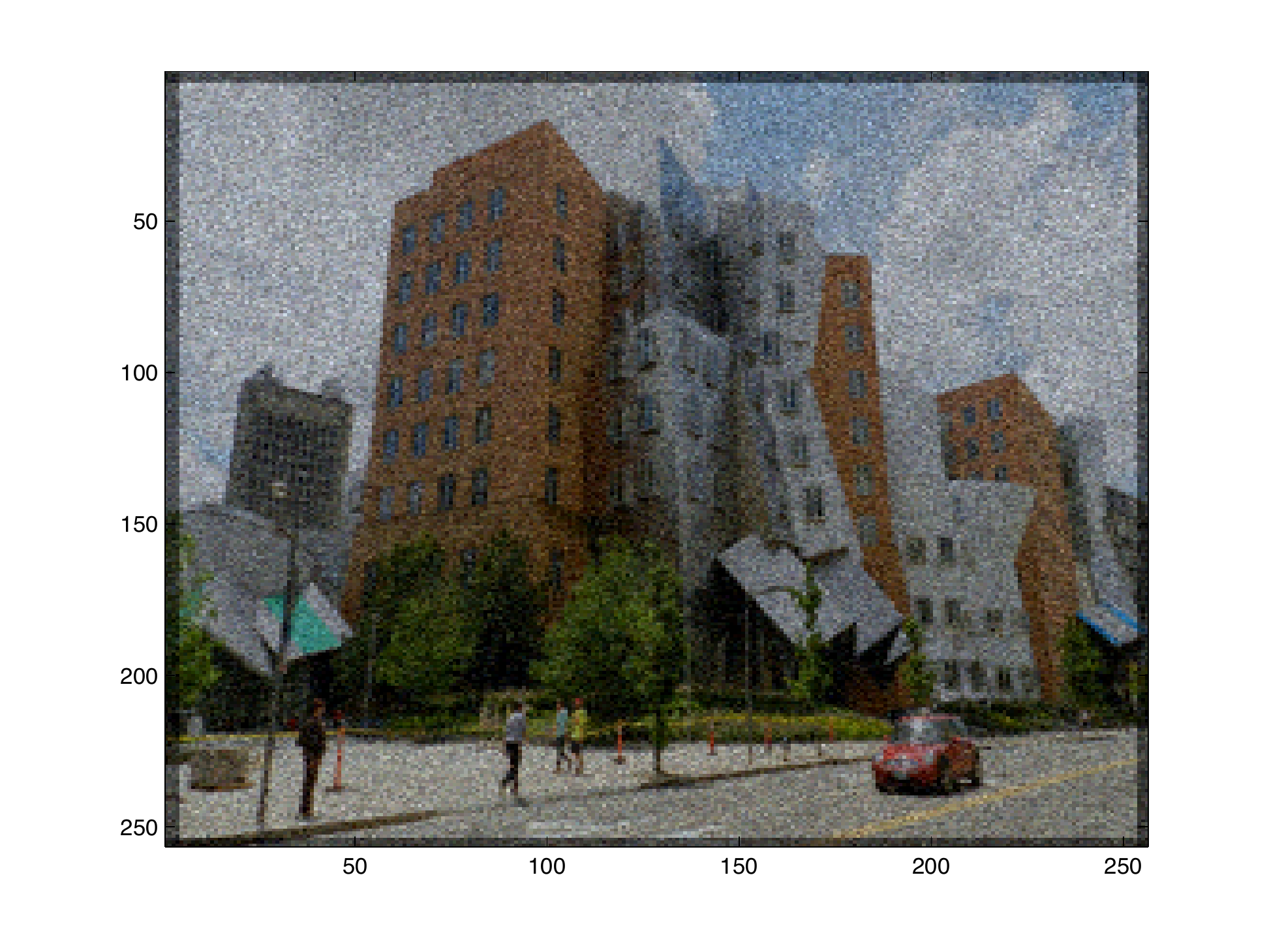}
\label{fig:SuperResRecov}
}
\caption{Super-Resolution and Blind Deconvolution via Phase Retrieval from One Bit CDP. }
\end{figure}


\subsubsection{One Bit Phase Retrieval and Alternating Minimization }
In this section we test the alternating minimization Algorithm \ref{AltMinPhase} initialized with One Bit Phase Retrieval (Algorithm \ref{Fast1bitPhaseMaskPower}), in the noiseless and the Poisson noise model of equation \eqref{eq:1bitCDPoisson}.
To emphasis the effect of the initialization we set the number of iterations $t_0$ in Algorithm  \ref{AltMinPhase} to a relatively small number. In our experiments we set $t_0=50$, and $r=4$. We test our algorithms with Complex Gaussian masks and  Bernoulli masks.\\

\noindent \textbf{Gaussian masks.} We start  by the recovery for a noiseless acquisition of the CDP, the recovered image with alternating minimization initialized with one bit phase retrieval is given in Figure \ref{fig:GAMnoiseless} and is indistinguishable from the original. The average SNR on the three color channels is $101.2248 \text{ dB}$.
When the CDP are contaminated with a poisson noise (In  Equation \eqref{eq:1bitCDPoisson} we set $\eta=0.1$ ), Alternating minimization initialized with one bit phase retrieval succeeds and produces a solution with an average SNR on the three color channels of $95.625 \text{ dB}$. The recovered image in this setting is shown Figure \ref{fig:GAMnoise}.\\

\noindent \textbf{Bernoulli masks.} We repeat the same experiment with  bernoulli masks , i.e each entry of the mask is  a bernoulli random variable with parameter $p=0.8$.
The Recovered image in the noiseless is given in Figure \ref{fig:Bnoiseless}, the average SNR is $97 \text{ dB}$. In the poisson noise setting  the recovered image is given in Figure \ref{fig:Bnoise}, the average SNR is 
$95.0769 \text{ dB}$. We see that the quality for this fixed number of masks and iterations is lower than the Gaussian case. 
\begin{figure}[H]
\subfigure[Recovered Image with AM initialized with the \emph{1bitPhase} solution, for a noiseless acquisition of CDP.]{
\includegraphics[width=0.5\textwidth]{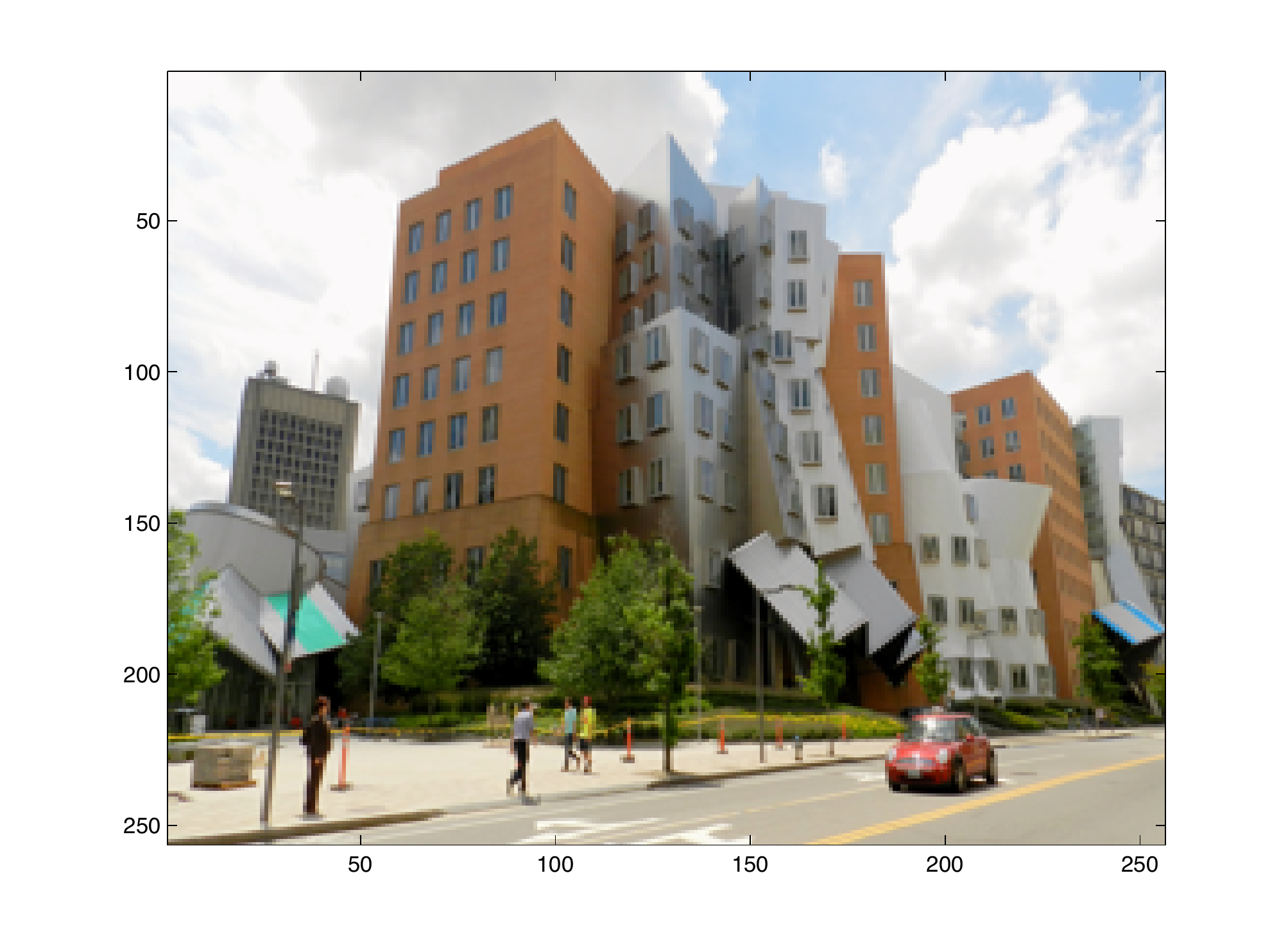}
\label{fig:GAMnoiseless}
}  
\subfigure[Recovered Image with AM initialized with the \emph{1bitPhase} solution, for an acquisition  where  the CDP are contaminated with a Poisson Noise.]{
\includegraphics[width=0.5\textwidth]{./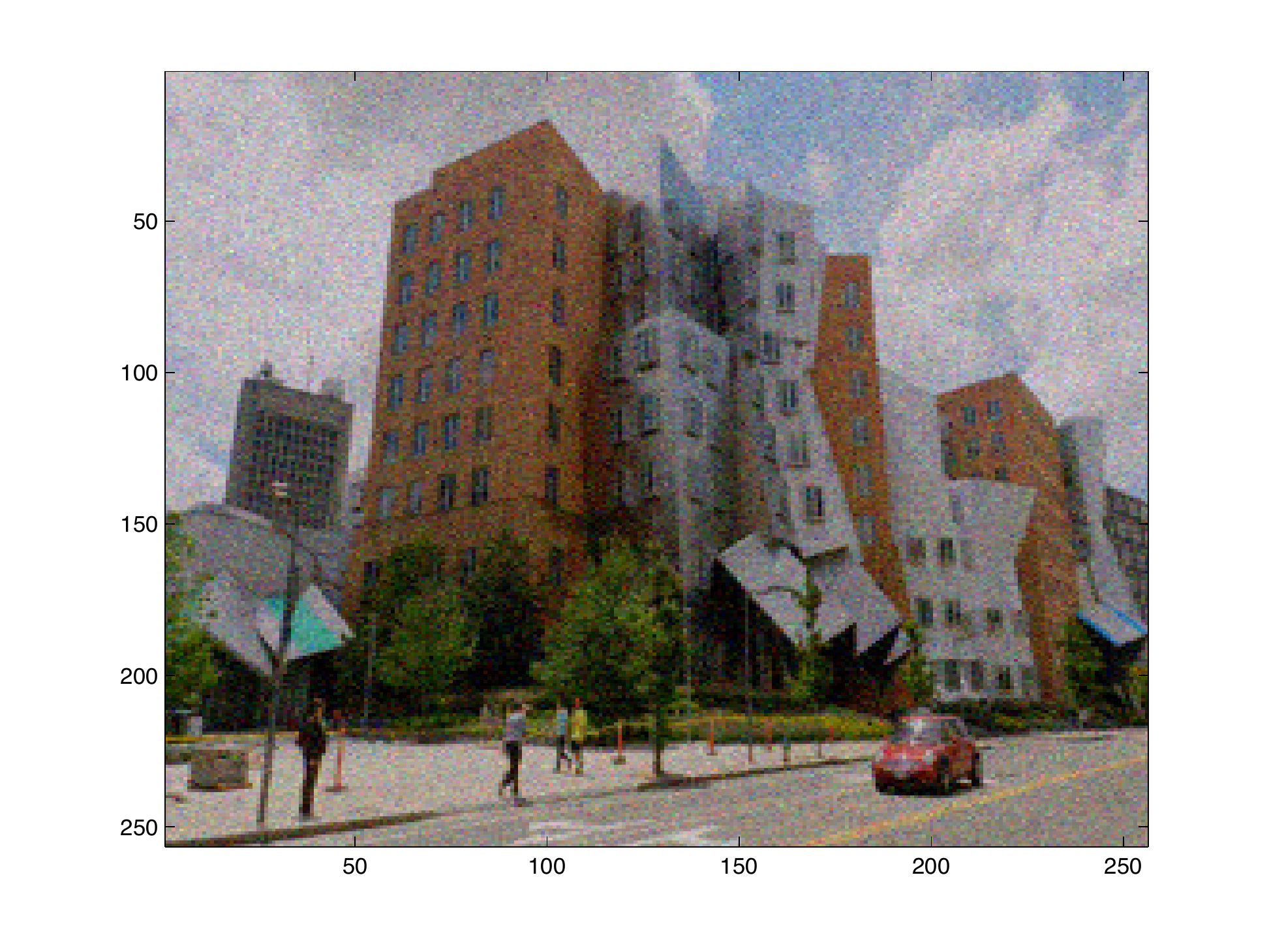}
\label{fig:GAMnoise}
}
\caption{Alternating Minimization and One Bit Phase Retrieval from Gaussian masks.}
\end{figure}

\begin{figure}[H]
\subfigure[Recovered Image with AM initialized with the \emph{1bitPhase} solution, for a noiseless acquisition of CDP.]{
\includegraphics[width=0.5\textwidth]{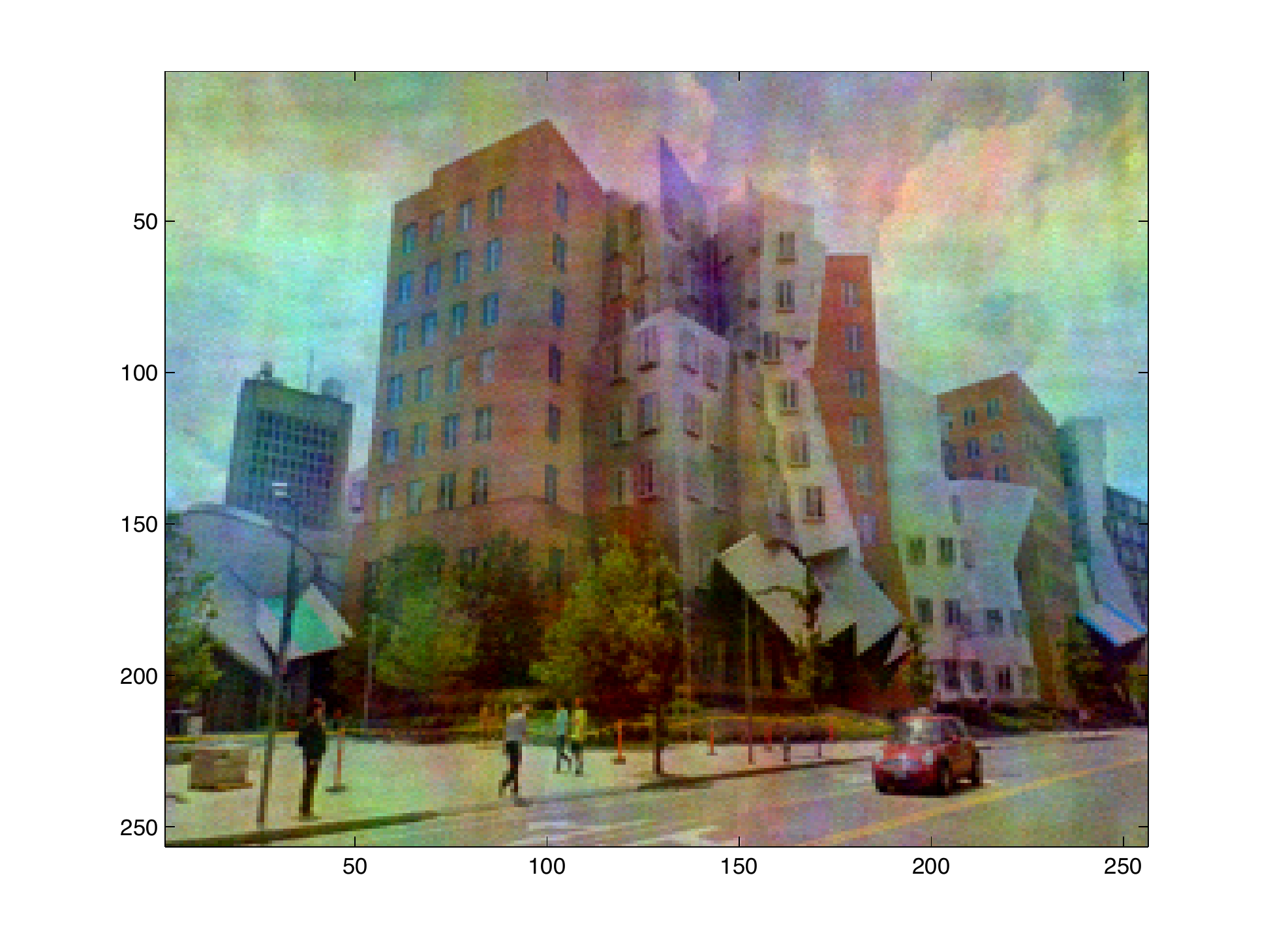}
\label{fig:Bnoiseless}
}  
\subfigure[Recovered Image with AM initialized with the \emph{1bitPhase} solution, for an acquisition  where  the CDP are contaminated with a Poisson Noise.]{
\includegraphics[width=0.5\textwidth]{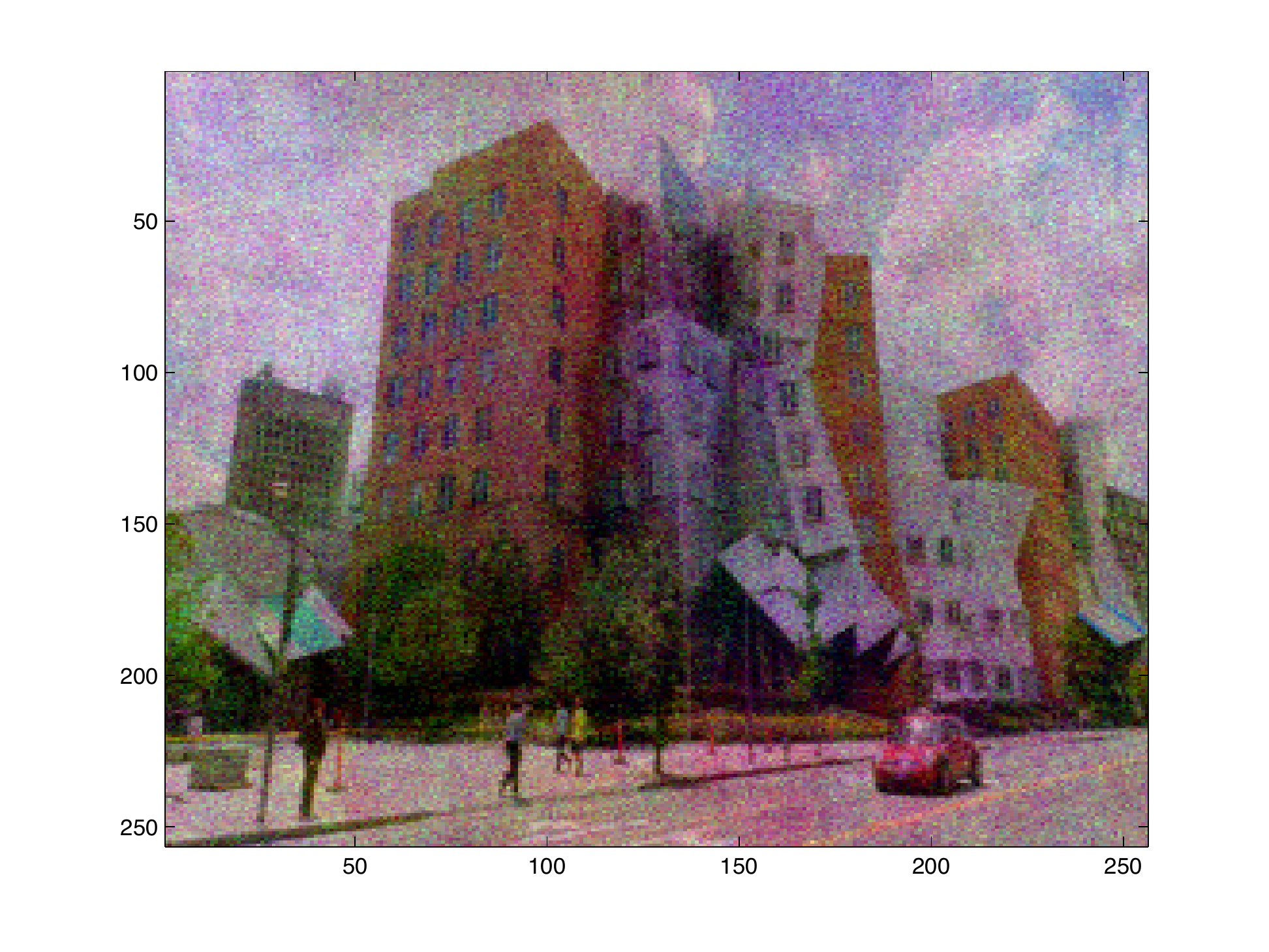}
\label{fig:Bnoise}
}
\caption{Alternating Minimization and One Bit Phase Retrieval from Bernoulli masks.}
\end{figure}

\section{Acknowledgements}
I  am thankful for fruitful discussions with  Tomaso Poggio, Lorenzo Rosasco and  Gadi Geiger.  
I would like also to thank  Gordon Wetzstein for pointing out  reference \cite{nature} 
 and for inspiring discussions on Super-Resolution and applications in microscopy.
I am also thankful to Chiyuan Zhang for providing his drawing of Stata center. 
\section{Proofs}\label{app:Fourier}
In this section we give the proofs of Proposition \ref{pro:rankone}, Lemma \ref{lem:ineq} and Proposition \ref{pro:conc}. We start with the following observation:

\begin{lemma}\label{lem:manip}
Let $y \in \mathbb{C}^n$, and $M\in \mathbb{C}^{n\times n}$, we have the following equality: 
\begin{equation}
\scalT{y}{diag(M)}=Tr(Diag(y)^*M).
\end{equation}
We need this result for $y\in \mathbb{R}^n$:
\begin{equation}
\scalT{y}{diag(M)}=Tr(Diag(y)M).
\end{equation}
\end{lemma}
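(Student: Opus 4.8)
The plan is to prove this identity by a direct component-wise expansion of the trace and matching it against the definition of the complex dot product; there is no real machinery needed, only bookkeeping of the conjugations. I would first fix notation: $diag(M)$ is the vector with entries $(diag(M))_j = M_{jj}$, while $Diag(y)$ is the diagonal matrix with $(Diag(y))_{jk} = y_j \delta_{jk}$, so that its conjugate transpose has entries $(Diag(y)^*)_{jk} = \overline{y_j}\,\delta_{jk}$.

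Next I would compute the right-hand side. Since $Diag(y)^*$ is diagonal, left-multiplication by it scales the $j$-th row of $M$ by the scalar $\overline{y_j}$; in particular the $(j,j)$ diagonal entry of the product is $(Diag(y)^* M)_{jj} = \overline{y_j}\, M_{jj}$. Taking the trace then gives
$$
Tr\big(Diag(y)^* M\big) = \sum_{j=1}^n \overline{y_j}\, M_{jj} = \sum_{j=1}^n \overline{y_j}\,(diag(M))_j.
$$
With the convention that $\scalT{\cdot}{\cdot}$ is conjugate-linear in its first argument, i.e. $\scalT{y}{v} = \sum_j \overline{y_j}\, v_j$, the last sum is exactly $\scalT{y}{diag(M)}$, which establishes the first identity. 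The second identity is the specialization to $y \in \mathbb{R}^n$: there $\overline{y_j} = y_j$, hence $Diag(y)^* = Diag(y)$, and the computation above collapses to $Tr(Diag(y) M) = \sum_j y_j M_{jj} = \scalT{y}{diag(M)}$.

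The only point that requires care — and the sole place where one could slip — is the conjugation convention for the inner product, since the conjugate bar must land on $y$ on both sides for the equality to hold (the choice $\scalT{y}{v} = \sum_j y_j \overline{v_j}$ would instead produce $\sum_j y_j \overline{M_{jj}}$ and break the identity). This is not a genuine obstacle but a consistency check; once the convention is pinned down, the proof is a one-line trace expansion. I expect the authors' argument to be essentially this computation.
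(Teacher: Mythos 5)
Your proof is correct: the paper states this lemma as an unproven ``observation,'' so the direct component-wise expansion you give (diagonal of $Diag(y)^*M$ has entries $\overline{y_j}M_{jj}$, sum to get the trace, match against the conjugate-linear-in-first-argument dot product) is precisely the verification the paper leaves implicit. Your remark about the conjugation convention is a fair and worthwhile caveat, since the paper never pins down which argument of $\scalT{\cdot}{\cdot}$ carries the conjugate, and only the convention you chose makes the first identity hold for general complex $y$; in the paper's actual use (Lemma on rewriting $\scalT{y}{|FDiag(w)x|^2}$) the vector $y$ is real and $diag(M)$ is real, so nothing downstream depends on this choice.
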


\noindent Let $v=F^* Diag(w^*)x$ and $u=F^* Diag(w^*)x_0$.
The proof of proposition \ref{pro:QForm} follows from Lemmas \ref{lem:Matrix} and \ref{lem:rewrite}:\\
\begin{lemma} The modulus vector can be rewritten in the following way:
$$|FDiag(w)x|^2 = diag\left\{  F^* Diag(w^*)xx^*Diag(w)F \right\}=diag(vv^*).$$
\label{lem:Matrix}
\end{lemma}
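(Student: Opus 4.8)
The plan is to reduce the statement to the single elementary fact that, for \emph{any} vector $z\in\mathbb{C}^n$, the componentwise squared modulus is exactly the diagonal of the outer product $zz^*$. Indeed, the $j$-th entry of $|z|^2$ is $|z_j|^2=z_j\overline{z_j}$, and this is precisely the $(j,j)$ entry of the matrix $zz^*$. Hence, as vectors in $\mathbb{R}^n_+$, one has $|z|^2=\mathrm{diag}(zz^*)$. This identity needs no structure on $z$ at all and carries the entire content of the lemma; everything else is bookkeeping of conjugate transposes.

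With this in hand I would set $z=F\,Diag(w)\,x$, so that by definition the measurement vector is $|F\,Diag(w)\,x|^2=|z|^2=\mathrm{diag}(zz^*)$. It then remains to expand the outer product. Using the anti-automorphism rule $(ABC)^*=C^*B^*A^*$, together with the conventions of the paper that $Diag(w)^*=Diag(w^*)$ (conjugate transpose of a diagonal matrix is the diagonal of the conjugated entries) and that $F$ is the symmetric DFT matrix, I compute
\[
zz^*=\bigl(F\,Diag(w)\,x\bigr)\bigl(F\,Diag(w)\,x\bigr)^*=F\,Diag(w)\,x\,x^*\,Diag(w^*)\,F^*,
\]
and matching this against the definition $v=F^*Diag(w^*)x$ yields $zz^*=vv^*$ after recognizing the paired factors. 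The final equality $\mathrm{diag}\{F^*Diag(w^*)\,xx^*\,Diag(w)\,F\}=\mathrm{diag}(vv^*)$ is then immediate, since $vv^*=F^*Diag(w^*)x\,x^*Diag(w^*)^*F=F^*Diag(w^*)\,xx^*\,Diag(w)\,F$.

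The only point that demands genuine care — and it is the sole obstacle in what is otherwise a one-line computation — is keeping the conjugate-transpose conventions consistent: that $Diag(w)^*=Diag(w^*)$, that $F$ is symmetric so that $F^*=\overline{F}$, and the correct placement of $F$ versus $F^*$ (equivalently of $w$ versus $w^*$) inside the quadratic form. One subtlety worth flagging while doing the bookkeeping is that the diagonal extracted here is real-valued, since $zz^*$ is Hermitian; this reality is what lets the different but conjugate arrangements of $F,F^*$ and $w,w^*$ produce the same diagonal vector. Once these conventions are pinned down, both equalities are purely formal, and the lemma feeds directly, via the trace identity $\scalT{y}{diag(M)}=Tr(Diag(y)M)$ of Lemma \ref{lem:manip}, into the quadratic-form representation of $\mathcal{E}^{x_0}$ in Proposition \ref{pro:QForm}.
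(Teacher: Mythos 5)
Your reduction to the elementary identity $|z|^2=diag(zz^*)$ is correct, and so is your expansion $vv^*=F^*Diag(w^*)\,xx^*\,Diag(w)F$; the fatal step is the phrase ``matching this against the definition $v=F^*Diag(w^*)x$ yields $zz^*=vv^*$.'' It does not. With $z=FDiag(w)x$ one has $zz^*=FDiag(w)\,xx^*\,Diag(w^*)F^*$, and this is not equal to $vv^*=F^*Diag(w^*)\,xx^*\,Diag(w)F$ --- not even their diagonals agree. Since $F^T=F$, one gets $\overline{v}=FDiag(w)\overline{x}$, hence $diag(vv^*)=|v|^2=|FDiag(w)\overline{x}|^2$, while $diag(zz^*)=|FDiag(w)x|^2$: the two sides of the lemma differ by replacing $x$ with $\overline{x}$, which is not removable by a global phase. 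Concretely, take $n=2$, so $F=\frac{1}{\sqrt{2}}\bigl(\begin{smallmatrix}1 & 1\\ 1 & -1\end{smallmatrix}\bigr)=F^*$, and $w=(1,i)^T$, $x=(1,i)^T$: then $FDiag(w)x=(0,\sqrt{2})^T$ gives $|FDiag(w)x|^2=(0,2)^T$, whereas $v=FDiag(w^*)x=(\sqrt{2},0)^T$ gives $diag(vv^*)=(2,0)^T$. Your fallback remark --- that both diagonals are real because the matrices are Hermitian --- cannot rescue the step: Hermiticity makes the two vectors real, not equal. Indeed the entrywise conjugate of $zz^*$ is $F^*Diag(w^*)\,\overline{x}\,\overline{x}^*\,Diag(w)F$, which has $\overline{x}$, not $x$, in the middle; that conjugation mismatch is exactly the obstruction you glossed over.

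You are in good company: the first equality of Lemma \ref{lem:Matrix} is false as stated for general complex $x$ (it holds when $x$ is a global phase times a real vector), and the paper's own proof commits the same conjugation slip, writing the $(j,k)$ entry of $Diag(w^*)xx^*Diag(w)$ as $w_j\bar{w}_kx_j\bar{x}_k$ when it is in fact $\bar{w}_jw_kx_j\bar{x}_k$. The version of the lemma that is true, and that serves the rest of the paper, is obtained by setting $v=FDiag(w)x$: then $|FDiag(w)x|^2=diag(vv^*)=diag\left\{FDiag(w)\,xx^*\,Diag(w^*)F^*\right\}$, Lemma \ref{lem:rewrite} becomes $\scalT{y}{|FDiag(w)x|^2}=x^*Diag(w^*)F^*Diag(y)FDiag(w)x$, and the matrices $C$ and $\hat{C}_r$ must be replaced by their entrywise conjugates (equivalently transposes, since they are Hermitian when $y$ is real). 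Because conjugating a Hermitian matrix preserves eigenvalues and conjugates eigenvectors, the downstream spectral analysis survives this substitution; equivalently, the pipeline as literally written estimates $\overline{x}_0$ rather than $x_0$ up to global phase. So the honest conclusion of your (and the paper's) computation is either the corrected lemma with $v=FDiag(w)x$, or the stated lemma with $x$ replaced by $\overline{x}$ on one side --- but not the identity as claimed.
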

\begin{lemma}
The following equation holds :
$$\scalT{y}{|FDiag(w)x|^2}=  x^* Diag(w) F Diag(y)F^* Diag(w^*) x.$$
\label{lem:rewrite}
\end{lemma}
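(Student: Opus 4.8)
The plan is to express the left-hand side $\scalT{y}{|FDiag(w)x|^2}$ entirely in terms of matrix operations and then collapse it to the claimed quadratic form using the cyclic invariance of the trace. Both ingredients I need are already in hand: Lemma \ref{lem:Matrix} rewrites the entrywise squared-modulus vector as the diagonal of a matrix, and Lemma \ref{lem:manip} converts an inner product against a diagonal into a trace.

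First I would invoke Lemma \ref{lem:Matrix} to write $|FDiag(w)x|^2 = diag(M)$ with $M = F^* Diag(w^*) xx^* Diag(w) F$. Since $y \in \{-1,1\}^n \subset \mathbb{R}^n$, the real-valued form of Lemma \ref{lem:manip} applies (no conjugation of $y$ intervenes), giving
$$\scalT{y}{|FDiag(w)x|^2} = \scalT{y}{diag(M)} = Tr(Diag(y) M) = Tr\left(Diag(y) F^* Diag(w^*) xx^* Diag(w) F\right).$$
The final step is to use the cyclic property $Tr(AB)=Tr(BA)$ to rotate the row vector $x^*$ to the front:
$$Tr\left(Diag(y) F^* Diag(w^*) x \cdot x^* Diag(w) F\right) = Tr\left(x^* Diag(w) F Diag(y) F^* Diag(w^*) x\right).$$
The argument on the right is a $1\times1$ matrix, hence equal to its own trace, which is exactly $x^* Diag(w) F Diag(y) F^* Diag(w^*) x$, as claimed.

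The computation is essentially bookkeeping, so there is no serious obstacle once Lemmas \ref{lem:Matrix} and \ref{lem:manip} are available; the only points requiring care are (i) confirming that the real-vector version of Lemma \ref{lem:manip} is the right one to apply, which holds precisely because $y$ is $\pm1$-valued, and (ii) preserving the order of the conjugate transposes $F^*$ and $Diag(w^*)$ throughout the cyclic rotation. All of the genuine content—namely the passage from the entrywise squared modulus to $diag(F^* Diag(w^*) xx^* Diag(w) F)$—is already absorbed into Lemma \ref{lem:Matrix}, so the remaining work reduces to a single application of trace cyclicity.
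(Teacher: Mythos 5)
Your proof is correct and follows essentially the same route as the paper's: apply Lemma \ref{lem:Matrix} to write the squared-modulus vector as $diag\left(F^* Diag(w^*) xx^* Diag(w) F\right)$, apply the real form of Lemma \ref{lem:manip} (valid since $y$ is $\pm 1$-valued) to pass to a trace, and finish with cyclicity of the trace. The only cosmetic difference is that the paper cycles to $Tr\left(Diag(w) F Diag(y) F^* Diag(w^*) xx^*\right)$ and then uses $Tr(Axx^*)=x^*Ax$, whereas you rotate $x^* Diag(w) F$ to the front and read off the $1\times 1$ trace directly; these are the same computation.
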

\begin{proof}[Proof of Proposition \ref{pro:QForm}]
By Lemma \ref{lem:rewrite}
\begin{eqnarray*}
\mathcal{E}^{x_0}(x)&=&\mathbb{E}(\scalT{y}{|F Diag(w^1)x|^2-|F Diag(w^2)x|^2})\\
&=& x^* C x.
\end{eqnarray*}
where $ C=\mathbb{E}\left(Diag(w^1) F Diag(y)F^* Diag(w^{1,*}) -Diag(w^2) F Diag(y)F^* Diag(w^{2,*})\right)$.
\end{proof}

\begin{proof}[Proof of Lemma \ref{lem:Matrix} ]
\begin{equation}
|FDiag(w)x|^2[i]=\sum_{j} F_{ij}w_j x_{j} \sum_{k} \bar{F}_{ik} \bar{w_k} \bar{x}_{k} = \sum_{jk} F_{ij}\bar{F}_{ik} \bar{w_k} w_j x_{j}\bar{x}_{k}.
\end{equation}
On the other hand:
\begin{equation}
e_i^*F^* Diag(w^*)xx^*Diag(w)Fe_i= [\bar{F}_{i1} \dots \bar{F_{in}}] \mathcal{W} [F_{i1} \dots F_{in}]',
\end{equation}
where $\mathcal{W}_{jk}=w_j\bar{w}_kx_j \bar{x}_k$.
Hence :
\begin{equation}
e_i^*F^* Diag(w^*)xx^*Diag(w)Fe_i=\sum_{jk} F_{ij}\bar{F}_{ik} \bar{w_k} w_j x_{j}\bar{x}_{k}.
\end{equation}
The lemma is proved.
\end{proof}

\begin{proof}[Proof of Lemma \ref{lem:rewrite}]
Using  Lemma \ref{lem:manip} and Lemma \ref{lem:Matrix},we have:
\begin{eqnarray*}
\scalT{y}{|FDiag(w)x|^2}&=&\scalT{y}{diag\left\{  F^* Diag(w^*)xx^*Diag(w)F \right\}}\\
&=& Tr ( Diag(y) F^* Diag(w^*)xx^* Diag(w) F )\\
&=& Tr( Diag(w) F Diag(y)F^* Diag(w^*) xx^* ) \\
&=& x^* Diag(w) F Diag(y)F^* Diag(w^*) x .\\
\end{eqnarray*}
\end{proof}

\begin{proof}[Proof of Proposition \ref{pro:rankone}]
i-Let $v=F^* Diag(w^*)x$ and $u=F^* Diag(w^*)x_0$ , $v_i$ and $u_i \sim \mathcal{C}\mathcal{N}(0,\frac{1}{n}),i=1\dots n$. $u$ and $v$ are  gaussian vectors with dependent coordinates.   The expectation of  dot product $\scalT{u}{v}$ is given in the following:
$$\mathbb{E}\left(\scalT{u}{v}\right)=\mathbb{E}\left(x_0^*Diag(w)FF^*Diag(w^*)x\right)= x_0^*\mathbb{E}(Diag(|w|^2))x= \scalT{x_0}{x}.$$
since $FF^*=I$, and $\mathbb{E}(Diag(|w|^2))=I_n$.\\
Hence $\mathbb{E}(||u||^2)= \mathbb{E}(||v||^2)=1$, since $x$ and $x_0$ are unitary.\\
There exists a complex Gaussian random vector $r$, such that each $r_i \sim \mathcal{C}\mathcal{N}(0,\frac{1}{n})$ and   $r_i$ is independent of $u_i$, $i=1\dots n$ and:
 $$v= \scalT{x_0}{x} u+\sqrt{1-|\scalT{x}{x_0}|^2}r$$
Let $v_1=F^* Diag(w^{1,*})x$, $u_1=F^* Diag(w^{1,*})x_0$, and $v_2=F^* Diag(w^{2,*})x$, $u_2=F^* Diag(w^{2,*})x_0$.
By lemma \ref{lem:Matrix} we have:
\begin{eqnarray*}
\mathcal{E}^{x_0}(x)&=&\mathbb{E}\left(\scalT{y}{|FDiag(w^1)x|^2-|FDiag(w^2)x|}\right)\\
&=& \mathbb{E}(\scalT{y}{diag(v_1v_1^*-v_2v_2^*)})
\end{eqnarray*}
On the other hand:
\begin{eqnarray*}
vv^*&=& ( \scalT{x_0}{x}u+\sqrt{1-|\scalT{x_0}{x}|^2}r)(\overline{\scalT{x_0}{x}}u^*+\sqrt{1-|\scalT{x_0}{x}|^2} r^*)\\
&=&  |\scalT{x_0}{x}|^2uu^*+(1-|\scalT{x}{x_0}|^2) rr^*+2\sqrt{1-|\scalT{x_0}{x}|^2}Re( \scalT{x_0}{x}ur^* )
\end{eqnarray*}
Therefore:
\begin{eqnarray*}
diag(v_1v_1^*-v_2v_2^*)&=&  |\scalT{x_0}{x}|^2 diag(u_1u^*_1-u_2u_2^*)+ (1-|\scalT{x}{x_0}|^2)diag (r_1r_1^*-r_2r_2^*) \\
&+& 2\sqrt{1-|\scalT{x_0}{x}|^2} diag (Re( \scalT{x_0}{x}(u_1r^*_1 -u_2r_2^*) )\\
&=&  |\scalT{x_0}{x}|^2 ( |F Diag(w^1)x_0|^2 - |F Diag(w^2)x_0|^2) + (1-|\scalT{x}{x_0}|^2)diag (r_1r_1^*-r_2r_2^*)\\
&+& 2\sqrt{1-|\scalT{x_0}{x}|^2} diag (Re( \scalT{x_0}{x}(u_1r^*_1 -u_2r_2^*) ).
\end{eqnarray*}
Therefore, taking the expectation we have:
\begin{eqnarray*}
\mathcal{E}^{x_0}(x)= |\scalT{x}{x_0}|^2 \mathbb{E}\left( \scalT{y}{|F Diag(w^1)x_0|^2 - |F Diag(w^2)x_0|^2}\right)
\end{eqnarray*}
since $\mathbb{E}_{u_1,u_2}\scalT{y}{\mathbb{E}_{r_1,r_2}(diag (r_1r_1^*-r_2r_2^*)|u_1,u_2)}=0$,\\ and $\mathbb{E}_{u_1,u_2}(\scalT{y}{\mathbb{E}_{r_1,r_2}(diag (Re((u_1r^*_1 -u_2r_2^*) ))|u_1,u_2)})=0$.\\
Let $$\lambda= \mathbb{E}\left( \scalT{y}{|F Diag(w^1)x_0|^2 - |F Diag(w^2)x_0|^2}\right).$$
Let $\tilde{E^1_i}= |\sum_{j=1}^n F_{ij} w_j x_{0,j}|^2$, since $w_j$ is complex Gaussian , $\sum_{j=1}^n F_{ij} w_j x_{0,j}$ is $\mathcal{C} \mathcal{N}(0,\frac{1}{n})$,
therefore $\tilde{E^1_i}$ is exponential with mean $\frac{1}{n}$ (since $|F_{ij}|^2=1$). Let $E^1_i$ be an exponential with mean 1. $\tilde{E^1_i}=\frac{1}{n}E^{1}_i$.
It follows that:
\begin{eqnarray*}
\lambda&=& \mathbb{E}\left( \scalT{y}{|F Diag(w^1)x_0|^2 - |F Diag(w^2)x_0|^2}\right)\\
&=& \frac{1}{n}\sum_{i=1}^n \mathbb{E}(y_i (E^1_i-E^2_i)\\
&=& \frac{1}{n}n \mathbb{E}(y(E^1-E^2))\\
&=& \mathbb{E}(y(E^1-E^2))
\end{eqnarray*}
 Where $\lambda$ is defined for different models in Lemma \ref{lem:SNR} .
\end{proof}
\begin{remark}
Similar  results hold if we use instead of the DFT $F$ any unitary matrix $U$.
The value of $\lambda$ would depends on $|U_{ij}|^2$.
\end{remark}

\begin{proof}[Proof of Lemma \ref{lem:SNR}]
i. \textit{Noiseless:} \\$\lambda= \mathbb{E}(sign(E_1-E_2)(E_1-E_2))=\mathbb{E}(|E_1-E_2|)=1$,
since $E_1-E_2 \sim Exp(1)$.\\
ii.\textit{Noisy:} \\Let $y=sign\left((E_1+\nu_1)- (E_2+\nu_2)\right)$.
\noindent Let $L= E_1-E_2$, $L$ follows a Laplace distribution with mean 0  and scale parameter $1$: $$L \sim Laplace(0,1).$$
Let $N= \nu_1-\nu_2$, $N$ follows a Laplace distribution, $N \sim Laplace(0,\frac{1}{\gamma})$.
\noindent It follows that:
\begin{eqnarray*}
\lambda&=&\mathbb{E}_{L,N}\left(sign(L+N)L\right)\\
&=&\mathbb{E}_{L}\left(\left(1-2\mathbb{P}_{N}(N\leq -L)\right) L\right)\\
&=& \mathbb{E}_{L}\left((1-2F_{N}(-L))L\right)\\
&=& \mathbb{E}_{L}\left\{\left(1-2\left(\frac{1}{2}+\frac{1}{2}sign(-L)\left(1-\exp(-\gamma |L|)\right)\right)\right)L\right\}\\
&=& \mathbb{E}_{L} (sign(L)(1-\exp(-\gamma |L|))L)\\
&=& \mathbb{E}_{L}  |L|(1-\exp(-\gamma |L|))\\
&=& 1- \int_{0}^{+\infty}z\exp(-\gamma z) \exp(-z)dz\\
&=& 1-\frac{1}{(1+\gamma)^2} >0. 
\end{eqnarray*}

\noindent Let $\sigma=\frac{1}{\gamma^2}$ be the variance of the exponential noise.
We conclude that:
$$\lambda=\frac{1+2\sqrt{\sigma}}{(1+\sqrt{\sigma})^2}.$$

iii. \textit{Distortion:}
\begin{eqnarray*}
y &=&sign(\tanh(\alpha E_1)-\tanh(\alpha E_2))\\
&=&sign(\tanh(\alpha(E_1-E_2)))\left(1-\tanh(\alpha E_1)\tanh(\alpha E_2)\right))\\
&=&sign(\tanh(\alpha(E_1-E_2))).sign\left(1-\tanh(\alpha E_1)\tanh(\alpha E_2)\right))\\
&=& sign(E_1-E_2)sign\left(1-\tanh(\alpha E_1)\tanh(\alpha E_2)\right))
\end{eqnarray*}

$\lambda=\mathbb{E}(y(E_1-E_2))=\mathbb{E}\left(sign\left(1-\tanh(\alpha E_1)\tanh(\alpha E_2)\right)|E_1-E_2|\right).$
\end{proof}
\begin{proof}[Proof of Lemma \ref{lem:ineq}]
For $x\in \mathbb{C}^n, ||x||=1$, let $\mathcal{E}^{x_0}(x)=x^*Cx$, and $\hat{\mathcal{E}}^{x_0}(x)=x^*\hat{C}_rx$.
$$\mathcal{E}^{x_0}(x_0)-\mathcal{E}^{x_0}(x)=\lambda-\lambda |\scalT{x_0}{x}|^2=\frac{\lambda}{2}||xx^*-x_0x_0^*||^2_{F}.$$
Let $\hat{x}_r=\argmax_{x,||x||=1} \hat{\mathcal{E}}^{x_0}(x)$, we have:
$$\mathcal{E}^{x_0}(x_0)-\mathcal{E}^{x_0}(\hat{x}_r)=
\mathcal{E}^{x_0}(x_0)-\hat{\mathcal{E}}^{x_0}(x_0)+
\hat{\mathcal{E}}^{x_0}(x_0)- \hat{\mathcal{E}}^{x_0}(\hat{x}_r)+
\hat{\mathcal{E}}^{x_0}(\hat{x}_r)-\mathcal{E}^{x_0}(\hat{x}_r)
.$$
Noticing that  the term $\hat{\mathcal{E}}^{x_0}(x_0)- \hat{\mathcal{E}}^{x_0}(\hat{x}_r)$ is  non-positive in light of the definition of $\hat x _r$, we have finally: $\mathcal{E}^{x_0}(x_0)-\mathcal{E}^{x_0}(\hat{x}_r)\leq 2 \sup_{x,||x||=1} \hat{\mathcal{E}}^{x_0}(x)-\mathcal{E}^{x_0}(x)= 2\left|\left|\hat{C}_r-C\right|\right|$.
Finally:
\begin{equation}
\frac{\lambda}{2}||\hat{x}_r\hat{x}^*_{r}-x_0x_0^*||^2_{F}=\mathcal{E}^{x_0}(x_0)-\mathcal{E}^{x_0}(\hat{x}_r) \leq 2 \sup_{x,||x||=1} \mathcal{\hat{E}}^{x_0}(x) -\mathcal{E}^{x_0}(x) =2\left|\left|\hat{C}_r-C\right|\right|
\label{eq:compa}
\end{equation}
\end{proof}

\begin{proof}[Proof of Proposition \ref{pro:conc}]
\noindent It follows that: 
$$\mathbb{E}(\hat{C}_r)=\lambda x_0x_0^* $$
Where  $\hat{C}_r=\frac{1}{r}\sum_{i=1}^r A_i$ where $A_i= Diag(w^1_i)FDiag(y_i)F^*Diag(w^{1,*}_i)-Diag(w^2_i)FDiag(y_i)F^*Diag(w^{2,*}_i)$\quad $i=1\dots r$.\\
By Lemma \ref{lem:ineq}, it is now clear that the sample complexity is governed by the concentration of $\hat{C}_r$ around its mean.
Let $E_{\beta}=\{w\in \mathbb{C}^n, |w^1_j|^2\leq  2\beta \log(n) \text{ and }  |w^2_j|^2\leq 2\beta \log(n),j=1\dots n\}.$
Let $(w^1_i,w^2_i), i=1\dots r $, be $2r$ independent iid $\mathcal{C}\mathcal{N}(0,I_n)$.
Define $(\tilde{w}^1_i,\tilde{w}^2_i)=(w^1_i,w^2_i)$ if $(w^1_i,w^2_i)\in E_{\beta}$ and $(\tilde{w}^1_i,\tilde{w^2}_i)=(0,0)$ elsewhere.\\
Define $\tilde{C}_r=\frac{1}{r}\sum_{i=1}^r \tilde{A}_i$, where $$\tilde{A_i}=Diag(\tilde{w}^1_i)FDiag(\tilde{y}_i)F^*Diag(\tilde{w}^{1,*}_i)-Diag(\tilde{w}^2_i)FDiag(\tilde{y}_i)F^*Diag(\tilde{w}^{2,*}_i),$$
and let $\tilde{C}=\mathbb{E}\tilde{C}_r$.
By the triangular inequality we have:
\begin{equation}
\left|\left| \hat{C}_r-C\right|\right|\leq \left|\left| \hat{C}_r-\tilde{C}_r\right|\right|+\left|\left|\tilde{C}_r-\tilde{C}\right|\right|+\left|\left|\tilde{C}-C\right|\right| 
\label{eq:triangle}
\end{equation}
\noindent \textbf{Bounding $\left|\left| \hat{C}_r-\tilde{C}_r\right|\right|$:  }\\
Note that $\left|\left| \hat{C}_r-\tilde{C}_r\right|\right|=0$, if for all $i=1\dots r$, $(w^1_i,w^2_i)\in E_{\beta}$.
Let us get a  bound on the probability of that event.
Note that: $\mathbb{P}(|w_i|^2>2\beta \log(n))\leq 2n^{-\beta}$.
To avoid cumbersome notations when we use index $i$, $w_i$ refers to a modulation in $\mathbb{C}^n$, $i=1\dots r$, and when we use index $j$ $w_j$ refers to the $j-th$ component of $w\in \mathbb{C}^n, j=1\dots n$.

\begin{eqnarray*}
\mathbb{P}\left\{ \exists i \in \{1\dots r\} \quad \text{ such that }  (w^1_i,w^2_i) \notin E_{\beta}\right\}&\leq& r \mathbb{P}\left\{ (w^1,w^2) \notin E_{\beta}\right\}\\
&=&r \mathbb{P}\left\{\exists j \text{ such that } |w^1_j|^2> 2\beta \log(n) \text{ Or }  |w^1_j|^2> 2\beta \log(n)\right\}\\
&\leq&  \frac{4rn}{n^{\beta}}.
\end{eqnarray*}
It follows that:
\begin{equation}
\left|\left| \hat{C}_r-\tilde{C}_r\right|\right|=0 \text{ with probability at least } 1-\frac{4r}{n^{\beta-1}}.
\label{eq:coincide}
\end{equation}

\noindent \textbf{Bounding $\left|\left|\tilde{C}_r-\tilde{C}\right|\right|$:}\\

\noindent Let $$\tilde{X}_i = \tilde{A}_i-\mathbb{E}(\tilde{A}_i),$$
$\mathbb{E}(\tilde{X}_i)=0$. 
Note that $$\left|\left|\tilde{C}_r-\tilde{C}\right|\right|= \frac{1}{r}\left|\left| \sum_{i=1}^r\tilde{X}_i\right|\right|$$
Let us get a bound on $||\tilde{X}_i||$.
Note that $||\mathbb{E}(\tilde{A_i})||\leq \lambda$.
To simplify the notation we will omit in the following the indices.
\begin{eqnarray*}
\left|\left|  \tilde{A}\right|\right| &=& \sup_{x,||x||=1} \scalT{\tilde{y}}{|F Diag(\tilde{w}^1) x |^2-|FDiag(\tilde{w}^2)x|^2} 
\end{eqnarray*}
Recall that $v=F^* Diag(\tilde{w}^*)x$.
Note that: $|FDiag(\tilde{w})x|^2= diag\left\{  F^* Diag(\tilde{w}^*)xx^*Diag(\tilde{w})F \right\}= diag(vv^*)$.\\
By holder inequality we have:
\begin{eqnarray*}
 \scalT{\tilde{y}}{|F Diag(\tilde{w}^1) x |^2-|FDiag(\tilde{w}^2)x|^2} &=& \scalT{\tilde{y}}{diag( v^{1}v^{1,*})-diag( v^{2}v^{2,*})}\\
 &\leq& ||\tilde{y}||_{\infty} \left|\left|diag( v^{1}v^{1,*})-diag( v^{2}v^{2,*})\right|\right|_{\ell_1}. 
 \end{eqnarray*}
 Since $\tilde{y}$ is binary  $||\tilde{y}||_{\infty} =1$.
 By the triangular inequality:$$ \left|\left|diag( v^{1}v^{1,*})-diag( v^{2}v^{2,*})\right|\right|_{\ell_1} \leq \left|\left|diag( v^{1}v^{1,*})\right|\right|_{\ell_1} +\left|\left|diag( v^{2}v^{2,*})\right|\right|_{\ell_1}.$$
 \begin{eqnarray*}
\left|\left|diag( vv^{*})\right|\right|_{\ell_1}=\left|\left||FDiag(w)x|^2\right|\right|_{\ell_1} &=& ||diag\left\{  F^* Diag(\tilde{w}^*)xx^*Diag(\tilde{w})F \right\} ||_{\ell_1}\\
&=&Tr(F^* Diag(\tilde{w}^*)xx^*Diag(\tilde{w})F) \\
&=& Tr(Diag(\tilde{w})FF^*Diag(\tilde{w}^*)xx^*)\\
&=& Tr(Diag(|\tilde{w}|^2)xx^*)\\
&\leq & \left|\left| Diag(|\tilde{w}|^2)\right|\right| ||x||^2.
\end{eqnarray*}
\noindent We are now left with Bounding: 
\begin{eqnarray}
\sup_{x,||x||=1}  \left(\left|\left| Diag(|\tilde{w}^1|^2)\right|\right|+\left|\left| Diag(|\tilde{w}^2|^2)\right|\right| \right) ||x||^2 &=&  \left(\left|\left| Diag(|\tilde{w}^1|^2)\right|\right|+\left|\left| Diag(|\tilde{w}^2|^2)\right|\right| \right)\nonumber\\
&=&\max_{j=1\dots n }|\tilde{w}^1_j|^2 + \max_{j=1\dots n} |\tilde{w}^2_j|^2  
\label{eq:max}
\end{eqnarray}
By definition of $(\tilde{w}^1,\tilde{w^2})$ we conclude that : 
\begin{equation}
||\tilde{A}||\leq 4\beta \log(n).
\end{equation}
It follows that $$||\tilde{X}_i||\leq 4\beta\log(n)+\lambda\leq 5\beta \log(n) :=\Delta.$$
\begin{theorem}[Hoeffding Matrix Inequality \cite{tropp2012user}]
Let $X_{i},i=1\dots r $ be a sequence of independent random $n\times n$ self adjoint matrices.
Assume that each random matrix obeys:
$$\mathbb{E}(X_i)=0\quad \text{ and } ||X_i||\leq \Delta \text{ almost surely.}$$
Then for all $t\geq0$,
$$\mathbb{P}\left\{\frac{1}{r}\left|\left|\sum_{i=1}^rX_i \right|\right|\geq t\right\}\leq 2n\exp\left(-\frac{rt^2}{8\Delta^2}\right).$$ 
In other words:
$$\text{For } r \geq \frac{t^2}{\epsilon^2} \Delta^2\log(n), \quad \frac{1}{r}\left|\left|\sum_{i=1}^rX_i \right|\right|\leq \epsilon \text{ with probability at least } 1-n^{-t^2} . $$
\end{theorem}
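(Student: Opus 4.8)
The plan is to follow the matrix Laplace transform (Chernoff) method of Ahlswede--Winter and Tropp. Write $S=\sum_{i=1}^{r}X_{i}$, a mean-zero self-adjoint random matrix. Since $S$ is self-adjoint, $\|S\|=\max\{\lambda_{\max}(S),\lambda_{\max}(-S)\}$, so a union bound gives $\mathbb{P}\{\|S\|\ge s\}\le \mathbb{P}\{\lambda_{\max}(S)\ge s\}+\mathbb{P}\{\lambda_{\max}(-S)\ge s\}$. As $-X_{i}$ obeys the same hypotheses as $X_{i}$, it suffices to control the upper tail $\mathbb{P}\{\lambda_{\max}(S)\ge s\}$ and double the bound, which accounts for the factor $2$ in the prefactor $2n$. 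First I would exponentiate: for any $\theta>0$, using that $t\mapsto e^{\theta t}$ is increasing and that $e^{\theta S}\succeq0$ so its largest eigenvalue is dominated by its trace, $e^{\theta\lambda_{\max}(S)}=\lambda_{\max}(e^{\theta S})\le \operatorname{Tr}e^{\theta S}$; Markov's inequality then yields $\mathbb{P}\{\lambda_{\max}(S)\ge s\}\le e^{-\theta s}\,\mathbb{E}\operatorname{Tr}e^{\theta S}$. The dimensional factor $n$ enters precisely at the passage from $\lambda_{\max}$ to the trace.

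The crux is to estimate $\mathbb{E}\operatorname{Tr}e^{\theta S}$ for a sum of \emph{non-commuting} independent matrices, where naive splitting $e^{A+B}=e^{A}e^{B}$ is unavailable. The standard remedy is the subadditivity of the matrix cumulant generating function, a consequence of Lieb's concavity theorem (the map $H\mapsto\operatorname{Tr}\exp(L+\log H)$ is concave on positive-definite $H$). Conditioning on $X_{1},\dots,X_{r-1}$ and peeling off one summand at a time via Jensen's inequality gives the master bound
\[
\mathbb{E}\operatorname{Tr}e^{\theta S}\;\le\;\operatorname{Tr}\exp\!\left(\sum_{i=1}^{r}\log\mathbb{E}\,e^{\theta X_{i}}\right),
\]
which reduces the problem to bounding each moment generating function $\mathbb{E}\,e^{\theta X_{i}}$ in the semidefinite order.

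It then remains to establish a matrix Hoeffding lemma: for mean-zero self-adjoint $X$ with $-\Delta I\preceq X\preceq\Delta I$ one has $\log\mathbb{E}\,e^{\theta X}\preceq c\,\theta^{2}\Delta^{2}I$ for an absolute constant $c$. The scalar Hoeffding argument compares $e^{\theta x}$ on $[-\Delta,\Delta]$ to a convex interpolant and uses $\mathbb{E}X=0$ to kill the linear term; lifting it to matrices uses operator convexity of $z\mapsto e^{\theta z}$ together with operator monotonicity of $\log$. Substituting into the master bound gives $\mathbb{E}\operatorname{Tr}e^{\theta S}\le n\,e^{cr\theta^{2}\Delta^{2}}$, hence $\mathbb{P}\{\lambda_{\max}(S)\ge s\}\le n\exp(cr\theta^{2}\Delta^{2}-\theta s)$; optimizing over $\theta$ and setting $s=rt$ produces a sub-Gaussian tail $2n\exp(-c'rt^{2}/\Delta^{2})$, and tracking the constant in the matrix Hoeffding lemma of \cite{tropp2012user} gives $c'=1/8$, as stated. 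The ``in other words'' reformulation is then routine bookkeeping: applying the bound with threshold $\epsilon$ gives $\mathbb{P}\{\tfrac{1}{r}\|S\|\ge\epsilon\}\le 2n\exp(-r\epsilon^{2}/(8\Delta^{2}))$, and $r\ge \tfrac{t^{2}}{\epsilon^{2}}\Delta^{2}\log n$ (up to the absolute constant and the dimensional prefactor) forces this below $n^{-t^{2}}$. I expect the main obstacle to be the non-commutative step: proving the subadditivity of the cumulant generating function, since elementary exponential splitting fails and Golden--Thompson only treats two factors---this is exactly where the deep input of Lieb's theorem is indispensable.
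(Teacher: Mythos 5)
The paper never proves this statement: it is imported wholesale from the cited reference \cite{tropp2012user} and used as a black box in the proof of Proposition \ref{pro:conc}, so there is no internal proof to compare against. Your reconstruction is, in outline, exactly the proof in that reference — matrix Laplace transform, reduction from $\operatorname{Tr}$ to $\lambda_{\max}$ costing the dimensional factor $n$, subadditivity of the matrix cumulant generating function via Lieb's concavity theorem, a Hoeffding-type bound on each matrix MGF, and optimization in $\theta$ — so the approach is the right one and the architecture is sound.

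One technical slip worth fixing: the matrix exponential is \emph{not} operator convex, so your proposed lifting of the scalar Hoeffding lemma ``using operator convexity of $z\mapsto e^{\theta z}$'' would fail as stated. The correct lifting needs only the transfer rule for a single self-adjoint matrix: scalar convexity gives
\begin{equation*}
e^{\theta x}\;\le\;\cosh(\theta\Delta)+\frac{x}{\Delta}\sinh(\theta\Delta)\qquad\text{for } x\in[-\Delta,\Delta],
\end{equation*}
and since the right-hand side is affine in $x$, functional calculus applied to $X$ (all matrices involved commute, being functions of the same $X$) yields $e^{\theta X}\preceq\cosh(\theta\Delta)\,I+\frac{\sinh(\theta\Delta)}{\Delta}X$; taking expectations kills the linear term, $\cosh(u)\le e^{u^{2}/2}$ gives $\mathbb{E}\,e^{\theta X}\preceq e^{\theta^{2}\Delta^{2}/2}I$, and operator monotonicity of $\log$ (which, unlike $\exp$, genuinely is operator monotone) gives the semidefinite bound on $\log\mathbb{E}\,e^{\theta X}$. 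Finally, you are right to hedge the ``in other words'' clause: plugging $r=\frac{t^{2}}{\epsilon^{2}}\Delta^{2}\log n$ into the displayed tail bound gives $2n^{1-t^{2}/8}$, not $n^{-t^{2}}$, so that reformulation only holds after absorbing numerical constants into $r$ — a sloppiness present in the paper's statement itself, and consistent with how the paper later invokes the result with an unspecified constant $c$.
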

\noindent We are now  ready to apply the Hoeffding Matrix inequality:
$$\text{For } r \geq c \frac{t^2}{\epsilon^2} \beta^2\log^3n, \quad \frac{1}{r}\left|\left|\sum_{i=1}^r \tilde{X}_i \right|\right|\leq \epsilon \text{ with probability at least } 1-n^{-t^2} . $$
It follows that:
\begin{equation}
\text{For }\left|\left|\tilde{C}_r-\tilde{C}\right|\right| \leq ct\beta \sqrt{\frac{\log^3n}{r} } \text{ with probability at least } 1-n^{-t^2} . 
\label{eq:Hoeffding}
\end{equation}
\noindent \textbf{Bounding $\left|\left|\tilde{C}-{C}\right|\right|$:}\\
By Jensen inequality followed by Cauchy Sharwz inequality we have:\\
\begin{eqnarray*}
\left| \left  |\tilde{C}-{C}\right|\right|&=&\left| \left| \mathbb{E}\left(\ind_{w^1,w^2 \notin E_{\beta}} A\right)\right|\right|\\
&\leq&\mathbb{E}  \ind_{w^1,w^2 \notin E_{\beta}}\left| \left|  A\right|\right|\\
&\leq&\sqrt{\mathbb{E}( \ind_{w^1,w^2 \notin E_{\beta}})}\sqrt{E(\left| \left|  A\right|\right|^2)}\\
&\leq&  \sqrt{\mathbb{P}(E^c_{\beta})}\sqrt{\mathbb{E}(\max_{j=1\dots n} |w^1_j|^2+ \max_{j=1\dots n} |w^2_j|^2)^2}.
\end{eqnarray*}
The last inequality follows from equation \eqref{eq:max}.
\begin{equation}
||A||^2\leq (\max_{j=1\dots n }|w^1_j|^2)^2 + (\max_{j=1\dots n} |w^2_j|^2)^2+2\max_{j=1\dots n }|w^1_j|^2  \max_{j=1\dots n} |w^2_j|^2.
\end{equation}
Let $Z=\max_{j=1\dots n}E_j, \quad E_j \sim Exp(1)$ iid, therefore:
\begin{equation}
\mathbb{E}\left(||A||^2\right)\leq 2(E(Z^2)+(E(Z))^2)= 2 (Var(Z))+2(E(Z))^2).
\end{equation}
\begin{lemma}[Maximum of Exponential \cite{bousquet}]
Let $Z=\max_{j=1\dots n}E_j, \quad E_j \sim Exp(1)$ iid, therefore:
$Var(Z)\leq 2 \quad \mathbb{E}(Z)=\sum_{i=1}^n \frac{1}{i} \leq \log(n).$
\end{lemma}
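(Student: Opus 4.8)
The plan is to avoid computing the moments of $Z$ directly from its distribution function (whose CDF is $(1-e^{-t})^n$) and instead to exploit the classical representation of the maximum of i.i.d.\ exponentials as a sum of \emph{independent} exponentials, which simultaneously diagonalizes the expectation and the variance. This is Rényi's representation of exponential order statistics, and it rests solely on the memoryless property of the exponential law.

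First I would record the spacings decomposition. Writing $E_{(1)}\le E_{(2)}\le\dots\le E_{(n)}$ for the order statistics of the $E_j$ and setting $E_{(0)}=0$, the minimum of $n$ i.i.d.\ $Exp(1)$ variables is $Exp(n)$, and by the memoryless property, conditionally on $E_{(k-1)}$ the remaining $n-k+1$ variables are again i.i.d.\ $Exp(1)$ shifted by $E_{(k-1)}$. Hence the $k$-th spacing satisfies
$$E_{(k)}-E_{(k-1)}\sim \frac{1}{n-k+1}\,Exp(1), \qquad k=1,\dots,n,$$
and the spacings are mutually independent. Summing the spacings telescopes to $Z=E_{(n)}$, so after reindexing $i=n-k+1$ we obtain the distributional identity
$$Z \stackrel{d}{=} \sum_{i=1}^{n}\frac{1}{i}\,\xi_i,$$
where $\xi_1,\dots,\xi_n$ are i.i.d.\ $Exp(1)$.

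With this representation both bounds follow by linearity and independence. Since $\mathbb{E}(\xi_i)=1$ and $Var(\xi_i)=1$,
$$\mathbb{E}(Z)=\sum_{i=1}^n\frac{1}{i}=H_n, \qquad Var(Z)=\sum_{i=1}^n\frac{1}{i^2}.$$
The mean is the harmonic number, bounded by $H_n\le 1+\log n=O(\log n)$ through comparison of the sum with $\int_1^n dx/x$; the variance is bounded uniformly in $n$ by the convergent series $\sum_{i\ge 1} i^{-2}=\pi^2/6\le 2$, which gives $Var(Z)\le 2$ as claimed.

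I expect the only genuinely delicate point to be the justification of the spacings decomposition, namely verifying that after peeling off $E_{(k-1)}$ the residual minimum is exactly $Exp(n-k+1)$ and that successive spacings are independent; the arithmetic is routine once the representation is in hand. If one prefers to bypass order statistics, the mean can alternatively be obtained directly from $\mathbb{E}(Z)=\int_0^\infty [1-(1-e^{-t})^n]\,dt$ via the substitution $u=e^{-t}$ together with the factorization $1-v^n=(1-v)\sum_{k=0}^{n-1}v^k$, which again yields $H_n$; but the uniform variance bound is cleanest through the independent-sum representation above.
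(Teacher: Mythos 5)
Your proof is correct, and it supplies an argument where the paper gives none: the paper simply cites Boucheron--Thomas for this lemma and never proves it. Your route---R\'enyi's spacings representation $Z \stackrel{d}{=} \sum_{i=1}^{n}\frac{1}{i}\xi_i$ with $\xi_i$ i.i.d.\ $Exp(1)$, justified by the memoryless property---is in fact the standard device underlying the cited reference, and it cleanly yields $\mathbb{E}(Z)=H_n$ and $Var(Z)=\sum_{i=1}^n i^{-2}\le \pi^2/6\le 2$ in one stroke, with independence of the spacings doing all the work. One point you should make explicit: what you actually prove is $H_n\le 1+\log n$, not the inequality $H_n\le \log n$ asserted in the lemma; the latter is literally false for every $n$ (indeed $H_n-\log n$ decreases to the Euler--Mascheroni constant $\gamma\approx 0.577>0$, and $H_1=1>\log 1=0$). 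This is a defect of the statement, not of your proof: in the only place the paper uses the lemma, the bound $\mathbb{E}(Z)\le 1+\log n$ gives $\mathbb{E}\left(\|A\|^2\right)\le 2\left(2+2(1+\log n)^2\right)\le c^2\log^2 n$ for $n$ large, so the downstream constant-absorbing estimate survives unchanged. Your delicate point---verifying that conditionally on $E_{(k-1)}$ the residual minimum of the remaining $n-k+1$ variables is $Exp(n-k+1)$ and that successive spacings are independent---is exactly the right thing to check, and the memoryless property does establish it.
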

\noindent For sufficiently large $n$, there exists a constant $c$ such that:
\begin{equation}
\mathbb{E}\left(||A||^2\right)\leq 2(E(Z^2)+(E(Z))^2)= 2 (2+2(\log^2(n)))=4(1+\log^2(n))\leq c^2\log^2(n).
\end{equation}
Note that $\mathbb{P}(E^c_{\beta})\leq \frac{4n}{n^{\beta}}$.
Therefore:\\
\begin{equation}
\left| \left  |\tilde{C}-{C}\right|\right| \leq \frac{{2}c}{n^{(\beta-1)/2}}\log(n).
\label{eq:Exp}
\end{equation}
\textbf{Putting all together:}\\
Putting together equations \eqref{eq:triangle},\eqref{eq:coincide},\eqref{eq:Hoeffding} and  \eqref{eq:Exp} we have finally with probability at least $1-n^{-t^2}-\frac{4r}{n^{\beta-1}}$:
\begin{equation}
||\hat{C}_r-C|| \leq c\beta t \sqrt{\frac{\log^3n}{r}}+ \frac{{2}c}{n^{(\beta-1)/2}}\log(n).
\end{equation}
Setting $t=\sqrt{2},\beta=4$  we get with probability $1-O(n^{-2})$,
\begin{equation}
||\hat{C}_r-C|| \leq 4\sqrt{2}c \sqrt{\frac{\log^3n}{r}}+ \frac{{2}c}{n^{3/2}}\log(n).
\end{equation}
It follows that there exists a numeric constant $c$ such that: 
\begin{equation}
\text{For } r\geq c \frac{\log^3n}{\epsilon^2}\quad ||\hat{C}_r-C|| \leq\epsilon, \text{ with probability at least  } 1-O(n^{-2}).
\end{equation}
Finally by equation \eqref{eq:compa} we conclude that:
\begin{equation}
\text{For } r\geq c \frac{\log^3n}{\epsilon^2}\quad \frac{1}{2}||\hat{x}_r\hat{x}^*_{r}-x_0x_0^*||^2_{F} \leq\frac{\epsilon}{\lambda}, \text{ with probability at least  } 1-O(n^{-2}).
\end{equation}
In other words for another numeric constant $c$ we have:
\begin{equation}
\text{For } r\geq c \frac{\log^3n}{\epsilon^2\lambda^2}\quad ||\hat{x}_r\hat{x}^*_{r}-x_0x_0^*||^2_{F} \leq \epsilon, \text{ with probability at least  } 1-O(n^{-2}).
\end{equation}

\newcommand{\etalchar}[1]{$^{#1}$}

\end{document}